\newcommand{\sd}[1]{{\color{blue} Shahar: #1}}
\newcommand{\mbc}[1]{[{\color{red} MB: #1}]}
\newcommand{\rk}[1]{{\color{orange} [Ron: #1]}}
\newcommand{\cout}[1]{}
\newtheorem{theorem}{Theorem}[section]
\newtheorem{lemma}[theorem]{Lemma}
\newtheorem{proposition}[theorem]{Proposition}
\newtheorem{claim}[theorem]{Claim}
\def\squarebox#1{\hbox to #1{\hfill\vbox to #1{\vfill}}}
\newcommand{\qed}{\hspace*{\fill}
	\vbox{\hrule\hbox{\vrule\squarebox{.667em}\vrule}\hrule}\smallskip}
\newenvironment{proof}{\noindent{\bf Proof:~~}}{\(\qed\)}
\newcommand{\eps}{\varepsilon}
\newcommand{\comment}[1]{}
\newcommand{\stam}[1]{}
\newcommand{\REV}{{\textsc{Rev}}}
\newcommand{\D}{\mathcal{D}}
\newcommand{\Q}{\mathcal{Q}}
\newcommand{\Fd}{\mathcal{F}_d^p}
\newcommand{\Fp}{\mathcal{F}^p}
\newcommand{\Fk}{\mathcal{F}_k^p}
\newcommand{\Fkp}{\mathcal{F}_{k+1}^p}
\newcommand{\Fkm}{\mathcal{F}_{k-1}^p}
\newcommand{\Qv}{{\Q_v}}
\newcommand{\DQ}{\mathcal{D}_{|\Q}}
\newcommand{\DQv}{\mathcal{D}_{\Qv}}
\newcommand{\Qvk}{\mathcal{K}_{\Qv}}
\newcommand{\R}{\mathbb{R}}
\newcommand{\LQ}{L_\mathcal{Q}}
\newcommand{\LQp}{L_\mathcal{Q'}}
\newcommand{\AQ}{A_\mathcal{Q}}
\newcommand{\AQp}{A_\mathcal{Q'}}
\newcommand{\eQ}{\eps_\mathcal{Q}}
\newcommand{\eQp}{\eps_\mathcal{Q'}}
\newcommand{\Bv}{\mathcal{B}_{v}}
\newcommand{\Bvp}{\mathcal{B}_{v'}}
\newcommand{\Rv}{\mathcal{R}_{v}}
\newcommand{\Rvp}{\mathcal{R}_{v'}}
\newcommand{\Gv}{G_{v}}
\newcommand{\Gvp}{G_{v'}}
\newcommand{\Byes}{\mathcal{B}_{\Qv}^y}
\newcommand{\Bno}{\mathcal{B}_{\Qv}^n}
\newcommand{\E}[2][{}]{\mathbb{E}_{{#1}}\left[ #2\right]}
\newcommand{\Sdp}{S_d^p}
\newcommand{\Spk}{S_k^p}
\newcommand{\Spkp}{S_{k+1}^p}
\newcommand{\Spkm}{S_{k-1}^p}
\newcommand{\Upk}{U_k^p}
\renewcommand{\O}[1]{\mathcal{O}\left( #1\right) }
\newcommand{\Om}[1]{\Omega\left( #1\right) }
\newcommand{\items}{M}
\newcommand{\menu}{{\cal M}}
\newcommand{\opt}[1]{$#1$-optimal}
\newcommand{\set}[1]{\left\{#1\right\}}
\begin{document}

	\title{Simplicity in Auctions Revisited: The Primitive Complexity \footnote{Work done while all authors were at Microsoft Research. The second author was also partially supported by BSF grant 2016192 and ISF grant 2185/19.}
}
    
	\author{
	Moshe Babaioff \thanks{Microsoft Research. Email: moshe@microsoft.com.} \and Shahar Dobzinski\thanks{Weizmann Institute of Science and Microsoft Research. Email: shahar.dobzinski@weizmann.ac.il.} \and Ron Kupfer\thanks{Harvard University. Email: ron.kupfer@mail.huji.ac.il.}
	}
	\maketitle

	\begin{abstract}

		In this paper we revisit the notion of simplicity in mechanisms. We consider a seller of 
		$m$ heterogeneous items, facing a single buyer with valuation $v$. We observe that previous attempts to define complexity measures often fail to classify mechanisms that are intuitively considered simple (e.g., the ``selling separately'' mechanism) as such. We suggest to view a menu as simple if a bundle that maximizes the buyer's profit 
		can be found by conducting a few primitive operations that are considered simple.  
		The \emph{primitive complexity of a menu} is the number of primitive operations needed to (adaptively) find a profit-maximizing entry in the menu.  In this paper, the primitive operation that we study is essentially computing the outcome of the ``selling separately'' mechanism. 
		
Does the primitive complexity  capture the simplicity of other auctions that are intuitively simple? We consider \emph{bundle-size pricing},  a common pricing method in which the price of a bundle depends only on its size.
		Our main technical contribution is determining the primitive complexity of bundle-size pricing menus in various settings. First, we connect the notion of primitive complexity  to the vast literature on query complexity. We then show that for any distribution $\D$ over weighted matroid rank valuations, even distributions with arbitrary correlation among their values, there is always a bundle-size pricing menu with low primitive complexity  that achieves almost the same revenue as the optimal bundle-size pricing menu. As part of this proof we provide a randomized algorithm that for any weighted matroid rank valuation $v$ and integer $k$, finds the most valuable set of size $k$ with only a poly-logarithmic number of demand and value queries. We show that this result is essentially tight in several aspects. For example, if the valuation $v$ is submodular, then finding the most valuable set of size $k$ requires exponentially many queries (this solves an open question of Badanidiyuru et al. [EC'12]). We also show that any deterministic algorithm that finds the most valuable set of size $k$ requires $\Omega(\sqrt m)$ demand and value queries, even for additive valuations.
	\end{abstract}
	\thispagestyle{empty}
\newpage
    \setcounter{page}{1}

	\section{Introduction}\label{sec:intro}

\cout{
\begin{itemize}
    \item We need to discuss the relation of the paper to "The Randomized Communication Complexity of Randomized Auctions" by Aviad Rubinstein and Junyao Zhao

    \item "The paper does not provide a lower bound in the setting (weighted matroid-rank) where the polylog positive result is given. The lower bound results are for restricted mechanisms or for more general valuations." -can we give a polylog lower bound ?
    
    \item should we follow reviewer 3 and expand the discussion of prior complexity measures and their weaknesses? 
    \item we are allowed 18 pages. Do we want to move stuff back from the appendix to the body?
    
\end{itemize}
}

The search for simple mechanisms is a central theme in the Mechanism Design literature (e.g., \cite{Ronen01,hartline2009simple,dutting2011simplicity,cai2017simple,rubinstein2018simple})\cout{HartlineR09}. While complex mechanisms might be theoretically superior in terms of, e.g., extracting more revenue from the participants, they are often inferior in practice. For example, complicated rules might be harder for the designer to implement and for the bidders to understand and to interact with, thus making them less attractive. 

Of course, simplicity is a vague and elusive notion, and, unfortunately, 
there is little hope of finding a formal mathematical definition 
that sharply separates simple mechanisms from complex ones. 
Furthermore, a mechanism that is simple in one setting or for one group of participants might not be considered simple in other situations. To a large extent, simplicity is in the eye of the beholder. But to an even larger extent, the ``I know it when I see it'' test perfectly applies here.

Yet, a mathematical treatment of simplicity in auctions must be based on some formal definition, as imperfect as may be. Broadly speaking, many papers identify simplicity with particular forms of mechanisms (e.g., second price auctions \cite{moldovanu1998goethe,thompson2013revenue}, ascending auctions \cite{cramton1998ascending,gul2000english,mishra2007ascending}, posted prices auctions \cite{chawla2010multi,hartline2009simple})\cout{HartlineR09} and analyze these classes of mechanisms. 

In some settings, e.g., selling a single item, identifying simplicity with a specific
auction format is an extremely useful idea. The canonical example is Myerson's optimal auction characterization that shows that second price auctions with reserve are optimal when values are drawn i.i.d. from a regular distribution; Other papers show that such auctions are approximately optimal in some other settings \cite{hartline2009simple,alaei2014bayesian,alaei2019optimal,chawla2010multi}. Yet, the more complex the setting is, the less likely it is that a rigid list of permissible ``simple'' auction formats will provide optimal or approximately optimal results. 
Note also that a binary classification of auctions as either ''simple'' or not,
does not allow the ranking and quantification of different auctions: which auction is simpler, a ``selling separately'' auction where each item has a different price, or the auction that sells any bundle of $10$ items at price $1$?  And by how much?

This calls for using simplicity \emph{measures} as an additional tool for analyzing auctions -- focusing on quantitative approaches to simplicity. One of the most influential measures, the \emph{menu complexity}, was suggested by Hart and Nisan [\citeyear{hart2019selling}]. Consider a pricing problem in which a monopolist wants to sell a set $\items$ of $m$ heterogeneous items. A {deterministic} \emph{menu} $\menu$ is a set of pairs $(S,p_S)$, in which every such pair specifies the price of the bundle 
$S\subseteq M$. The buyer has a valuation $v:2^\items\rightarrow \mathbb R_{\geq 0}$ that specifies her value for every possible subset of the items. The valuation is drawn from some known distribution $\mathcal D$. Given deterministic menu $\mathcal M$, the buyer is assigned a bundle $O$ that maximizes her profit, that is, bundle $O\in \arg\max_{S\subseteq \items} v(S)-p_{S}$. The revenue of the mechanism is the expected payment of the buyer. The \emph{menu complexity} of the mechanism is the minimal number of pairs (of a bundle and its price) needed to describe the mechanism.


In many cases the notion of menu complexity captures 
the simplicity of auctions very well. Menus with few entries tend to be ``simpler'' than menus with many entries, whatever the precise meaning of simplicity is. Indeed, in recent years we have seen the notion of menu complexity grows in popularity and being extended to more settings \cite{babaioff2021menu,chawla2020menu,dughmi2014sampling,gonczarowski2018bounding,saxena2018menu} and to richer classes of valuations \cite{eden2021simple,rubinstein2018simple}. In general, the literature tends to draw the simplicity/complexity borderline by treating menu complexity $poly(m)$ as a proxy for simplicity. 

However, intuitive simplicity and menu complexity do not always go hand by hand. Consider an additive valuation over items and the mechanism that sells item separately, with item $j$ sold at price $p_j$. The menu complexity of this mechanism is exponential\footnote{The complexity of the ``selling separately'' menu is exponential also for the symmetric menu complexity \cite{kothari2019approximation}, which is a generalization of the menu complexity.}, as for each set $S$ out of the $2^m-1$ non-empty sets, it needs to list it with price $\sum_{j\in S} p_j$. Yet, this menu is intuitively very simple. 
Moreover, a simple variant of this mechanism was shown to have very attractive properties. 
In fact, \citet{babaioff2020simple} show that for additive valuations, when the value of every item $j$ is drawn independently from a known distribution $\mathcal D_j$, then one of the following mechanisms extracts a constant fraction of the optimal revenue: 
sell the bundle of all items at the monopolist price (with respect to the distribution of the bundle of all items), or separately sell each item $j$ at the monopolist price of the distribution $\mathcal D_j$. Most would agree that the Babaioff et al. mechanism is simple, but unfortunately its menu complexity is huge due to the ``selling separately'' component.  

Another example of the limits of the notion of menu complexity can be found in the popular and practically-used ``bundle-size'' pricing (see, e.g., \cite{chu2011bundle,abdallah2021large} and their followups) which prices all bundles of the same size at the same price. 
Although this menu is intuitively simple, its menu complexity is high. 
Indeed, observe that 
when the price of every bundle of size $\frac m 2$ is $1$, 
the menu complexity of this mechanism is exponential, since each of the exponentially many bundles of size $\frac m 2$ requires an entry in the menu.





\paragraph{The Primitive Complexity of Auctions.} 
We have exhibited several examples of mechanisms that pass the ``I know it when I see it'' test for simplicity, yet have high menu complexity. This calls for a more nuanced approach toward measuring simplicity. Before presenting our approach, we would like to stress again that an ``ultimate'' mathematical definition of simplicity is unlikely to exist. In all likeliness, inevitably, as any other simplicity notion, our new notion will fail for some mechanisms that ``should'' be considered simple and will include mechanisms that ``should'' be classified as complex. Yet, we believe that our approach would better capture the simplicity of many mechanisms. For other mechanisms, different approaches, possibly tailored to the specific application, might be useful. 

The basic intuition that leads our work is that mechanisms are often considered simple if they can be implemented by applying only a small number of primitive operations that are considered ``simple''. In our case, the primitive operation is computing the outcome of the ``selling separately'' auction, perhaps the canonical example for a simple auction that is not captured as such by the notion of menu complexity. That is, given prices per item $p_1,\ldots, p_m$, return a bundle $S$ that maximizes the buyer's profit (return $S\in\arg\max_T v(T) - \sum_{j\in T}p_j$), as well as the value $v(S)$ of the bundle $S$. The \emph{primitive complexity of a menu} is the number of times the primitive operation has to be {(adaptively)} applied to find 
a bundle that maximizes the buyer's profit for any given valuation. 
The fewer times the primitive operation has to be invoked, the simpler the menu is. 

The primitive complexity was defined here for deterministic algorithms using worst case approach on valuations, yet the definition naturally extends to randomized algorithms and to valuations sampled from a Bayesian prior.\footnote{A related notion is the \emph{randomized} communication complexity of finding the profit-maximizing bundle in the menu. See \cite{rubinstein2021randomized}.} Similarly, the definition can be extended by restricting the valuations to belong to a specific class (e.g., only additive or submodular valuations). Note that in principle, only the number of queries that the algorithm makes is restricted, not the running time, though all algorithms that we develop in this paper are computationally efficient. The definition of primitive complexity can also be naturally extended to randomized menus, i.e., menus that allow lotteries over bundles.

Our focus is in understanding whether the primitive complexity of the intuitively-simple class of bundle-size pricing menus is indeed low (mostly when the primitive operation is computing the outcome of the ``selling separately'' mechanism). Towards this end, we rely on (and advance) the literature on query complexity and valuation functions. 
In fact, since we will observe that the primitive complexity is essentially equivalent to a  query complexity of finding a buyer's profit-maximizing bundle (as discussed below), our work also suggests that simple mechanisms are those for which a profit-maximizing bundle can be ``easily'' found.




\paragraph{Connection to Query Complexity.}  We now discuss the connection of primitive complexity to query complexity. Recall that as usual in algorithmic game theory, the size of a naive description of the valuation $v$ is exponential in the number of items. Thus it is common to assume that $v$ is given as a black box that can only answer a limited number of types of queries. The two standard queries are  \emph{value queries} (given $S$, what is $v(S)$?) and \emph{demand queries} (given item prices $p_1,\ldots, p_m$, return a bundle $S$ that maximizes the profit of the buyer. That is, find $S\in\arg\max_T v(T) - \sum_{j\in T}p_j$). It is not hard to see that the outcome of a ``selling seperately'' operation can be simulated by a demand query followed by a value query. Also note that a value query to a bundle $S$ can be simulated by considering the outcome of one ``selling seperately'' operation 
that assigns a price $0$ for every item in $S$ and $\infty$ for any other item (as the operation returns the value of the demanded set). 
Thus, the primitive complexity and the query complexity are related up to a constant multiplicative factor.

Value queries are extensively used in various optimization problems  
\cite{nemhauser1978analysis,calinescu2011maximizing,vondrak2008optimal}. Demand queries are standard in the algorithmic game theory literature and appear naturally in various posted prices auctions \cite{assadi2019improved,feldman2014combinatorial}, as the separation oracle needed to solve the natural LP relaxation for combinatorial auctions \cite{nisan2006communication}, and in various (not necessarily incentive compatible) approximation algorithms \cite{dobzinski2006truthful,feige2006approximation,feige2010submodular}.\footnote{Mathematically speaking, if $b$ is the known number of bits used to represent numbers then a value query can be computed with $b$ demand queries, whereas computing a demand query might require $exp(m)$ value queries \cite{blumrosen2010computational}. To some extent, some would argue that in practice it is common to solve a demand query (what would you buy in the grocery store?) where as value queries are harder (what is your value for $12$ eggs, bread, and a bottle of orange juice?)}

With this interpretation of primitive complexity in mind, the primitive complexity of a {deterministic} menu is at most its menu complexity: every menu with menu complexity $c$ can be implemented by making $c$ value queries to query $v(S)$ for each bundle $S$ which has an entry in the menu. The converse is far from being true: 
the primitive complexity of the mechanism that separately sells each item $j$ at price $p_j$ is just $1$, while its 
menu complexity is exponential.

\paragraph{Our Results.} In this paper we analyze the primitive complexity of the extensively studied class of bundle-size pricing menus. Recall that bundle-size pricing menu gives a price of $p_r$ for every number of items $r\in[m]$. We start by considering the family of additive valuations ($v(S)=\sum_{j\in S}v(\{j\})$ for every bundle $S$). For additive valuations, the primitive complexity of every menu is at most $m$: 
querying the value $v(\{j\})$ of every item $j$ gives the entire valuation and thus suffices to compute a profit-maximizing bundle. 
Thus, in the context of additive valuations, simplicity will be captured by sub-linear primitive complexity, ideally achieving complexity that is  poly-logarithmic in $m$, or even a constant. 
We prove that the primitive complexity of (approximately) maximizing the revenue is 
much better than linear in $m$, showing that it is only poly-logarithmic. 

\vspace{0.08in} \noindent \textbf{Theorem I: } Let $\D$ be some distribution over additive 
valuations. Then, for any $\eps>0$ there is a bundle-size pricing menu $\mathcal M$ with primitive complexity $poly(\log m, \frac 1 \eps)$ such that the revenue of $\mathcal M$ is in expectation at least $(1-\eps)$ of the revenue of any other bundle-size pricing menu on $\mathcal D$.

\vspace{0.08in} \noindent Note that the distribution $\D$ can be arbitrary. In particular, we do not assume that the values of the items are drawn from independent distirbutions\footnote{
For additive valuations, when item values are sampled independently, \citet{babaioff2021menu} prove that 
the auction that separately sells each item can be well approximated by a menu with a polynomial size. In contrast, our result does not assume independence, it holds for any menu, and moreover, as we will see later, generalizes to much richer classes of valuations.
}.
To prove the theorem we 
show that for any bundle-size pricing menu $\mathcal M'$, there is a low complexity bundle-size pricing menu $\mathcal M$ with revenue comparable to that of $\mathcal M'$. 
Specifically, we show that although $\mathcal M$ might contain many distinct prices and bundles sizes, 
only $poly(\log m,\frac  1 \eps)$ bundle sizes need to be considered (in expectation over the distribution $\mathcal D$) to find a profit-maximizing bundle.
Thus, the profit-maximization problem was reduced to the problem of finding a set of highest value for a given set size (maximization under a cardinality constraint), using ``selling seperately'' operations or, almost equivalently, value and demand queries: given a valuation $v$ and bundle size $k$, find a bundle $S_k$ that maximizes $v(S)$ subject to $|S_k|=k$ (in this case we say that $S_k$ is a \emph{\opt{k}} set). Let $p_k$ be the price of bundles of size $k$ in $\mathcal M$. A profit maximizing bundle of $\mathcal M$ is in $\arg\max \{v(S_k)-p_k\}$ (or the empty set, if this maximum profit is negative). 
Thus, to complete the proof we prove the following algorithmic result: 

\vspace{0.08in} \noindent \textbf{Theorem II: } Fix some bundle size $k$. There exists a randomized algorithm that given an additive valuation $v$ finds a bundle $S_k\in \arg\max_{S:|S|=k}v(S)$ by making, in expectation,  $poly\log(m)$ value and demand queries (the expectation is over the randomness of the algorithm).

\vspace{0.08in} \noindent Note that our algorithm for finding a \opt{k} set does not assume that the valuations are drawn from some distribution: the guarantee is in the worst case, for every possible additive valuation. Furthermore, our algorithm \emph{always} finds a value-maximizing bundle of size $k$, randomization is only use to accelerate the running time. Moreover, a significant challenge in developing our algorithms is that we want the algorithms to work with every implementation of the demand oracle. That is, the demand query is required to return a profit-maximizing bundle, but if there are several such bundles we want our algorithms to work with any implementation of the tie-breaking rule, even an adversarial one. 

We show that our algorithm for additive valuations is qualitatively optimal in multiple respects: 
\begin{itemize}
\item \textbf{Deterministic algorithms with value queries:} We show that any deterministic algorithm that uses only value queries must make at least $m-1$ queries in order to find the item with the highest value ($k=1$). 

\item \textbf{Randomized algorithms with value queries:} We provide two different proofs that show with no demand queries, $\Omega (\frac m {\log m})$ value queries are needed to find the $k$-optimal bundle, even if the algorithm is randomized. One proof assumes $k=\frac m 2$ and is based on a simple counting argument. The second proof is based on a slightly more involved communication complexity argument is and shows this impossibility even for the simple case of $k=1$ (i.e., finding the most valuable item).

\item \textbf{Randomization is required: } We prove that even if demand queries are allowed but the algorithm must be deterministic, then $\Omega(\sqrt m)$ queries are required.


\end{itemize}
The proofs of the first two impossibilities are easier than the proof of the last result, which is more subtle and involved. 
We then move on to consider richer valuation classes, starting with weighted matroid rank functions:

\vspace{0.08in} \noindent \textbf{Theorem III: } Fix a bundle size $k$. There is a randomized algorithm that finds a bundle $S_k\in \arg\max_{S:|S|=k}v(S)$ for every weighted matroid rank valuation $v$ and makes in expectation $poly\log(m)$ value and demand queries.

\vspace{0.08in} \noindent Using Theorem III we are able to extend Theorem I to hold for weighted matroid rank valuations, not just additive ones. For weighted matroid rank valuations, the greedy algorithm finds a \opt{k} set with $\Theta(m\cdot k)$ value queries. We thus see that with demand queries it is possible to find a \opt{k} set exponentially faster.
Furthermore, even if all weights of the items are either $0$ and $1$, the number of matroid rank functions is doubly exponential \cite{knuth1974asymptotic}. Nevertheless, we find a $k$-optimal set in only $poly\log(m)$ queries. 

This paper is not the first to consider maximization subject to cardinality constraint with value and demand queries. The first was \cite{badanidiyuru2012optimization}, and it considered richer classes: submodular, XOS, and subadditive valuations. However, while the current paper considers exact optimization algorithms, the focus of \cite{badanidiyuru2012optimization} was in approximation algorithms. Approximation algorithms are not useful for finding a profit maximizing bundle in a menu, which must be done exactly, otherwise the incentive constraints are likely to be violated. One of the main results of \cite{badanidiyuru2012optimization} is a $\frac 9 8$-approximation algorithm for maximizing a submodular function subject to a cardinality constraint using value and demand queries (recall that with value queries only, the greedy algorithm provides an approximation ratio of $\frac e {e-1}$ \cite{nemhauser1978analysis}). However, the paper \cite{badanidiyuru2012optimization} proves no impossibility at all for this setting, asking whether an \emph{exact} solution can be found with polynomially many value and demand queries. We solve this open question:

\vspace{0.08in} \noindent \textbf{Theorem IV: } Fix a randomized algorithm $A$ for maximizing a submodular function subject to cardinality constraint that succeeds with constant probability. Then, $A$ makes at least $exp(m)$ value and demand queries. 

\vspace{0.08in} \noindent This result highlights the importance of the valuations class in measuring the primitive complexity: the primitive complexity of the menu that corresponds to maximization subject to a cardinality constraint is poly logarithmic for the class of additive valuations, but exponential for the richer class of submodular valuations.


\paragraph{Connections to other Problems.} The problem of maximization subject to cardinality constraint with demand queries has interesting connections to some well-studied problems. One such problem is unordered partial sorting \cite{chambers1971algorithm}: we are given an array that contains $n$ numbers and the goal is to find a set with $k$ highest numbers (in any order). Note that unordered partial sorting is equivalent to finding a $k$-optimal set in an additive valuation. 

Unordered partial sorting can be solved in $\O{m}$ time by QuickSelect \cite{hoare1961find}, a variant 
of QuickSort. Note that the pivot procedure of QuickSort and QuickSelect -- divide the array into two, one includes all numbers bigger than some $p$, and the other contains all number smaller than $p$ -- is essentially a demand query at price $p$ per item (with some tie-breaking rule that depends on the implementation). Indeed, QuickSelect makes in expectation $\O{\log m}$ pivot calls. Our lower bound for deterministic algorithms implies that randomization is essential to QuickSelect and other pivot-based algorithms in the sense that any deterministic algorithm requires $\Omega(\sqrt{m})$ pivot and value queries. 
Note that algorithms like IntroSelect \cite{musser1997introspective} use more advanced methods to deterministically ensure a good selection of the pivot for QuickSelect. However, these algorithms use many value queries, and our results show that this is unavoidable.

There are also connections to various coin weighting problems. For example, consider the following problem studied in \cite{bshouty2009optimal,djackov1975search,lindstrom1975determining,du2000combinatorial}: we are given $n$ coins. We know that $d$ of them are counterfeit. The weight of each real coin is $w_1$ and the weight of each counterfeit coin is $w_2$. We are also given a spring weight. How many weightings are needed to find all counterfeit coins? Note that every use of the spring weight is equivalent to a value query. If $d$ is big then a simple counting argument -- similar to the one that prove that even randomized algorithms must make many value queries -- shows that many weightings are needed. However, our communication-complexity based proof shows that if there are three types of coins with $w_3>w_2>w_1$ then finding just one coin with weight $w_3$ requires almost linear number of weightings, a result that was not known before, to the best of our knowledge. 

\paragraph{Future Directions.} In this paper we introduced a new measure of complexity for auctions: the primitive complexity. We have examined this notion in the context of a pricing problem, and obtained some algorithms with low primitive complexity as well as some impossibilities. Obviously, studying the primitive complexity of other problems is an exciting future direction.

We propose a number of open questions. We have shown that bundle-size pricing menus have poly-logarithmic primitive complexity if the valuation belongs to the class of weighted matroid rank functions, and that if the valuation belongs to the class of submodular valuations the primitive complexity is exponential. The class of gross substitutes valuations contains all weighted matroid rank functions and is contained in the class of submodular valuations. We know that the greedy algorithm finds a $k$-optimal bundle with polynomially many value queries, but can we find a $k$-optimal bundle with poly-logarithmic number of value and demand queries?

In addition, it will be extremely interesting to understand whether bundle-size pricing can well approximate the revenue that can be obtained by any deterministic mechanism. Of course, for this question to make sense we have to consider some kind of symmetry in the distribution. 
For additive valuations, if the values of the items are sampled i.i.d. then the mechanism of \citet{babaioff2020simple} already implies that bundle-size pricing can provide a constant fraction of the optimal revenue. But what if the joint distribution of item values is symmetric, yet item values are not sampled i.i.d.? Can bundle-size pricing provide a constant fraction of the optimal revenue that can be achieved by a deterministic mechanism then? See also \cite{BNR18} for some related work.

Finally, a fascinating direction is to obtain mechanisms with polylogarithmic primitive complexity that obtain $(1-\epsilon)$ fraction of the optimal revenue. We do not know how to obtain such a mechanism even for independent distributions and additive valuations.







	\section{Model and Preliminaries}\label{sec:model}
\paragraph{Valuations.}
Given a set $\items = [m] = \set{1,2,\ldots, m}$ of $m$ indivisible {\em items}, 
a \emph{valuation function}  $v:2^\items\rightarrow\R_{+}$ 
determines a non-negative \emph{value} $v(S)$ for each bundle $S\subseteq \items$. 
We make the standard assumptions that any valuation function $v$ is normalized ($v(\emptyset)=0$) and monotone (for $S\subseteq T$ it holds that $v(S)\leq v(T)$).
With a slight abuse of notation, for valuation 
$v$ and an item $a\in \items$  we use $v(a)$  to denote $v(\set{a})$. 
We consider several standard classes of valuations (each of the classes is strictly contained in the class that follows it): 
\begin{itemize}
	\item A valuation $v$ is \emph{additive} if for all $S\subseteq \items$ we have that $v(S)=\sum_{a\in S}v(\set{a})$. In this case we may represent the function as a vector $v = (v_1,...,v_m)\in \R^{m}_{+}$,
where $v_i\geq 0$ is the value of the $i$'th item.
	\item 

A valuation function $v$ is called \emph{weighted matroid-rank valuation} if there exists a matroid\footnote{A \emph{matroid} is a pair $(M,\mathcal{I})$, with $\items$ being a finite set of elements and 
$\mathcal{I}\subseteq 2^{\items}$
is a non-empty family of subsets of $\items$ such that:
(1) If $B\subseteq A\subseteq  \items$ and $A\in \mathcal{I}$, then $B\in \mathcal{I}$ and
    (2) If $A,B\in \mathcal{I}$ and $|B|>|A|$, then there exists an item $b\in B\setminus A$ such that $A\cup\set{b}\in \mathcal{I}$. A set $A\in \mathcal{I}$ is called an {\em independent set}. 
 An independent set $A\in \mathcal{I}$ is called a {\em base} of the matroid, if it is not contained in any larger independent set.  
The \emph{rank} of the matroid is the size of any base (all have the same size).} 
over the set of elements $M$ and a weight function\footnote{A weight function $\omega:M\rightarrow \R_{+}$ assigns a weight to each element. 
The weight function is extended to sets as follows: the weight of a set $S\subseteq M$ is defined to be $\omega(S)=\max_{A\subseteq S,~A\in \mathcal{I}}\sum_{a\in A}\omega(a)$. 
An independent set $A\in \mathcal{I}$ is called a {\em maximal weight independent set} if there is no independent set of larger weight.
} $\omega:M\rightarrow \R_{+}$ such that  $v(S)=\omega(S)$ for every $S\subseteq \items$.

	\item A valuation $v$ is called \emph{submodular} if it exhibits the diminishing returns property, i.e., $v(S\cup \{a\})- v(S)\geq v(T\cup \{a\})- v(T)$ 
	for all    $S\subseteq T \subseteq \items$ and  $a\in \items$.
\end{itemize}

\paragraph{Maximization Subject to a Cardinality Constraint.}
Given a valuation $v$ over set $\items$, 
the \emph{cardinality maximization problem} with parameter $k$ is the problem of finding a maximum value set of size $k$. I.e., finding a set $S$ that is \emph{\opt{k}}: $S\in \arg\max_{S:|S|=k} v(S).$

\paragraph{Mechanisms.} 
%
%
We consider a setting with one seller holding a set $\items$ of $m$ items, 
that faces a single buyer with a valuation $v$.  
A {\em deterministic menu $\menu$} is a set of pairs $\set{S,p_S}$ of bundles and prices.\footnote{We assume that valuations are monotone non-decreasing (free disposal). 
Under this assumption, it is wlog to assume that 
for any $S,T\subseteq \items$ such that $S\subseteq T$ we have that $p_S\leq p_T$ 
(otherwise $S$ is never being sold and the menu entry can be removed).}
We assume that any menu includes the option of getting no item and paying $0$. 
Given a menu $\menu$, a set $S$ is a \emph{demanded set} (or a \emph{most profitable set})
of a buyer with valuation $v$ if  $S\in \arg\max_{S'\subseteq \items}v(S')-p_{S'}$.
The family of sets that are demanded are called the \emph{demand} of the buyer.
We assume that a buyer with valuation $v$ that faces $\menu$ selects a set that she demands, but make no assumption about how she picks between different demanded sets.
In a Bayesian setting, the valuation $v$ is drawn from 
a known distribution $\mathcal{F}$, and the revenue of the menu $\menu$  is measured in expectation over $\mathcal{F}$:
\begin{itemize}
	\item $\REV_{\menu}(v)$: the expected revenue of the seller from menu $\menu$ when the buyer's valuation is $v$, i.e., 
	if the buyer picks a demanded set $S$ with probability $g_S$ then the revenue is $\sum_{S\subseteq \items} p_S \cdot g_S$. 
	\item $\REV(\menu,\mathcal{F})$: the expected revenue where the expectation is over the buyer's valuation $v\sim\mathcal{F}$, that is $\REV(\menu,\mathcal{F})=\E[v\sim \mathcal{F}]{\REV_{\menu}(v)}$.
\end{itemize}

A specific class of mechanisms that is considered in this paper is 
\emph{bundle-size pricing}: the price of each bundle of size $r$ is $p_r$.
Since the valuations are monotone, we assume that for any two bundle sizes $q_i>q_j$ we have that $p_i>p_j$ (all inequalities are strict).



\paragraph{Queries.} 
In this paper we consider ``selling seperately'' operations: given a price $p_i$ for each item $i$, find some bundle in the demand, a bundle  in  $\arg\max_{S\subseteq \items}{\left(v(S)-\sum_{i\in S}p_i)\right)}$, 
and return this bundle and its value.

The literature on multi-item auctions has extensively studied two types of queries as means of accessing a valuation $v$ (which might have a large representation): value queries and demand queries.
A {\em value query} is given a set $S$ and simply returns $v(S)$, the value of the bundle $S$.
A {\em demand query} asks for a bundle of maximum profit at some given item prices, i.e., 
a most profitable set for the given prices (a demanded set).
Formally, the query is given an item-price vector $p = (p_1,\cdots,p_m)\in \R^{m}_{+}$ and  returns an arbitrary set $D \in \arg\max_{S\subseteq \items}{\left(v(S)-p(S)\right)}$ in the demand, where $p(S)=\sum_{i\in S}p_i$. We assume that the demand query also returns the value of the set $S$ (this can always be done at the cost of an additional value query).
We make no assumption about the way ties are broken between sets in the demand. Ties might be broken adversarially and this issue creates significant challenges which we need to address. 
When all the coordinates of $p$ have the same value $t\in\R$, we refer to the corresponding query as a {\em uniform demand query} for price $t$.

Obviously, a ``selling seperately'' operation can simulate a demand query. It can also simulate a value query for a bundle $S$: set the price of each item in $S$ to 0, and the price of every other item to $\infty$. It is also straightforward to see that any ``selling seperately'' operation can be simulated by one demand query followed by one value query. Hence, any algorithm that uses only $t$ ``selling seperately'' operations can be simulated with $2t$ value and demand queries, and every algorithm that uses $t$ value and demand queries can be implemented with $t$ ``selling seperately'' operations. Thus, we will freely switch between these two similar points of view.

\paragraph{The Primitive Complexity.}

In this paper we suggest to consider the \emph{primitive complexity} of menus.
Fix a class of valuations $\mathcal V$. The primitive complexity of an algorithm with respect to $\mathcal V$ is the maximal number of ``selling seperately'' operations that it (adaptively) makes on any $v\in \mathcal V$. The primitive complexity of a menu is the minimal primitive complexity of any algorithm that for any valuation $v\in \mathcal V$ computes a most profitable bundle in this menu.\footnote{As stated, this definition considers deterministic menus (as the menu entries are bundles). This definition naturally extends to randomized menus in which an entry might be a lottery over bundles.}
If the algorithm is randomized, the randomized primitive complexity is the \emph{expected} number of queries that the algorithm makes, where expectation is over the internal random coins of the algorithm. 
Similarly, if the valuations are drawn from some distribution, the distributional primitive complexity is the expected number of queries that the algorithm makes, where expectation is over valuations sampled from the prior distribution. 
We note that our randomized algorithms will always return a correct solution, not just with high probability. Randomization is only used to reduce the (expected) number of queries. In contrast, all of our lower bounds hold even
for algorithms that only succeed with constant probability.
	\section{Algorithms for Maximization Subject to a Cardinality Constraint}\label{sec:algorithms}

In this section we first present an algorithm that finds a $k$-optimal set for any additive valuation, and then present an algorithm that does the same for any weighted matroid rank valuation.
Our algorithms are randomized and make only $poly\log(m)$ queries in expectation. Note that this is an exponential improvement in the number of queries over the ``obvious'' algorithms: the trivial algorithm for additive valuations finds a $k$-optimal set with $m$ queries (one for each item). For weighted matroid rank valuations, a $k$-optimal set can be found by running the greedy algorithm ($poly(m)$ value queries).

We start with the case of additive valuations. 
We then solve the more general case of weighted matroid-rank valuations by 
first finding a maximal weight independent set $R$ using $poly\log(m)$ queries, and then applying the algorithm for additive valuations as a sub-procedure.

\begin{theorem}\label{thm:findmaxk}
There exists a randomized algorithm that for any additive valuation $v$ over $m$ items, finds a \opt{k} set using value and demand queries, and in expectation makes $\O{\log^3 m}$ queries.
\end{theorem}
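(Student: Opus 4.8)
The plan is to reduce the cardinality‑maximization problem to a selection problem on the multiset of item values, to solve that selection problem by a randomized \textsc{QuickSelect}-type recursion whose ``partition'' step is a single uniform demand query, and to handle ties at the very end with two more demand queries. Write $w_1\ge\cdots\ge w_m$ for the sorted item values and $\tau:=w_k$; let $a:=|\{i:v_i>\tau\}|$, so $a\le k-1$. By additivity, a set $S$ with $|S|=k$ is $k$-optimal precisely when $S$ consists of all $a$ items of value $>\tau$ together with any $k-a$ of the items of value exactly $\tau$. Hence it suffices to (i) compute $\tau$, (ii) compute the distinct values $\tau^{+}$ and $\tau^{-}$ lying immediately above and below $\tau$ (when they exist), and then (iii) issue one uniform demand query at a price in $(\tau,\tau^{+})$ and one at a price in $(\tau^{-},\tau)$: no item value lies in either open interval, so these return exactly $\{i:v_i>\tau\}$ and $\{i:v_i\ge\tau\}$, from which the answer is read off. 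The corner cases $k\in\{0,m\}$ and $\tau$ being the largest/smallest value are disposed of directly.

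For the selection itself I would maintain an active subset $A\subseteq M$ (initially $M$; queries are restricted to $A$ by pricing the items outside $A$ at $v(M)+1$, a legitimate strict upper bound on every value since $v$ is additive) together with a relative target rank $r$. A round picks a uniformly random $p\in A$, learns $v_p$ with one value query, and issues one uniform demand query on $A$ at price $v_p$, obtaining a demanded set $D$. If $r\le|D|$ then \emph{any} top-$r$ subset of $D$ is already a valid top-$r$ subset of $A$ (items of value $>v_p$ sit in $D$ and dominate the rest of $D$, while items of value exactly $v_p$ are interchangeable), so we may recurse on $D$ with rank $r$; otherwise we commit some items and recurse on a suitable complement with a reduced rank, or stop. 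A uniformly random pivot lands in the ``middle half'' of $A$ (by value-rank) with probability at least $1/2$, which simultaneously bounds the block of items strictly larger than $v_p$ and the block strictly smaller than $v_p$ by $\tfrac34|A|$; so, were the partition exact, $|A|$ would drop by a constant factor with probability $\ge1/2$ per round and the outer recursion would terminate in $O(\log m)$ rounds in expectation.

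The step I expect to be the main obstacle is making this robust to an adversarial (and possibly stateful) tie-breaking rule in the demand oracle. A uniform demand query at price $v_p$ returns $\{i\in A:v_i>v_p\}\cup X$ for an adversarially chosen $X\subseteq\{i\in A:v_i=v_p\}$, so it identifies neither $\{v_i>v_p\}$ nor $\{v_i\ge v_p\}$; in particular, if many items of $A$ share the value $v_p$ the adversary can return essentially the whole ``tie block,'' so that $|D|\approx|A|$ and the round makes no progress, and — since item values may be arbitrary reals — there is no $\varepsilon$-perturbation trick to separate the block off. The fix is to detect this situation and invoke an auxiliary randomized selection that locates the distinct value $\tau^{+}_{A}$ immediately above $v_p$ within $A$; crucially, every price this sub-search uses is strictly above $v_p$, which makes the profit of every $v_p$-valued item strictly negative and therefore excludes the offending block, so the sub-search is itself free of the $v_p$-tie ambiguity, and one demand query at a price between $v_p$ and $\tau^{+}_{A}$ then recovers $\{i\in A:v_i>v_p\}$ exactly; symmetrically for $\tau^{-}_{A}$. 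With the exact three-way split of $A$ in hand, each round is guaranteed to drop $|A|$ by a constant factor (with probability $\ge1/2$). Putting the pieces together: the outer selection runs for $O(\log m)$ rounds in expectation, each round triggers $O(1)$ auxiliary ``next-distinct-value'' searches, and — for the same tie-breaking reasons, applied recursively one level down — each such search costs $O(\log^2 m)$ queries; organizing the expectations through a recurrence of the shape $T(s)\le c\log^2 s+\tfrac12 T(\tfrac34 s)+\tfrac12 T(s)$ gives $T(m)=O(\log^3 m)$. Correctness is unconditional — the recursion only ever descends into a sub-instance for which the identity ``(some) top-$r$ of the sub-instance $=$ (some) top-$r$ of $A$'' provably holds — so randomness affects only the query count, and the bound holds for every implementation of the demand oracle.
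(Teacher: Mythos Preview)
Your high-level plan is the same as the paper's: run a \textsc{QuickSelect}-style recursion, identify adversarial tie-breaking at the pivot value as the obstacle, and in each round spend $O(\log^2 m)$ queries to obtain the exact three-way split $S_H=\{v_i>v_p\}$, $S_M=\{v_i=v_p\}$, $S_L=\{v_i<v_p\}$ of the active set, yielding $O(\log^3 m)$ overall.

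There is, however, a genuine gap in your sub-routine. You propose to recover $S_H$ by first \emph{locating the value} $\tau_A^+=\min\{v_i:v_i>v_p\}$ and then issuing one demand query at a price in $(v_p,\tau_A^+)$. But finding $\tau_A^+$ is itself a rank-selection problem (the minimum over $L=\{i:v_i>v_p\}$), and it suffers from exactly the same tie ambiguity one level up: if many items in $L$ share a common value $q$, a demand query at price $q$ may return any subset of that block, so your ``one level down'' recursion can stall for the same reason the outer one does. Saying the sub-search is ``free of the $v_p$-tie ambiguity'' is true but not enough---it is not free of ties at \emph{other} values in $L$, and your phrase ``applied recursively one level down'' does not explain why the recursion depth is bounded. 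I do not see how to get $\tau_A^+$ in $O(\log^2 m)$ queries without circularity.

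The paper avoids this entirely by never computing $\tau_A^+$. Instead it identifies the \emph{set} $S_H$ directly: repeatedly sample a uniformly random element of $\{i\in S:v_i>v_p\}$ (this costs $O(\log m)$ via a halving routine that tests ``does this half contain an item of value $>v_p$?''), call its value $q$, and issue a demand query at price $p=\tfrac{v_p+q}{2}$. The point of the midpoint price is that every item of value $\ge q$ is \emph{strictly} profitable and hence certainly in the demand, while every demanded item has value $\ge p>v_p$ and hence lies in $S_H$; so each sub-round removes, in expectation, at least half of $S_H$ regardless of how the oracle breaks ties. After $O(\log m)$ sub-rounds all of $S_H$ has been peeled off and what remains are exactly the items of value $v_p$. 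This gives the $O(\log^2 m)$ partition step your recurrence needs, without ever finding the next distinct value. Your framework is correct once you swap in this sub-routine.
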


We further extend the result to weighted matroid-rank valuations.

\begin{theorem}\label{thm:findmaxkMR}
There exists a randomized algorithm that for any weighted matroid-rank valuation $v$ over $m$ items, finds a \opt{k} set using value and demand queries, and in expectation makes $\O{\log^3 m}$ queries.
\end{theorem}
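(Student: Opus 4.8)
The plan is to reduce the weighted matroid-rank case to the additive case established in Theorem~\ref{thm:findmaxk}. Write $(\items,\mathcal{I})$ for the matroid and $\omega$ for the weights, so that $v(S)=\omega(S)=\max_{A\subseteq S,\,A\in\mathcal{I}}\sum_{a\in A}\omega(a)$. The structural fact I would use is that a \opt{k} set can be read off from any maximal-weight independent set: fix a maximal-weight independent set $R$; since the greedy algorithm is optimal on a matroid and on its $k$-truncation, and since all maximal-weight independent sets of a matroid have the same multiset of weights, the set of the $\min(k,|R|)$ heaviest elements of $R$ is \opt{k} when $k\le|R|$, while for $k>|R|$ any size-$k$ superset of $R$ is \opt{k} (as $\omega(S)\le\omega(\items)=\omega(R)$ for every $S$). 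Moreover, once $R$ is known, value and demand queries to the additive valuation $\omega|_R=(\omega(r))_{r\in R}$ over ground set $R$ can each be simulated by a single query to $v$: a value query on $S\subseteq R$ equals $v(S)$ because $S\in\mathcal{I}$, and a demand query with item prices $(p_r)_{r\in R}$ is recovered from a $v$-demand query that charges $p_r$ on $r\in R$ and $v(\items)+1$ on every other item --- a price above the largest possible marginal value of that item, so such items are never demanded, and over subsets of $R$ the two objectives coincide. Hence, given $R$, Theorem~\ref{thm:findmaxk} produces the $\min(k,|R|)$ heaviest elements of $R$ with $\O{\log^3 m}$ queries, and we pad arbitrarily from $\items\setminus R$ if $k>|R|$.

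It remains to find a maximal-weight independent set $R$ with $\O{\log m}$ queries, which I would do with a QuickSelect-style recursion. Maintain the part $R_{\mathrm{cur}}$ of $R$ recovered so far and a remaining set $G$ (initially $\emptyset$ and $\items$). In each round, pick $p\in G$ uniformly at random, set $\pi=v(\{p\})$, and issue one uniform demand query to $v$ at price $\pi$ over the matroid obtained by contracting $R_{\mathrm{cur}}$ and restricting to $G$ --- realized by putting item prices $0$ on $R_{\mathrm{cur}}$, $\pi$ on $G$, and $v(\items)+1$ on every other item. Since a strictly positive uniform price forces every element of zero marginal value out of the demand, the response is an independent extension $B$ of $R_{\mathrm{cur}}$ that is a max-weight base of the sub-matroid on the current elements of weight exceeding $\pi$; we set $R_{\mathrm{cur}}\leftarrow R_{\mathrm{cur}}\cup B$, $G\leftarrow G\setminus B$, and recurse, halting once $\omega(R_{\mathrm{cur}})=v(\items)$ (read off from the demand-query objective values), at which point $R_{\mathrm{cur}}$ is a maximal-weight independent set. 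A uniformly random pivot leaves in expectation at most half of the positively-weighted elements of $G$, so, exactly as in the analysis of QuickSelect, the recursion halts after $\O{\log m}$ rounds in expectation with $\O{1}$ queries per round (with zero-value pivots and loops handled separately so as not to waste rounds). Adding the two phases yields $\O{\log^3 m}$ queries in total.

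The main obstacle is that demand queries may break ties adversarially: we cannot choose which profit-maximizing set is returned. The argument therefore must show that every possible response is still consistent with an execution of the matroid greedy algorithm at the current thresholds --- the batch $B$ returned in a round may arbitrarily include some weight-$\pi$ elements and may omit some of the zero-marginal elements of $R_{\mathrm{cur}}$ (handled by re-adding $R_{\mathrm{cur}}$), yet it must still be an independent extension of $R_{\mathrm{cur}}$ whose weight equals the corresponding prefix weight of a maximal-weight base --- and the simulated demand queries for $\omega|_R$ in the first phase must likewise be correct against any tie-breaking rule. Verifying this robustness, together with carefully handling degenerate elements (loops, zero-weight items, and zero-value pivots) so that the recursion does not incur extra queries, is the delicate part; the query-count bound itself is then the routine QuickSelect calculation combined with the $\O{\log^3 m}$ guarantee of Theorem~\ref{thm:findmaxk}.
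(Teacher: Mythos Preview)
Your high-level reduction---find a maximum-weight independent set $R$, then invoke Theorem~\ref{thm:findmaxk} on the additive restriction $\omega|_R$, padding with arbitrary items if $k>|R|$---is exactly what the paper does, and your simulation of value and demand queries over $R$ by queries to $v$ is correct.

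The gap is in your procedure for finding $R$. Two issues:

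\textbf{Pricing at the pivot value exactly.} Querying at price $\pi=v(\{p\})$ lets the adversary stall you. If every remaining item has the same positive weight $w$, then at uniform price $w$ every independent set has profit zero, so the demand oracle may return $B=\emptyset$ in every round. The paper uses half the pivot's marginal, $p=v(r\mid R)/2$, so that items at or above the pivot's marginal are strictly profitable and must appear in the demand.

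\textbf{Uniform pivot from $G$ does not give $\O{\log m}$ rounds.} Your progress claim---that a uniform pivot from $G=\items\setminus R_{\mathrm{cur}}$ halves the positively-weighted part of $G$---is false, because $G$ only shrinks by $|B|$, and items spanned by $R_{\mathrm{cur}}$ (zero marginal, but positive weight) remain in $G$ and can be drawn as pivots that yield $B=\emptyset$. Concretely: take a partition matroid with blocks $G_1,\dots,G_\ell$ (rank~1 each, $\ell=\log m$), where $G_\ell$ has $m-\ell+1$ items of weight $2^\ell$ and each $G_i$ ($i<\ell$) is a single item of weight $2^i$. After round~1 (pivot almost surely from $G_\ell$) you have $R_{\mathrm{cur}}$ equal to one item of $G_\ell$, and $G$ contains $m-\ell$ zero-marginal items from $G_\ell$ and only $\ell-1$ positive-marginal items. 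A pivot from $G_\ell$ yields price $\ge 2^{\ell-1}$, so no item of any $G_i$ ($i<\ell$) is demanded and $B=\emptyset$; the probability of a useful pivot is $\Theta(\log m/m)$, so you need $\Omega(m/\log m)$ rounds in expectation.

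The paper avoids this by drawing the pivot uniformly from the items with \emph{strictly positive marginal} to the current $R$, which it does via a separate $\O{\log m}$-query subroutine (its Lemma~\ref{lem:random-larger} applied to $v(\cdot\mid R)$). That gives $\O{\log m}$ rounds of $\O{\log m}$ queries each, hence $\O{\log^2 m}$ queries to find $R$---not $\O{\log m}$ as you claim, though still within the overall $\O{\log^3 m}$ budget.
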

Proofs of these theorems can be found in Appendix \ref{sec-additive} and Appendix \ref{sec-mr}. We now provide some intuition for the case of additive valuations. We start with presenting an algorithm for the special case in which all items have distinct values. That is, for each two items $i\neq j$ we have that $v_i \neq v_j$. 
In this case, we can easily find the set of $k$ highest value items with $\O{\log m}$ queries: select an item uniformly at random and denote the value of this item by $v$. Make a uniform demand query with a price of $v$ per item. Let $D$ be the returned demanded set and denote $m'=|D|$.
 If $k\geq m'$ we know that all the $m'$ items of $D$ are among the $k$ highest values, so we pick them and remove them from the set, update the number of items we still need to pick to $k'=k-m'$, and continue recursively on the remaining items that are not in $D$ to select an additional $k=k'$ items. 
 If $k<m'$ we similarly remove the items not in $D$, and continue recursively, aiming to pick $k$ out of the items in $D$. 
 The expected number of iterations is $\O{\log m}$  since in each iteration, in expectation, half of the items are removed from consideration (either picked or discarded).

The problem is more challenging if the values are not distinct. That is, if there are items of equal value ($v_i=v_j$ for items $i, j\neq i$) and we make a demand query with price $p = v_i=v_j$ per item. 
The challenge is that, unless assuming a specific tie-breaking rule (which we do not), the demanded set might contain all items of value $p$, none of them, or some arbitrary subset of them.
We present an algorithm that works for any implementation of the demand query, and even if the tie breaking is adversarial.

\section{The Primitive Complexity of Bundle-Size Pricing} \label{sec:symmenu}

Recall that a {\em bundle-size pricing menu} is a set of offers $\{(q_i,p_i)\}_i$, each of the form ``pay $p_i$ and choose any set of size $q_i$ of items to receive". 
We call the number of different quantities that are offered the \emph{size} of the bundle-size pricing menu. 
Given a bundle-size pricing menu, a buyer that wants to find a profit maximizing set can do so by finding a \opt{q_i} set for each of quantity $q_i$ specified by the menu, and picking the one of highest profit among the candidates sets. 
As with $m$ items there can be $m$ different quantities specified, this approach will result in polynomial (in $m$) number of \opt{k} set problems that need to be solved, and thus  
require at least polynomial number of demand queries. We aim for sub-linear number of queries.

We first observe, using a variant of a result of \citet{hart2019selling}, that any bundle-size pricing menu can be transformed to another bundle-size pricing, losing only $\epsilon$-fraction of the revenue, but with the size of the new menu only depending on the revenue loss parameter $\epsilon$ and on the largest ratio of prices in the original bundle-size pricing menu, but not on $m$.
Specifically, the new bundle-size pricing menu size will only be polynomial in $1/\epsilon$ and in the logarithm of the maximal ratio of prices in the menu, but independent of the number of items $m$. 
For valuation classes for which the \opt{k} set problem is solvable in poly-logarithmic number of demand queries (as additive and weighted matroid-rank valuations), we can thus derive a bound on the number of demand queries needed to pick a profit-maximizing bundle from this smaller bundle-size pricing menu. 
Yet, the number of demand queries in above result depends on the maximum price-ratio not being too large, and will be polynomial in $m$ when this ratio is exponential. 

Our main result in this section is that we can get rid of this limitation when optimizing the \emph{expected} revenue for a given distribution $\D$ over valuations (rather than ex-post, for any given valuation).  
{We consider the expected revenue of the menu for the given distribution $\D$ and  further shrink the menu size by removing some of the priced bundles sizes. Specifically, we remove each bundle size that contributes at most $\frac{\epsilon}{m}$ fraction of the expected revenue (over $\D$), losing another $\epsilon$ fraction of the total revenue. 
Yet, even after this additional processing step that reduces the size of the menu, the menu size might still be large (not poly-logarithmic). Thus, it might well still be the case that the size of this bundle-size menu is  not small enough to 
get a poly-logarithmic number of queries by always finding a profit-maximizing bundle for each size and taking the best one. Nevertheless, we show that either the same revenue can be obtained by replacing the menu with a menu that only sells the grand bundle, or that for the same menu a
poly-logarithmic number of queries is sufficient \emph{in expectation}.
That is, although the menu might contain many distinct prices and bundles sizes, only $poly(\log m,\frac{1}{\epsilon})$ bundle sizes need to be considered (in expectation over the distribution $\D$) to find a profit-maximizing bundle for a valuation $v$ sampled from $\D$.
} 
The proof is in Appendix \ref{app:imp}.  


\begin{theorem}\label{thm:menu_expected}
	Given a distribution $\D$ over weighted matroid-rank valuations over a set $\items$ of $m$ items, a bundle-size pricing menu $\menu_1$, and $\epsilon>0$, there exists a bundle-size pricing menu $\menu_2$ such that $\REV(\menu_2,\D)\geq(1-\epsilon)\REV(\menu_1,\D)$ and such that a profit-maximizing bundle for $\menu_2$ can be found in $\O{\epsilon^{-3}\log^4 m}$ value and demand queries in expectation,  where the expectation is over the internal random coins of the algorithm and the distribution $\D$. 
\end{theorem}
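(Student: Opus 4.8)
The plan is to prove Theorem~\ref{thm:menu_expected} by a sequence of reductions that progressively shrink the menu $\menu_1$ until we arrive at a menu $\menu_2$ for which the profit-maximization problem decomposes into a small (in expectation) number of cardinality-maximization subproblems, each solvable via Theorem~\ref{thm:findmaxkMR}. First I would invoke the promised variant of the Hart--Nisan \cite{hart2019selling} argument: starting from $\menu_1$, round prices to powers of $(1+\epsilon')$ and discard bundle sizes whose price is a tiny fraction of, or close to, another kept price, losing only an $\epsilon/3$-fraction of revenue. This produces an intermediate menu whose size is polynomial in $1/\epsilon$ and in $\log(p_{\max}/p_{\min})$, the log of the largest price ratio among the surviving entries — but crucially \emph{not} in $m$. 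If the price ratio is already $\mathrm{poly}(m)$ or smaller, we are done: solve a $k$-optimal set problem (Theorem~\ref{thm:findmaxkMR}, $\O{\log^3 m}$ queries each) for each of the $\mathrm{poly}(\log m, 1/\epsilon)$ surviving sizes and take the best, for a total of $\O{\epsilon^{-O(1)}\log^4 m}$ queries. So the real work is handling menus with an exponentially large price ratio.

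The second reduction is the distributional pruning step described in the excerpt: for a fixed distribution $\D$, delete every surviving bundle size whose contribution to $\REV(\cdot,\D)$ is at most $\frac{\epsilon}{3m}$ of the total; since there are at most $m$ sizes, this costs another $\epsilon/3$ fraction of revenue. Call the result $\menu_2$ (possibly after also considering the grand-bundle-only menu as an alternative, whichever is better). The key structural claim to establish is now: \emph{for $v\sim\D$, the expected number of bundle sizes of $\menu_2$ that one actually needs to examine to certify a profit-maximizing bundle is $\mathrm{poly}(\log m, 1/\epsilon)$}, even though $\menu_2$ itself may still have many entries. The idea is that the entries of $\menu_2$ are sorted by increasing size \emph{and} increasing price; a buyer with valuation $v$ only "cares" about an entry $(q_i,p_i)$ if there is a nonnegligible chance (over $\D$) that it is profit-maximizing, and an entry whose price $p_i$ is astronomically larger than all smaller-indexed prices can only be profit-maximizing when $v$ is correspondingly large — but such large-$v$ events, aggregated, can contribute at most the total revenue. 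More precisely, I would group the surviving prices into geometric bands; within a band the price ratio is bounded, so Theorem~\ref{thm:findmaxkMR} and the Hart--Nisan-style bound handle it with few queries; across bands, use the fact that each retained size contributes $\geq \frac{\epsilon}{3m}\REV(\menu_2,\D)$ to argue (via a charging/Markov argument over the randomness of $v\sim\D$) that only $\mathrm{poly}(\log m,1/\epsilon)$ bands are "relevant" in expectation, and that the algorithm can detect irrelevance by first making a single demand query (or value query on the grand bundle) to learn the scale of $v$ and then only probing sizes in the $O(\log(1/\epsilon))$ relevant bands near that scale.

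The algorithm itself, then, is: sample $v\sim\D$ (i.e., receive it as input); make a few preliminary queries to estimate $v(\items)$ and hence localize which price band of $\menu_2$ could contain a profit-maximizing entry; within the relevant band(s), enumerate the $\mathrm{poly}(\log m,1/\epsilon)$ sizes and call the Theorem~\ref{thm:findmaxkMR} subroutine for each; return the best candidate (or $\emptyset$, or the grand bundle). The query count is the number of relevant sizes times $\O{\log^3 m}$ per size, and the $\epsilon^{-3}\log^4 m$ bound comes out of multiplying the $\mathrm{poly}(1/\epsilon,\log m)$ size count by the per-call cost and being careful that the expectation over $\D$ of the number of relevant bands stays bounded.

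The main obstacle I expect is precisely the cross-band charging argument: showing rigorously that, in expectation over $v\sim\D$, only poly-logarithmically many of the (potentially many) retained bundle sizes can ever be profit-maximizing, \emph{and} that the algorithm can identify them cheaply without a priori knowledge of $v$. The subtlety is that the profit $v(S_{q_i}) - p_i$ maximizing index can jump around as $v$ varies, and the tie-breaking in demand queries is adversarial (as stressed throughout the paper), so one cannot simply binary-search over sizes by profit; the argument must instead exploit the monotone price structure together with the $\frac{\epsilon}{3m}$-revenue lower bound on each retained entry to bound the expected "width" of the relevant window. Making that bound clean — and reconciling it with the grand-bundle fallback case where the window argument degenerates — is where the care goes.
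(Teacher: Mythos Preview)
Your outline matches the paper's proof: Hart--Nisan rounding to get geometrically spaced prices, then the $\tfrac{\epsilon}{m}$-contribution pruning over $\D$, then localize via $v(\items)$, with a grand-bundle fallback. All of the pieces are the right ones.

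Where you diverge is in the part you yourself flag as the obstacle. You anticipate a band-by-band charging/Markov argument over $\D$ to bound the expected number of relevant price bands; the paper does something much simpler, and your sketch as written does not quite get there. The missing idea is this: after pruning, the \emph{cheapest} surviving price $p$ already satisfies $p\geq \tfrac{\epsilon}{m}\,\REV(\menu_2'',\D)$, because that single entry contributes at least an $\tfrac{\epsilon}{m}$-fraction of the revenue and its contribution is at most $p$. This one inequality is all you need. Set a single threshold $t=\epsilon^{-1}m^3 p$ and split on $\Pr_{v\sim\D}[v(\items)\geq t]$. If this probability exceeds $m^{-2}$, then selling only the grand bundle at price $t$ yields revenue at least $t\cdot m^{-2}=mp/\epsilon\geq \REV(\menu_2'',\D)$, so take \emph{that} one-line menu as $\menu_2$. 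Otherwise take $\menu_2=\menu_2''$; the algorithm first queries $v(\items)$, and if $v(\items)\geq t$ (probability $\leq m^{-2}$) it runs the $O(m^2)$-query greedy to solve exactly, contributing $O(1)$ queries in expectation. When $v(\items)<t$, no entry priced above $t$ can be bought, so only prices in $[p,t]$ matter; the ratio $t/p=\epsilon^{-1}m^3$ together with the $e^{\Omega(\epsilon^2)}$ spacing from the rounding step leaves $O(\epsilon^{-3}\log m)$ sizes, each handled by Theorem~\ref{thm:findmaxkMR} in $O(\log^3 m)$ queries.

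So there is no per-band accounting over $\D$ and no need to worry about adversarial tie-breaking interfering with a search over sizes: one threshold, one rare-event brute-force, and a bounded window in the common case.
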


	\section{Hardness of Maximization Subject to a Cardinality Constraint}\label{sec:lb}
We have presented a randomized algorithm  that for {weighted matroid-rank} 
valuations  finds a \opt{k} set using poly-logarithmic number of demand and value queries
(Theorem \ref{thm:findmaxkMR}).
In this section we present lower bounds for several related problems, showing that our results can not be strengthen in multiple ways. 
Namely, we show that:
\begin{itemize}
	\item For \textbf{submodular} valuations, any randomized algorithm that for every valuation succeeds with constant probability to find a \opt{k} set, must use, in expectation, an exponential number of value and demand queries (Section \ref{subsec-exponential-sm}). 
	\item Any \textbf{deterministic} algorithm that finds a \opt{1} set (an item with the highest value) requires $\Omega(\sqrt{m})$ value and demand queries, even when the valuation is additive (Section \ref{sec:det_lb}). 
    \item For additive valuations, we show, using two proof techniques, that any algorithm that given $k$ finds a \opt{k} set and succeeds with constant probability requires $\Om{m/\log m}$ \textbf{value} queries, even if randomization is allowed. 
    Furthermore, for deterministic algorithms we show that $m-1$ {value} queries are needed. These results are presented in Section~\ref{sec:value_lb}.
\end{itemize}

\subsection{An Exponential Lower Bound for Submodular Valuations}\label{subsec-exponential-sm}


We next consider submodular valuations and show 
that any randomized algorithm that with value and demand queries finds (with a constant probability) a \opt{k} set makes in expectation exponentially many queries:
\begin{theorem}\label{thm:submodular}
    {Let $A$ be a randomized algorithm that given a submodular valuation and $k$ finds a $k$-optimal set  by making value and demand queries.} For $m$ that is large enough, if $A$ succeeds with probability at least $\frac{1}{2}$ then $A$ makes at least $1.3^m$ queries. 
\end{theorem}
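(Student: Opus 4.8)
The plan is to construct a family of submodular valuations that ``hides'' a special set of size $k$, so that a low-query algorithm cannot find a $k$-optimal set. I would use a masking/planted-clique style construction: fix $k=m/2$ (or some constant fraction of $m$), and let the base valuation be $v_0(S)=\min(|S|, k)$, which is the rank function of the uniform matroid and hence submodular and symmetric. Then, for a ``planted'' set $A$ of size $k$, define a perturbed valuation $v_A$ that agrees with $v_0$ except that it gives a slightly higher value, say $k+\delta$ for small $\delta>0$, to bundles that are closely aligned with $A$ — concretely something like $v_A(S)=\min(|S|,k)+\delta\cdot g(|S\cap A|, |S\setminus A|)$ for a carefully chosen ``bump'' function $g$ supported on sets that look almost exactly like $A$ (e.g. $|S\cap A|$ close to $k$ and $|S\setminus A|$ small). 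The key design constraints are: (i) $v_A$ must be submodular and monotone for every $A$; (ii) the unique $k$-optimal set of $v_A$ must be $A$ itself; and (iii) $v_A$ must be indistinguishable from $v_0$ — and from $v_{A'}$ for most other $A'$ — unless the algorithm queries a set that is already very close to $A$.

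The core of the argument is then a standard Yao-style / adversary argument: pick $A$ uniformly at random among the $\binom{m}{k}$ sets of size $k$, and run the algorithm against $v_A$. For a value query on a set $S$, the answer differs from $v_0(S)$ only if $S$ is in the ``bump region'' of $A$, i.e. $S$ is within small Hamming-type distance of $A$; the probability of this over the random choice of $A$ is exponentially small (this is where I would compute that the number of sets within the relevant distance of a fixed set is a $1.3^{-m}$-ish fraction of all size-$k$ sets, forcing the ``$1.3^m$'' in the statement). For a demand query with price vector $p$, I would argue that the demanded set under $v_A$ coincides with a demanded set under $v_0$ unless, again, the bump region is ``triggered'' — and because $v_0$ is symmetric the set of prices for which the bump could possibly change the answer corresponds to $S$'s near $A$, again an exponentially small event. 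Hence, with probability $1-o(1)$ over $A$, the algorithm's entire transcript on $v_A$ is identical to its transcript on $v_0$; conditioned on this, its output is independent of $A$, so it equals $A$ with probability at most $\binom{m}{k}^{-1}$, contradicting success probability $\tfrac12$ unless the number of queries is at least $1.3^m$. A $\tfrac12$-probability failure of even one query to be informative is boosted in the usual way: if the algorithm makes fewer than $1.3^m$ queries then by a union bound the probability that any of them triggers the bump region is $<\tfrac12$.

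The main obstacle, and the step I would spend the most care on, is item (i)–(ii): designing the bump function $g$ so that $v_A$ is genuinely submodular (diminishing returns must hold across the ``seam'' between the bump region and the flat region $v_0$), while simultaneously keeping $A$ the unique maximizer among size-$k$ sets and keeping the bump region small enough. A clean way to do this is to take $v_A(S) = \min\bigl(|S \cap A| + 2|S\setminus A|,\ k,\ |S|\bigr)$-type constructions, or more robustly a matroid-union / partition-matroid flavored function, and then verify submodularity by checking it is a ``coverage''-like or matroid-rank-like function; one must double-check that the demand queries behave as claimed, since submodularity alone does not immediately pin down demanded sets and ties may be broken adversarially (the paper emphasizes this issue). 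I would also need to make sure the construction is robust to the adversarial tie-breaking: ideally arrange that on $v_0$ there is a demanded set that is a legitimate answer for $v_A$ as well whenever the bump is not triggered, so the adversary ``can't help'' the algorithm. The counting estimates that convert ``distance-$r$ neighborhood of a $k$-set'' into the explicit base $1.3$ are routine binomial-coefficient asymptotics and I would defer them to a lemma.
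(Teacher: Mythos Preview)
Your overall Yao framework and the value-query part are on the right track, but the demand-query handling has a genuine gap that the paper's construction is specifically engineered to close.

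Consider your proposed family: $v_0(S)=\min(|S|,k)$ with a small bump near the planted set $A$, so that $A$ is the unique $k$-optimal set. Now make a \emph{single} uniform-price demand query with price $p\in(0,1)$ per item. Under $v_0$, every set of size $k$ has profit $k(1-p)$ and the oracle may return any of them; but under $v_A$, any set in the bump region (in particular $A$) has strictly higher profit $k(1-p)+\delta$, so the oracle \emph{must} return such a set. One demand query therefore hands you a set within small Hamming distance of $A$, from which $A$ is recovered with $\mathrm{poly}(m)$ further queries. Your suggestion to ``arrange that on $v_0$ there is a demanded set that is a legitimate answer for $v_A$ as well'' cannot work here: once the bump makes $A$ strictly more profitable there is no tie to break adversarially. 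The alternative constructions you float ($\min(|S\cap A|+2|S\setminus A|,k,|S|)$, matroid unions) share the same defect --- any construction whose only asymmetry is a bump localized near $A$ is defeated by a single uniform demand query.

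The paper's construction solves this by adding random noise that is \emph{not} localized near the hidden set $G_v$: a family $\Bv$ of sets of size $k{+}1$ (each present independently with probability $1/m^2$) with boosted value $k$, and the induced family $\Rv$ at size $k{-}1$. The fractional values $k-3/11,\ k-6/11,\ k-7/11,\ k-14/11$ are tuned so that whenever $G_v$ is the unique demanded set at prices $p$, one can prove (Claim~\ref{claim:sizek}) that $G_v$ must be a size-$k$ neighbor of some cheap set already lying in $\Bv$ or $\Rv$. This lets every demand query be simulated by an expected $\mathrm{poly}(m)$ value queries (Lemma~\ref{lem:SM_demand2value}): walk through cheap $(k{+}1)$- and $(k{-}1)$-sets in price order until you hit $\Bv$ and $\Rv$, then enumerate their size-$k$ neighbors. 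The value-query lower bound (Lemma~\ref{lem:SM_value_queries}) then finishes the proof. These noise families are the missing idea in your plan; without them, demand queries are genuinely more powerful than value queries on your instance, and the argument collapses.
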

This solves an open question of \cite{badanidiyuru2012optimization} that provided a $\frac 9 8$-approximation for this problem, but did not even rule out the possibility that a $k$-optimal set can be found with a polynomial number of value and demand queries.

The rest of this subsection is devoted to outlining the proof of the theorem. To prove a bound for randomized algorithms, it is enough to provide a distribution $\D$ over valuations such that the probability that a deterministic algorithm that makes subexponentially many value and demand queries finds a $k$-optimal set in a valuation that is sampled from $\D$ is small, by Yao's principle. 

We will prove the theorem for an even $m$ and $k=\frac m 2$. Let $\D$ be the following distribution over submodular valuations over a set of items $\items$: 
each valuation $v$ is defined by a family $\Bv$  of sets of size $k+1$ and a set $\Gv$ of size $k$ ($\Gv$ will be the $k$-optimal bundle). Each set of size $k+1$ is included in $\Bv$ with probability $\frac 1 {m^{2}}$, independently at random. Out of the sets of size $k$ that are not contained in any of the sets in $\Bv$, we choose one random set and denote it by $\Gv$.\footnote{There is an exponentially small probability that  every set of size $k$ is contained in some set of $\Bv$. In this case $\Gv$ is not defined and all sets of size $k$ have the same value. We thus condition our analysis on having that this event does not happen and that $\Gv$ is defined.} It will also be convenient to define $\Rv$ to be the family of all sets of size $k-1$ that are not contained in any of the sets in $\Bv$. The valuation $v$ is then defined as follows:

\begin{equation*}
\label{eq:sm_definition}
v(S) = 
\begin{cases}
k	 	& |S|>k+1\\
k	 	& |S|=k+1,~S\in\Bv\\
k-3/11 	& |S|=k+1,~S\notin\Bv\\
k-6/11 	& |S|=k,~S=\Gv\\
k-7/11 	& |S|=k,~S\neq \Gv\\
k-1 	& |S|=k-1,~S\in\Rv\\
k-14/11 & |S|=k-1,~S\notin\Rv\\
|S|	 & |S|<k-1
\end{cases}
\end{equation*}
The set $\Gv$ is the \opt{k} set. Roughly speaking, sets from the families $\Bv$ and $\Rv$ guarantee that it is very unlikely that any information about the identity of $\Gv$ will be provided by any demand or value query.
See Figure~\ref{fig:sm} for an illustration of the relations between the sets. 
\begin{figure}
	\centering
	\includegraphics[width=0.7\linewidth]{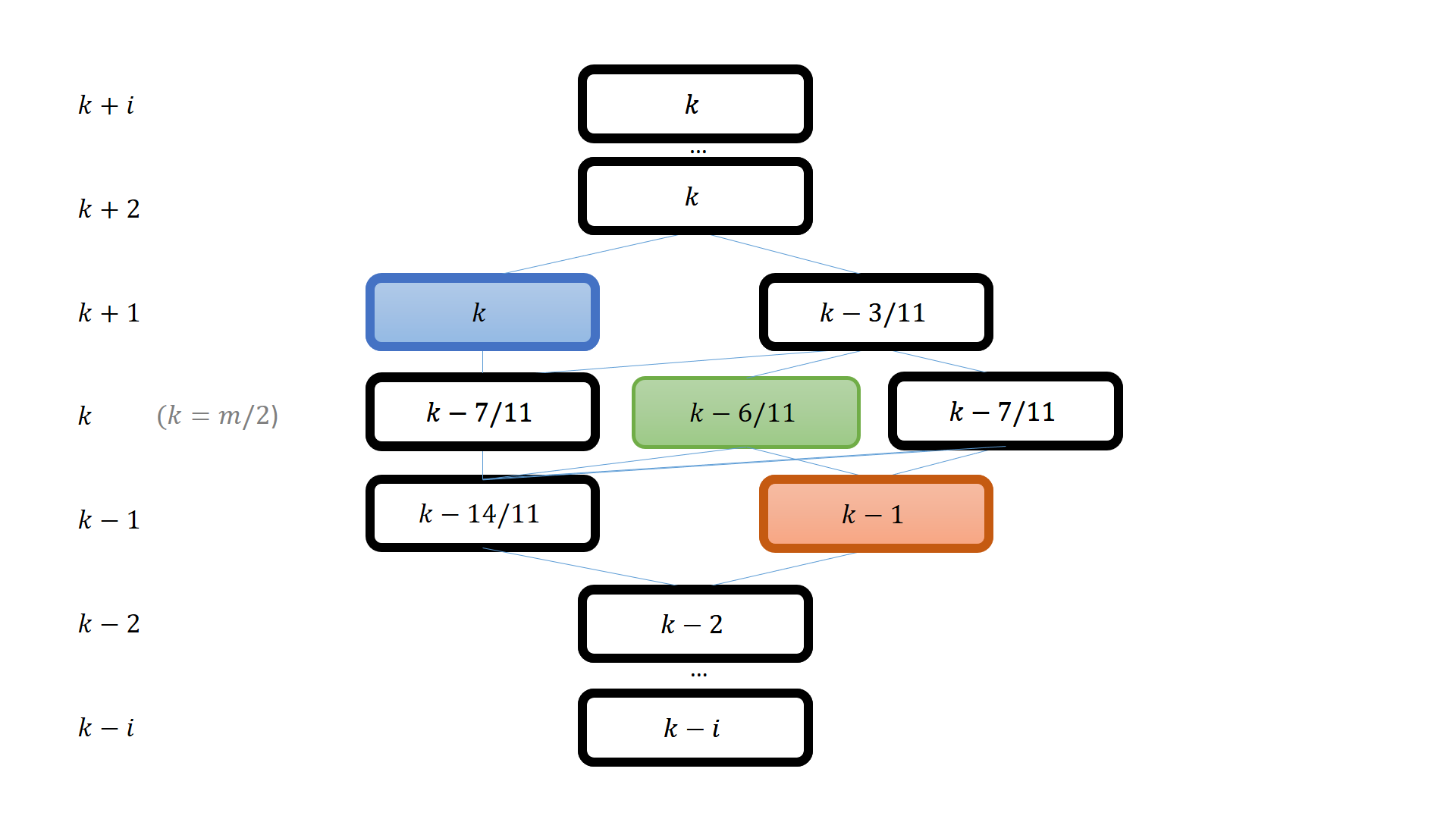}
	\caption{The relation between bundles in valuations in the support of $\D$. Blue squares denote bundles that are in $\Bv$, the green square denote the bundle $\Gv$, and red square denote the bundles in $\Rv$. A line between squares denotes a possible containment relationship between the bundles. Note that all subsets of size $k-1$ of a set in $\Bv$ have value $k-14/11$.}.
	\label{fig:sm}
\end{figure}


We prove our exponential lower bound in two steps.
We  onsider any algorithm that uses only value queries and on valuation that is samples from $\D$ finds a \opt{k} set with non-negligible probability. 
We show that any such algorithm makes, in expectation, an exponential number of value queries (Lemma \ref{lem:SM_value_queries}). 
We complete the proof by showing that, with high probability over $\D$, all demand queries on a valuation $v$ sampled from $\D$ can be simulated by value queries with only a polynomial blowup in the number of queries
(Lemma \ref{lem:SM_demand2value}). 
\begin{lemma}\label{lem:SM_value_queries}
    Fix some deterministic algorithm $A$ that makes only value queries and the set of those queries is in a canonical form\footnote{Later (Appendix \ref{app:submodular}), we formally define what it means for a set of value queries to be in a canonical form. We comment now that any set value queries of size $t$ can be converted to a canonical form by making $poly(t,m)$ additional value queries.}. Suppose that $A$ makes $t<1.9^m$ value queries on valuations that are sampled from $\D$. Then, for a large enough $m$, the probability (over $\D$) that $A$ finds a $k$-optimal set is at most $\frac {t} {{1.9}^{m}}$.
\end{lemma}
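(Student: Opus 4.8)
The goal is to show that a deterministic value-query algorithm making fewer than $1.9^m$ queries can find $\Gv$ only with probability $\le t/1.9^m$ over the draw of $v\sim\D$. The key observation is that the value of a queried set $S$ reveals essentially no information about $\Gv$, \emph{except} possibly through the sets of size $k-1$, $k$, $k+1$: once the algorithm has learned that $S \notin \Gv$ for every size-$k$ set it queried, all remaining size-$k$ sets are (conditionally) equally likely to be $\Gv$. So I would argue that with overwhelming probability, the transcript of $t$ value queries on $v$ is \emph{identical} to the transcript one would get on a related valuation $v'$ (with a relocated $\Gv'$), so the algorithm's output is (nearly) independent of where $\Gv$ actually is.

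\medskip

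\textbf{Step 1: reduce to size-$k$ queries.}
First I would show that a value query to a set $S$ with $|S|\neq k$ carries no information about $\Gv$ \emph{as long as the outcome is ``generic''}. For $|S|\le k-1$ or $|S|>k+1$ the value is determined purely by $|S|$ — zero information. For $|S|=k+1$ the value depends only on whether $S\in\Bv$, which is independent of $\Gv$ (conditioned on $\Bv$, the choice of $\Gv$ is uniform over size-$k$ sets not covered by $\Bv$); similarly for $|S|=k-1$ the value depends only on $S\in\Rv$, again independent of $\Gv$. The ``canonical form'' hypothesis on the query set is exactly what lets me assume the algorithm has already probed the relevant super/sub-sets so that these Boolean facts ($S\in\Bv$, $S'\in\Rv$) are consistent with being read off a fixed $\Bv$; this is the technical role of the footnote about canonicalization with $poly(t,m)$ extra queries. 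So I can condition on $\Bv$ being fixed and reduce to analyzing the size-$k$ queries only.

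\medskip

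\textbf{Step 2: a size-$k$ query is almost never informative.}
Conditioned on $\Bv$, let $N$ be the number of size-$k$ sets not covered by any member of $\Bv$; $\Gv$ is uniform among these $N$ sets. Each member of $\Bv$ is included independently with probability $1/m^2$, and $|\Bv|$ concentrates around $\binom{m}{k+1}/m^2$, which is still exponentially large, but each set in $\Bv$ covers only $k+1$ sets of size $k$, so the fraction of size-$k$ sets that are covered is at most $(|\Bv|\cdot(k+1))/\binom{m}{k}$, which I would bound to show that $N \ge (1-o(1))\binom{m}{k} \ge 1.9^m$ with probability $1-2^{-\Omega(m)}$ (using $\binom{m}{m/2} = \Theta(2^m/\sqrt m) > 1.95^m$ and that the covered fraction is, say, $O(k/m^2)\cdot$something$\,=o(1)$; I'd pick the constants carefully here). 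Now: the algorithm is deterministic, so its sequence of up to $t$ size-$k$ queries, \emph{as long as every answer is $k-7/11$ (i.e.\ ``not $\Gv$'')}, is a fixed sequence $S_1,\dots,S_t$ independent of $v$. The probability that $\Gv \in\{S_1,\dots,S_t\}$ is at most $t/N \le t/1.9^m$. And if $\Gv$ is not among the queried sets, the algorithm's entire transcript is the same as on a valuation with any other feasible $\Gv'$, so its output set $O$ is a fixed size-$k$ set; the probability $O = \Gv$ is again at most $1/N \le 1/1.9^m \le t/1.9^m$. Union-bounding over these two events and over the $2^{-\Omega(m)}$-probability bad event for $\Bv$ gives the claimed bound $t/1.9^m$ for large $m$.

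\medskip

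\textbf{Main obstacle.} The delicate point is \emph{Step 1}: making precise that value queries at sizes $k\pm 1$ and $k+1$ really leak nothing about $\Gv$, and that the ``canonical form'' assumption genuinely suffices to decouple the transcript from the location of $\Gv$. The subtlety is that an adaptive algorithm might try to triangulate $\Gv$ by comparing values of overlapping sets of sizes $k-1$, $k$, and $k+1$ — e.g.\ detecting that a size-$(k-1)$ set is \emph{not} in $\Rv$ because it's a subset of something in $\Bv$ — and I must ensure that after canonicalization all such inferences are already ``baked in'' to a fixed $\Bv$-hypothesis and hence independent of $\Gv$. I expect this to require a careful definition of the canonical form (deferred, as the paper says, to the appendix) and a clean statement that, conditioned on the canonicalized non-size-$k$ answers, the posterior on $\Gv$ is still uniform over the uncovered size-$k$ sets. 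The counting in Step 2 (concentration of $|\Bv|$, lower bound on $N$) is routine Chernoff-type work and not where the difficulty lies.
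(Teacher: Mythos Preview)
Your approach is correct and, in fact, slightly cleaner than the paper's. The key difference is the conditioning: the paper conditions on $\Byes$ and $\Bno$ (the partial information about $\Bv$ revealed by the size-$(k{+}1)$ queries) and then has to argue, via a factor-$2$ ratio bound (Claim~\ref{cla:values_miss_G}), that the posterior on $\Gv$ is \emph{approximately} uniform over the roughly $\binom{m}{k}-mt$ uncovered size-$k$ sets; it then handles the size-$k$ queries by a separate inductive argument (Claim~\ref{cla:k_miss_G}). You instead condition on the \emph{full} realization of $\Bv$, which makes the posterior on $\Gv$ exactly uniform over the $N=N(\Bv)$ uncovered sets and lets you dispose of the size-$k$ queries in one shot via the ``all answers are $k-7/11$'' path argument. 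Your route avoids the ratio calculation and the induction, at the cost of needing a concentration bound on $N$ (which the paper does not need). Both get to the same place.

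Two small clean-ups. First, when you write that the sequence $S_1,\dots,S_t$ is ``independent of $v$'', you mean independent of $\Gv$ given $\Bv$; the size-$(k{+}1)$ answers along the path do depend on $\Bv$. Second, your ``Main obstacle'' paragraph worries about size-$(k{-}1)$ queries leaking information through $\Rv$, but the canonical form (as defined later in the paper) contains \emph{only} queries of sizes $k$ and $k+1$, so this concern evaporates entirely once you invoke the canonicalization hypothesis. Finally, to land on exactly $t/1.9^m$ rather than $(t{+}1)/N + 2^{-\Omega(m)}$, you should exploit the slack $\binom{m}{m/2} > 1.95^m$ (for large $m$) just as the paper does at the end of its proof, so that $(t{+}1)/1.95^m + 2^{-\Omega(m)} < t/1.9^m$.
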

\begin{lemma}\label{lem:SM_demand2value}
    Fix a deterministic algorithm $A$ that uses $t$ demand and value queries for valuations sampled from $\D$. For any $\alpha>1$, with probability $1-\frac 1 \alpha$, $A$ can be implemented using at most $2m^5\cdot t^2\cdot \alpha$ value queries.
\end{lemma}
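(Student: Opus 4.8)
We must show Lemma \ref{lem:SM_demand2value}: a deterministic algorithm using $t$ value and demand queries on $v\sim\D$ can, with probability $1-1/\alpha$, be simulated using only $O(m^5 t^2 \alpha)$ value queries. Let me think about how demand queries behave on these valuations.

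The valuation $v$ is "mostly flat" — it takes value $\approx k$ on all large sets, with small perturbations around sizes $k-1, k, k+1$. A demand query at item-price vector $p$ asks for $\arg\max_S v(S) - p(S)$. The key observation should be: because $v$ only varies by $O(1)$ across all sets in a wide band of sizes (and by at most $m$ overall via the $|S|$-dependence on small sets), the demanded set $D$ is essentially determined by the price vector $p$ alone — specifically by which items are "cheap" — *except* for a small number of items whose prices sit exactly at a threshold. For those borderline items, the $O(1)$-sized value perturbations (the $\pm$ fractions involving $11$, and the $\Bv,\Rv,\Gv$ structure) can tip the decision. The quantities $k, k-3/11, k-6/11, \dots$ are chosen precisely so that these perturbations are small but the "gaps" are distinct rationals, so a demand query can at most reveal information about membership in $\Bv$ or $\Rv$ or identity of $\Gv$ for sets very close in size to $k$.

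**Plan.** First I would fix the algorithm $A$ and run it mentally on $v\sim\D$. At each step $A$ issues either a value query (free — keep it) or a demand query at some $p$. For a demand query, I want to argue that the returned set $D$ can be computed by: (i) sorting items by price, (ii) this essentially fixes the "bulk" of $D$ up to the items near the relevant price thresholds, and (iii) resolving the remaining choice requires knowing $v$ only on sets that differ from a fixed "base set" by a bounded number of items near the threshold sizes $k-1,k,k+1$. The number of candidate sets that could possibly be the demanded set is then polynomial in $m$ — I'd claim $O(m^2)$ or so candidates, since the ambiguity is confined to (a) how many borderline items to include, which shifts $|D|$ across the critical band of width $O(1)$ around $k$, and (b) *which* borderline items, but value only depends on set identity through the $\Bv/\Rv/\Gv$ indicators near sizes $k\pm 1$. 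So I can enumerate all candidate sets, value-query each, and pick the true maximizer — that's $\text{poly}(m)$ value queries to simulate one demand query. Over $t$ demand queries this is $\text{poly}(m)\cdot t$ value queries deterministically; the factors $t^2$ and $\alpha$ and $m^5$ in the statement suggest a more careful (and slightly lossy) accounting, probably because the "canonical form" requirement from Lemma \ref{lem:SM_value_queries} forces extra padding queries, and because one needs a high-probability (not worst-case) bound to handle a bad event where too many items land exactly on a threshold.

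**Where the probability $1-1/\alpha$ enters.** The clean part of the argument fails if the price vector $p$ has many items whose profit-contribution ties exactly at the decision boundary *and* the tie-break genuinely depends on fine structure of $v$. I expect the role of the distribution is: with probability $1-1/\alpha$ over $\D$, none of the (at most $t$) demand queries $A$ issues — which, crucially, are *adaptively chosen but still determined by the transcript, hence by a bounded family once we condition* — hits a pathological $p$ for which simulation would require exponentially many value queries. More precisely, I'd argue that a demand query is "hard to simulate" only if it probes a set whose value depends on $\Bv$ membership at size $k+1$ or $\Rv$ at size $k-1$ in a way that would leak $\Gv$; since $\Gv$ is a uniformly random set among $\binom{m}{k}$-many minus those covered, the chance any fixed polynomial-size family of candidate sets "catches" $\Gv$ is exponentially small, so really the simulation almost never needs $\Gv$ and the candidate enumeration suffices. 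Union-bounding over the $t$ queries and the at most $m^{O(t)}$... no — here I have to be careful, the number of possible transcripts is huge. The right move is: condition on the value-query answers (which by Lemma \ref{lem:SM_value_queries}-style reasoning reveal almost nothing), and show the demand-query simulation's candidate set is *independent enough* of $\Bv,\Rv$ that with probability $1-1/\alpha$ it works for all $t$ queries simultaneously, at cost $2m^5 t^2\alpha$.

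**Main obstacle.** The subtle point — and what I'd spend the most care on — is handling the *adaptivity* combined with *adversarial tie-breaking* in the demand oracle. The demand query may return any maximizer; our simulation must reproduce *whichever* one the true oracle returned, and future queries of $A$ depend on it. So I cannot just compute *a* maximizer — I need the candidate enumeration to include the actual returned set and then I need to match the oracle's tie-break. The resolution is presumably that within a maximizing "level", all sets have equal value and equal price (that's what tie-breaking means), so *any* of them is a valid answer to feed back into $A$ — i.e., we fix our own consistent tie-break rule and argue $A$'s behavior on that is still a valid run, hence the lower bound transfers. I'd structure the proof to make the candidate set small ($\text{poly}(m)$), show membership-in-$\Bv/\Rv/\Gv$ rarely matters (the $1-1/\alpha$ bad event), and then feed a consistently-chosen maximizer back, getting the claimed $O(m^5 t^2 \alpha)$ value-query simulation. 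The quadratic $t^2$ I'd attribute to needing, for each of the $t$ demand queries, to re-canonicalize a growing transcript of size $O(t)$ via $\text{poly}(m)$ extra value queries each — giving $t \cdot t \cdot \text{poly}(m)$.
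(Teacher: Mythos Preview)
Your high-level picture is right---enumerate candidate demanded sets near sizes $k-1,k,k+1$, value-query them, pick a maximizer---but two concrete steps are wrong and would not go through as written.

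\textbf{First, the candidate count is not deterministically $\text{poly}(m)$; it is $\text{poly}(m)$ only in expectation over $\D$.} To find a most-profitable set of size $k+1$ you must locate the cheapest set in $\Bv$ (since sets in $\Bv$ have value $k$ versus $k-3/11$). You do this by scanning size-$(k+1)$ sets in increasing price order until one lands in $\Bv$; since each unqueried set is in $\Bv$ independently with probability $1/m^2$ (Lemma~\ref{lem:B_is_independent}), this is a geometric random variable with mean $m^2$. Similarly for locating a cheapest set in $\Rv$ at size $k-1$. So the simulation cost of one demand query is a random variable with expectation $O(m^3)$ (and $O(m^5)$ after canonicalization), not a worst-case bound. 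The $1-1/\alpha$ and the extra factor of $t$ then come from \emph{Markov's inequality}: each demand query exceeds $2m^5\cdot t\alpha$ value queries with probability at most $1/(t\alpha)$, and a union bound over the $t$ queries gives failure probability $1/\alpha$. It is not, as you suggest, a bad event about price thresholds or about ``accidentally catching $\Gv$.''

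\textbf{Second, your treatment of size $k$ is backwards.} You argue the simulation ``almost never needs $\Gv$'' because $\Gv$ is random. But if the demanded set \emph{is} $\Gv$, the simulation must actually produce it. The missing structural lemma is this: if every demanded set has size $k$ and equals $\Gv$, and $\Gv$ is not the cheapest size-$k$ set, then comparing profits against the cheapest set $R\in\Rv$ (value $k-1$) and the cheapest $B\in\Bv$ (value $k$) forces either some item in $\Gv$ to have price $\geq 5/11$ or some item outside $\Gv$ to have price $<6/11$. In the first case $\Gv\setminus\{i\}$ is cheaper than $R$, hence already appeared in the size-$(k-1)$ scan; in the second $\Gv\cup\{i\}$ is cheaper than $B$, hence appeared in the size-$(k+1)$ scan. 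So $\Gv$ is recovered as a one-item neighbor of a set you have already enumerated. This is where the constants $6/11,5/11$ are actually used, and it is the step your proposal does not supply.
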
 
The proof of Lemma~\ref{lem:SM_value_queries} is presented in Appendix \ref{subsec-submodular-value}. 
Before proving Lemma~\ref{lem:SM_demand2value} in Section \ref{subsec-submodular-demand}, we present some definitions and auxiliary claims. 

\subsubsection{Definitions and Auxiliary Claims}\label{subsec-submodular-aux}
We first show that every valuation $v$ in the support of $\D$ is indeed submodular. We then present some definitions and prove several claims that will be helpful in the proof of the theorem.
All proofs in this section are deferred to Appendix~\ref{subsec-submodular-aux-app}.

\begin{lemma}
	Every valuation $v$ in the support of $\D$ is submodular.
\end{lemma}
A valuation in the support of $\D$ is completely defined by the values of all sets of size $k$ and $k+1$.
We say that a set of value queries $\Q$ is in a {\em canonical form} if all queries in $\Q$ are for sets of size $k$ or $k+1$, and for every query of size $k$ all of its supersets of size $k+1$ are also in $\Q$. 
{Essentially, 
all information that a set of value queries conveys  about a valuation can also be conveyed by some set of queries that is in a canonical form and is not much larger.}
The next proposition shows that we can assume that the query set is in a canonical form at a cost of a polynomial blow-up in the number of queries:

\begin{proposition}
Let $A'$ be an algorithm that makes $t$ value queries on a valuation  in the support of $\D$. Then, there is an algorithm $A$ that simulates $A'$ while making $m^2\cdot t$ value queries on a valuation in the support of $\D$. Moreover, the set of queries that $A$ makes has a canonical form.
\end{proposition}

We next present several useful definitions and notations.  
Fix some deterministic algorithm $A$ that makes only value queries and runs on valuations from $\D$. 
Fix any valuation $v$ from the support of $\D$, and let $\Qv$ denote the list of $t$ bundles that $A$ queried together with their values.
Let $\DQv$ denote the distribution over valuations that is obtained by sampling according to $\D$ a valuation that is consistent with the queries in $\Qv$. 
    Let $\Byes$ be the family of sets that includes every set $S$ such that $\Pr_{v'\sim\DQv}[S\in\Bvp]=1$.
    Similarly, let $\Bno$ be the family of sets that includes every set $S$ such that $\Pr_{v'\sim\DQv}[S\in\Bvp]=0$.
    Let $\Qvk$ be the family of sets of size $k$ that were queried in $\Qv$.
 
We now claim that assuming queries are in a canonical form, the conditional distribution for sets not queried is essentially identical to the prior.
\begin{lemma}\label{lem:B_is_independent}
    Fix any valuation $v$ sampled from $\D$ and assume $\Qv$ is in a canonical form. 
    It holds that:
    \begin{itemize}
        \item For any set $S$ of size $k+1$ it holds that $\Pr_{v'\sim\DQv}[S\in\Bvp]\in \{0,1,\frac 1 {m^{2}}\}$.
        \item The conditional probabilities are independent:  for any family $\mathcal{F}$ of sets of size $k+1$ it holds that
    $\Pr_{v'\sim\DQv}[\forall S\in\mathcal{F}\ ,S\notin\Bvp]= \prod_{S\in\mathcal{F}}\Pr_{v'\sim\DQv}[S\notin\Bvp]$.
    \end{itemize}
\end{lemma}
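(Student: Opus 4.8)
The plan is to argue directly from the definition of the prior $\D$ and the structure of a canonical query set. Recall that under $\D$ each set of size $k+1$ is independently placed in $\Bvp$ with probability $\frac{1}{m^2}$, and then $\Gvp$ is chosen among the size-$k$ sets that sit below no member of $\Bvp$. The key observation is that a canonical query set $\Qv$ reveals the value of every queried set of size $k$ \emph{together with} the values of all its size-$(k+1)$ supersets; and the value of a size-$(k+1)$ set $S$ under $v$ is $k$ if $S\in\Bvp$ and $k-3/11$ otherwise. Hence for each size-$(k+1)$ set $S$, the conditioning event ``$\Qv$ has these values'' either (i) contains a query that directly reveals whether $S\in\Bvp$ — fixing $\Pr_{v'\sim\DQv}[S\in\Bvp]\in\{0,1\}$ — or (ii) says nothing about membership of $S$ in $\Bvp$, in which case the conditional probability is still $\frac{1}{m^2}$. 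This gives the first bullet.

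For the second bullet I would separate the randomness of $\D$ into the independent ``coin flips'' $X_S\in\{0,1\}$ indicating $S\in\Bvp$ for each size-$(k+1)$ set $S$, and the subsequent choice of $\Gvp$. The subtlety is that conditioning on the query answers $\Qv$ is \emph{not} the same as conditioning only on the $X_S$'s, because one of the queried sets of size $k$ might equal $\Gvp$ (revealing value $k-6/11$) or might be a non-$\Gvp$ set (value $k-7/11$) or a set in $\Rv$ (value $k-1$), and these outcomes depend on $\Gvp$. However — and this is the crucial point that canonical form buys us — once we condition on the values of all queried size-$(k+1)$ sets, the values of the queried size-$k$ sets are determined by two facts: whether the set lies below some member of $\Bvp$ (which, by canonicity, is already pinned down by the revealed supersets), and whether the set equals $\Gvp$. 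So the only extra information $\Qv$ carries about $\Gvp$ beyond the $X_S$'s is, for each queried size-$k$ set not below any queried-and-accepted superset, whether it is $\Gvp$; this is a statement about $\Gvp$, not about the $X_S$ of \emph{unqueried} size-$(k+1)$ sets. Since $\Gvp$ is chosen uniformly among the ``uncovered'' size-$k$ sets, and an unqueried size-$(k+1)$ set $T$ being in $\Bvp$ is irrelevant to whether a \emph{queried} size-$k$ set is uncovered (a queried size-$k$ set's covering status is determined by its queried supersets), the conditional law of the indicators $\{X_T : T \text{ unqueried}\}$ remains a product of independent $\mathrm{Bernoulli}(1/m^2)$'s. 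Restricting this product law to any family $\mathcal F$ of (necessarily unqueried, else the probability is $0$ or $1$ and the identity is trivial) size-$(k+1)$ sets yields the claimed factorization.

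Concretely I would carry out the steps as: (1) set up the decomposition $\D = (\text{i.i.d. } X_S) \times (\text{uniform } \Gvp \mid X)$; (2) show that for a canonical $\Qv$, the event $\{v' \text{ consistent with }\Qv\}$ is the intersection of an event in the $\sigma$-algebra of $\{X_S : S \text{ queried}\}$ with an event of the form ``$\Gvp$ avoids/hits certain queried size-$k$ sets''; (3) conclude that conditioning leaves $\{X_T : T\text{ unqueried}\}$ i.i.d.\ $\mathrm{Bernoulli}(1/m^2)$, which simultaneously gives both bullets; handle the degenerate cases (a set in $\mathcal F$ was queried, or $\Gvp$ undefined) by noting the probabilities are then $0$ or $1$ and the statements hold trivially. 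The main obstacle is step (2): making precise that canonical form really does decouple ``which size-$k$ sets are covered'' from the unqueried coin flips, so that no hidden dependence sneaks in through $\Gvp$. I expect this to require a careful but not deep case analysis on the possible values $\{k-6/11, k-7/11, k-1, k-14/11\}$ that a queried size-$k$ set can take and what each reveals.
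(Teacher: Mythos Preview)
Your approach is essentially the paper's: use canonicity to observe that an unqueried size-$(k+1)$ set has no queried size-$k$ subset, then invoke deferred decisions to argue that its $\Bvp$-membership coin is untouched by the conditioning. You in fact articulate the decomposition $(\text{i.i.d.\ }X_S)\times(\text{uniform }G_{v'})$ more explicitly than the paper does.

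One minor slip: in canonical form only sets of size $k$ and $k+1$ are queried, so a queried size-$k$ set can take only the values $k-6/11$ or $k-7/11$; the values $k-1$ and $k-14/11$ belong to size-$(k-1)$ sets and never appear in $\Qv$.

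There is, however, a real gap at your step (2)$\Rightarrow$(3), and it is exactly the obstacle you flagged. You argue that because an unqueried $T$ is not a superset of any queried size-$k$ set $Q$, the indicator $X_T$ does not affect whether $Q$ is covered. That is true, but it is not enough: the event ``$G_{v'}=Q$'' (or ``$G_{v'}\neq Q$'') has conditional probability $1/N$ (resp.\ $1-1/N$) where $N$ is the total number of uncovered size-$k$ sets, and $N$ \emph{does} depend on $X_T$. Hence conditioning on the size-$k$ query answers does perturb the law of the unqueried $X_T$'s, and the posterior probability is not \emph{exactly} $1/m^2$. The paper's proof glosses over this in the same way. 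The perturbation is exponentially small in $m$ (since $N$ concentrates around an exponentially large quantity), so it is immaterial for the downstream application, but the exact equality claimed in the lemma does not literally hold once $\Qv$ contains size-$k$ queries. Your ``careful but not deep case analysis'' would therefore need to be replaced either by an explicit $1/m^2\pm o(1/m^2)$ statement or by an argument that separates out the effect of the size-$k$ queries (as the paper later does in Claims~\ref{cla:values_miss_G}--\ref{cla:k_miss_G}).
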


\subsubsection{Proof of Lemma~\ref{lem:SM_demand2value}: Simulating Demand Queries by Value Queries}\label{subsec-submodular-demand}
In this section we prove Lemma~\ref{lem:SM_demand2value}, showing that for valuations drawn from $\D$, demand queries can be simulated by value queries, {and with high probability polynomial number of queries is sufficient  for the simulation}.

\cout{

\begin{lemma}\label{lem:SM_demand2value}
    
    Fix a deterministic algorithm $A$ that uses $t$ demand and value queries for valuations sampled from $\D$. For any $\alpha>1$, with probability $1-\frac 1 \alpha$, $A$ can be implemented using at most $2m^5\cdot t^2\cdot \alpha$ value queries.
\end{lemma}
}

\begin{proof} [Proof of Lemma~\ref{lem:SM_demand2value}]
   First, we prove by induction on $t$ that a set of value queries in a canonical form that is followed by a demand query, can be simulated by a set of value queries that has a canonical form, and the expected size of that set is at most $t+2\cdot m^5$. 
   The claim trivially holds for $t=0$ and Claim \ref{claim:induction-step} proves the induction step.
   Second, given the claim, we use Markov's inequality, to argue that 
   for any $\alpha>0$, with probability at most $\frac{1}{t\cdot \alpha}$, more than $2m^5\cdot t\cdot \alpha$ value queries are needed for the implementation of a query.  Hence, using the union bound, with probability $1-\frac{t}{t\cdot \alpha}=1-\frac 1 \alpha$ all $t$ demand queries can be implemented using at most $2m^5\cdot t^2\cdot \alpha$ value queries.
    


	\cout{
	\mbc{How about that: instead of always implementing the demand query at price $p$, we replace the demand query by two queries: first, we check if the the cheapest set of size $k$ is $G$ by adding one value query. If so, we have found $G$ (and we can halt). Otherwise, we run the demand query (but need not worry about the demand being $G$).
	\\
	Thus, when given a general list of value and demand queries we do the following: 
	We replace each demand query by a value query for the cheapest set of size $k$, followed by the demand query. 
	If at any point we find a set that is $G$ (by a value query), we halt and return it.  
	}
	
    At the heart of our harness result of finding \opt{k} sets for valuations in the support of $\D$ is the inability of demand queries to directly find the good set $G$, unless it is the cheapest set of size $k$, as  if it is not, sets of other sizes are always more profitable and uninformative regarding the identity of the set $G$.   
	\begin{claim}\label{claim:notG}
	Fix any valuation $v$ sampled from $\D$.
    For any price vector $p$, {if $G$ is not the cheapest bundle of size $k$, then }a demand query with price vector $p$ never returns the set $G$. 
    I.e., for any $p$ there is a set $S\neq G$ such that $U(S,p)>U(G,p)$.
    \end{claim}	
    \begin{proof}
        \mbc{add a proof.}\rk{what about the price vector with price $0$ for $i\in G$ and price $2$ otherwise?}\mbc{ok, but other than that, is the claim correct? the new version is still enough. }
    \end{proof}
    }


    \begin{claim}\label{claim:induction-step}
    Fix a deterministic algorithm $A$ that runs on valuations sampled from $\D$ and up to some point has used a set of $t$ value queries, and the set has a canonical form. 
    Fix any demand query for price vector $p$. 
    Then, it is possible to simulate all these queries (including the demand query) by a set of value queries that has a canonical form, and the expected size of that set is at most $t+2\cdot m^5$. 
    \end{claim}
    \begin{proof}
    For $m\leq 3$, the total number of subsets of $2^m$ is smaller than $2\cdot m^5$ and the claim trivially holds. We next assume that $m$ is even and $m\geq 4$. 

    We first present some intuition for the proof. 
    Consider a demand query with price vector $p$ for valuation $v$.
	For a set $S$, denote by $U(S,p)=v(S)-p(S)$ the profit from buying set $S$ at price $p(S)=\sum_{i\in S} p_i$. A demand query returns  a set $S$ that has maximal profit under price vector $p$.  
	Clearly, if we can find a most profitable bundle of size $d$ for every $d\in[m]$ then we can return a most profitable set (with a set from these $m$ bundles that is most profitable). 
	While finding a most profitable bundle of every size is clearly sufficient, it turns out it is not necessary, and we show that a most profitable set can be found with polynomially many value queries, without always knowing a most profitable bundle of size $k$. 
	We first show that for each $d\neq k$, a most profitable bundle of size $d$ can indeed be found by value queries to a family $\Fd$ of sets of size $d$ that we can specify.
	Second, we show that if every set that is a most profitable set overall is of size $k$, then such a set can also be found by value queries to a family $\Fk$ of sets of size $k$ that we can specify.
	Finally, we show that the set of all queries (the $t$ value queries as well as value queries to new sets that are in these families of sets) is only polynomially larger than $t$. We next present the formal claim and its proof.
        
    For each $d\in [m]$ we define a family $\Fd$ of sets of size $d$, such that: 
    \begin{itemize}
        \item If $d\neq k$ then some set of size $d$ that has the highest profit among all sets of size $d$ in the family $\Fd$.
        \item If every demanded set is of size $d=k$, then 
        a demanded set of size $k$ belongs to the family $\Fk$.
        \item For even $m\geq 4$ it holds that $\E[v\sim D]{\sum_{d\in[m]}|\Fd\setminus\Qv|} \leq 2m^3$.
    \end{itemize}
        
\cout{

    \begin{claim}
    Fix a deterministic algorithm $A$ that runs on valuations sampled from $\D$ and up to some point has used $t$ value queries, only to bundles of size $k$ and $k+1$. Let $\Qv$ be the list of $t=|\Qv|$ value queries done by the algorithm when running on valuation $v$. Assume the algorithm now uses a demand query for price vector $p$. Then there is a family of sets $\Fp$ that can be found with value queries only, such that: 
          \begin{itemize}
              \item There is a profit maximizing set for price vector $p$ in the family $\Fp$.
              \item Let $W=E_{v\sim D}[|\Fp \setminus \Qv|]$ be the number of additional value queries that are needed to find the value of any  set in  $\Fp$  that is not already in $\Qv$.
            Then $W\leq 2\cdot m^3$. \mbc{note that this is not canonical. Maybe add the $m^2$ and claim canonical?}
          \end{itemize}
    Thus, $t$ value queries for bundles of sizes $k$ and $k+1$ that is followed by a demand query, can be simulated \mbc{need to define what that means} by at most $t+2m^5$ value queries for bundles of sizes $k$ and $k+1$ \mbc{we need to put back "canonical form"}.
    \end{claim}
    \begin{proof}
    For $m\leq 3$, the total number of subsets of $[m]$ is $2^m< 2\cdot m^3$ and the claim trivially holds. We next assume that $m>3$. 
        
    For each $d\in [m]$ we define a family $\Fd$ of sets of size $d$, such that $\Fp\setminus\Qv = \cup_{d\in [m]} \Fd\setminus \Qv$ and additionally:
    \begin{itemize}
        \item If $d\neq k$ then some set of size $d$ that has the highest profit among all sets of size $d$ belongs to the family $\Fd$.
        \item If every demanded set is of size $d=k$, then 
        a demanded set of size $k$ belongs to the family $\Fk$.
    \end{itemize}
        
    \end{proof}

	\begin{claim}
	    Fix a deterministic algorithm $A$ that uses $t$ value queries to bundles of size $k$ and $k+1$ followed by a single demand query for price vector $p$, for valuations sampled from $\D$.
	    Let $\Qv$ be the list of $t=|\Qv|$ value queries done by the algorithm when running on valuation $v$.
        For any size $d\in[m]$, 
        we can find a family $\Fd$ of sets of size $d$ using value queries only,  such that:
        \begin{itemize}
            \item If every demanded set is of size $d$, then set of size $d$ that has the highest profit among all sets of size $d$ belongs to  the family $\Fd$.
            \item Let $W=E_{v\sim D}[|\cup_{d\in[m]}\Fd \setminus \Qv|]$ be the number of additional value queries that are needed to find the value of any  set in  $\cup_{d\in[m]} \Fd$  that is not already in $\Qv$.
            Then $W\leq 2\cdot m^3$.
        \end{itemize}
        \end{claim}	
        For $m\leq 3$, the total number of subsets of $[m]$ is $2^m< 2\cdot m^3$ and the claim trivially holds. We next assume that $m>3$. 
        We denote by $\DQ$ the distribution $\D$ conditional on the query list being $\Qv$.
        
\mbc{the proof will continue from here\\:}        
}
Assume algorithm $A$ is running on valuation $v$ sampled from $\D$, and the algorithm was using the set $\Qv$ of value queries that is a canonical form.  For each size $d$ we consider the list of size $d$ from cheapest to most expensive (breaking ties arbitrarily). Let $\Sdp$ denote a cheapest set of size $d$. 
    \begin{itemize}
        \item For $d<k-1$ or $d>k+1$, all bundles of size $d$ have the same value, thus a profit maximizing set of size $d$ is simply some cheapest set of size $d$, so we define  $\Fd=\{\Sdp\}$. 
        \item For $d=k+1$, there are two possible values for a bundle of size $k+1$, 
        depending on whether the bundle is in $\Bv$ or not. 
        Let $S_{\Bv}$ be a cheapest bundle in $\Bv$ that is the first in order of set prices.
        A most profitable set of size $d=k+1$ is then either $S_{\Bv}$ or $\Spkp$. 
        We add to $\Fkp$ the cheapest sets of size $d=k+1$ in increasing order of price, till we find the first set that belongs to ${\Bv}$. Note that $\Spkp$ is the first added set and is always in  $\Fkp$. 
        Since $Q_v$ is in a canonical form, by Lemma \ref{lem:B_is_independent},
        it holds that either $S\in \Q_v$ or that the probability that $S\in {\Bvp}$ conditional on $v'$ being sampled according to $\DQv$ is $\frac {1} {m^{2}}$. 
        Hence the expected number of cheapest bundles of size $k+1$ that are not in $\Qv$ till a set in $\Bv$ is found is at most $m^2$, that is, $\E{|\Fkp\setminus\Qv|}\leq m^2$.
        
        \item For $d=k-1$, there are two possible values for a bundle of size $k-1$, 
        depending on whether the bundle is in ${\Rv}$ or not. 
        Let $S_{\Rv}$ be a cheapest bundle in ${\Rv}$ that is the first in order of set prices.
        A most profitable set of size $d=k-1$ is then either $S_{\Rv}$ or $\Spkm$. 
        We add to $\Fkm$ the cheapest sets of size $d=k-1$ in increasing order of price, till we find the first set that belongs to ${\Rv}$. Note that $\Spkm$ is the first added set and is always in  $\Fkm$. 
        
        Consider some set $S$ of size $k-1$, we need bound the probability that $S$ is in ${\Rvp}$ given that $v'$ is sampled according to $\DQv$. 
        For given $v'$, $S\in \Rvp$ if none of its supersets are in $\Bvp$. 
        Hence, by Lemma \ref{lem:B_is_independent}, since $\Qv$ is in a canonical form, 
        for any set $S$ of size $k-1$ we have that either $\Pr_{v'\sim\DQv}[S\in{\Rvp}]=0$ or $\Pr_{v'\sim\DQv}]S\in\Rvp]\geq (1-\frac 1 {m^{2}})^{(m-k)(m-k-1)/2}>(1-\frac 1 {m^{2}})^{\frac{m^2-1}{2}}>e^{-0.5}>0.5$.
        Hence the expected number of cheapest bundles of size $k-1$ that are not in $\Qv$ till a set in $\Rvp$ is found (for $v'\sim\DQv$) is at most $2$, that is, $\E{|\Fkm\setminus\Qv|}\leq 2$.
        
        \item For $d=k$, there are two possible values for a bundle of size $k$, 
        depending on whether the bundle is $\Gv$ or not. 
        A most profitable set of size $d=k$ is either $\Gv$ or $\Spk$. 
        {In Claim \ref{claim:sizek}} we show that when it is not $\Spk$ then it must be either a set of size $k$ that is a subset of a set in $\Fkp$, or a set of size $k$ that is a superset of a set in $\Fkm$.     
        Thus we define $\Fk$ to include $\Spk$, all sets of size $k$ that are a subset of a set in $\Fkp$, and all  sets of size $k$ that are a superset of a set in $\Fkm$.
        Note that the expected size of $\Fk$ satisfies 
        $
        \E[v\sim D]{|\Fk\setminus\Qv|}\leq \E[v\sim D]{|\Fkm\setminus\Qv|\cdot (m-k)} +
        \E[v\sim D]{|\Fkp\setminus\Qv|\cdot (k+1)} +1  $.
        
    \end{itemize}

    In total, 
    \begin{equation*}
    \begin{split}
    \E[v\sim D]{\sum_{d\in[m]}|\Fd\setminus\Qv|} \leq & (m-3) + \E[v\sim D]{|\Fkm\setminus\Qv|\cdot (m-k+1)} +\E[v\sim D]{|\Fkp\setminus\Qv|\cdot (k+2)} +1\\
    \leq & m + 2(m-k+1) + m^2(k+2) = 
    m^3/2 + 2m^2 +2m +  2 \leq 2m^3
    \end{split}
    \end{equation*}
    as $k=\frac{m}{2}$ and $m\geq 4$.
    
    Thus, each demand query can implemented by $2m^3$ value queries in expectation.
    As we mentioned earlier, moving to a canonical form requires replacing each value query with at most $m^2$ value queries. Therefore, the demand query can be implemented while remaining in a canonical form using $2m^5$ value queries in expectation.
    
    We now complete the proof by showing that for $d=k$, the family $\Fd$ contains a most profitable bundle whenever the most profitable bundle is of size $k$.
    \begin{claim}\label{claim:sizek}
        For any price vector $p$, if for valuation $v$ every demanded set is of size $k$, then
        any most profitable set (of size $k$) is either $\Spk$ (cheapest set of size $k$), a set of size $k$ that is a subset of a set in $\Fkp$, or a set of size $k$ that is a superset of a set in $\Fkm$.
    \end{claim}
    \begin{proof} 
    Assume that every demanded set is of size $k$.  Fix any set that most profitable set of size $k$ and denote it by $\Upk$.
        If $\Spk= \Gv$ then it must be that $\Upk=\Gv$, that is, $\Gv$ must be the unique most profitable set of size $k$ (as $\Gv$ has higher value than any other set of size $k$). 
        So we can assume that $\Spk\neq \Gv$. If $\Upk\neq \Gv$ then $\Upk$ must be $\Spk$. We thus assume that $\Upk=\Gv$ (and $\Spk\neq \Gv$), and the value of $\Upk$ is thus $k-6/11$. 
    
        The family $\Fkm$ includes a set $R$ from $\Rv$ with value $k-1$. 
        As $\Upk=\Gv$ is more profitable than $R$, it holds that $v(\Gv)-p(\Gv) = k-6/11 - p(\Gv) > k-1-p(R)$ and thus $p(\Gv)-5/11<p(R)$. 
        If there is an item $i\in \Gv$ of price at least $5/11$ then the set $\Gv\setminus \{i\}$ has price smaller than $p(\Gv\setminus\set{i})\leq p(\Gv)-5/11<p(R)$, and thus the set $\Gv\setminus \{i\}$ is in $\Fkm$ which implies that $\Gv\in \Fk$ as needed.
        
    
        The family $\Fkp$ includes a set $B$ from $\Bv$ with value $k$. 
        As $\Upk=\Gv$ is more profitable than $B$, it holds that $v(\Gv)-p(\Gv) = k-6/11 - p(\Gv) > k-p(B)$ and thus $p(B)-p(\Gv)>6/11$. 
        If there is an item $i\notin \Gv$ of price lower than $6/11$ then the set $\Gv\cup \{i\}$ has price smaller than $p(\Gv)+6/11<p(B)$, and thus the set $\Gv\cup \{i\}$ is in $\Fkp$ which implies that $\Gv\in \Fk$ as needed. 
        
        Otherwise, the price of every item in $\Upk$ is less than $\frac{5}{11}$, and the price of every item not in $\Upk$ is more than $\frac{6}{11}$,  and thus $\Upk=\Spk$ is the unique cheapest bundle of size $k$, a contradiction to $\Gv=\Upk\neq \Spk$.
       \end{proof}

    This completes the proof of Claim  \ref{claim:induction-step}.   
    \end{proof}
    
    
This completes the proof of Lemma \ref{lem:SM_demand2value}.
\end{proof}


We now conclude the proof of Theorem \ref{thm:submodular}. By Lemma~\ref{lem:SM_demand2value}, we have that with probability $1-\frac 1 \alpha$ over $\D$, a deterministic algorithm that makes $t$ demand and value queries can be implemanted using $2m^5\cdot t^{2}\cdot\alpha$ values queries in a canonical form.
Let $t'=2m^5\cdot t^{2}\cdot\alpha$.
By lemma~\ref{lem:SM_value_queries}, implementation that uses a set of $t'$ value queries that is in a canonical form, for $t'<1.9^m$ and large enough $m$, has a probability of at most $\frac{t'}{1.9^m}$ for finding $\Gv$. 
Hence, the original algorithm fails with probability at least $1-\alpha^{-1}-\frac{t'}{1.9^m}$.
Taking $t=1.3^m$, $\alpha=3$, and $m$ large enough, we have that a deterministic algorithm that makes at most $1.3^m$ queries, fails with probability at least $1-\frac{6\cdot  m^5\cdot 1.3^{2m}}{1.9^m}-\frac{1}{3}>\frac{1}{2}$ over $\D$.
\subsection{An $\Omega(\sqrt{m})$ Deterministic Lower Bound for Additive Valuations}\label{sec:det_lb}
In this section we show that randomization is inherently required for maximizing an additive valuation subject to a cardinality constraint. We show that every deterministic algorithm that always finds an item with the smallest value (equivalently, finds an \opt{(m-1)} set) must make $\sqrt{m}-1$  value and demand queries. We prove the following theorem:

\begin{theorem}\label{thm-deterministic}
Let $A$ be a deterministic algorithm that for any additive valuation finds an $(m-1)$-optimal set using value and demand queries.
Then, $A$ makes at least $\Omega(\sqrt m)$ 
queries.
\end{theorem}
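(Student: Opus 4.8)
The plan is to construct an adversary that answers the algorithm's value and demand queries online, maintaining a family of additive valuations consistent with all answers given so far, such that after fewer than roughly $\sqrt m$ queries there remain at least two consistent valuations whose unique minimum-value item is different. The natural construction: partition the $m$ items into $\sqrt m$ blocks of size $\sqrt m$; within a target block $B^\star$ (to be pinned down only at the end) exactly one item will have the globally smallest value, and the adversary keeps the identity of that item ambiguous. All items outside the eventual $B^\star$ get ``large'' values (say value $2$), all items in $B^\star$ except the special one get value $1$, and the special item gets value $0$. The point is that a value query to a set $S$ only reveals $\sum_{i\in S} v_i$, and a demand query at price vector $p$ only reveals a profit-maximizing set; I would show each such query can ``touch'' the ambiguity in at most one block, so $\Omega(\sqrt m)$ queries are needed to isolate the special item.

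First I would set up the adversary's invariant precisely: at each step there is a set of ``undecided'' blocks, and for each undecided block the special item could be any item of that block, with the remaining items of the block having value $1$; decided blocks are entirely value $2$ except possibly one block that has been fixed to contain the minimum. A value query to $S$: the adversary answers with the sum under the assumption that $S$ contains no special item where avoidable; formally I'd argue that for any query the answer can be made consistent with the special item being outside $S$ for all but at most one undecided block, so at most one block needs to become ``partially decided.'' A demand query at prices $p$: here the subtlety is that the demanded set is determined by which items have $v_i - p_i \ge 0$ (additive valuations make demand queries separable across items!), so the answer only distinguishes, per item, whether $v_i \ge p_i$; again this can force at most a bounded amount of information per block, and I would choose the values $\{0,1,2\}$ with a scale so that the threshold structure of a single price vector cannot separate the ``$0$'' item from the ``$1$'' items across many blocks simultaneously. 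The key counting step: each of the first $\sqrt m - 2$ queries reduces the number of fully-undecided blocks by at most one, so after fewer than $\sqrt m - 1$ queries at least two undecided blocks remain, and within each the special item is still ambiguous among $\ge 1$ candidates — in fact I need at least one block with $\ge 2$ candidate items for the special slot, which a slightly more careful accounting (blocks of size $\sqrt m$, each query kills at most one candidate in at most one block, and a block only becomes ``decided'' once it is down to one candidate) gives after $< \sqrt m - 1$ queries.

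Finally, I would conclude: if $A$ halts after $t < \sqrt m - 1$ queries and outputs a set $T$ of size $m-1$, i.e. names an item $i^\star = [m]\setminus T$ it believes to be the minimizer, the adversary still has a consistent valuation in which $i^\star$ is not the special item (pick $B^\star$ to be an undecided block not containing $i^\star$, or if $i^\star$ lies in an undecided block, pick the special item of that block to be any candidate $\ne i^\star$) — contradicting correctness of $A$. The main obstacle I anticipate is handling demand queries cleanly: because the tie-breaking rule is adversarial and a demand query at a uniform price can in principle probe all blocks at once, I must show that the \emph{additive separability} of demand really does limit the leaked information to ``one threshold comparison per item,'' and then argue that a single price vector, answered adversarially, can be made to collapse the ambiguity in at most one block — this is where the choice of the three value levels and the argument that a demanded set can always be reported as ``avoiding the special item'' (so long as the corresponding price leaves room) has to be made to work, and it is the delicate part of the proof.
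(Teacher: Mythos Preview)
Your block construction with values in $\{0,1,2\}$ is defeated by two demand queries, so the approach does not work. At uniform price $1.5$, the demanded set (for any valuation in your family) is exactly $M\setminus B^\star$, so a single demand query forces the adversary to commit to the identity of $B^\star$. A second uniform demand query at price $0.5$ then returns every item except the one of value $0$, and the adversary must name the special item. You flag this danger in your last paragraph, but the suggested fix---``choose the values $\{0,1,2\}$ with a scale so that the threshold structure of a single price vector cannot separate the $0$ item from the $1$ items across many blocks simultaneously''---does not address it: the fatal separation is between the $2$-valued items and the $1$-valued items (i.e.\ between $B^\star$ and its complement), not between $0$ and $1$, and any fixed discrete value scheme will be sliced open by a constant number of thresholds. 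More generally, any construction in which the family of consistent valuations has only $O(1)$ distinct value levels will collapse under $O(1)$ demand queries, because additive demand at a uniform price reveals an entire level set at once.

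The paper's proof takes a completely different, linear-algebraic route. The adversary keeps the values of most items \emph{continuous and uncommitted}, constrained only by a small linear system $A_{\mathcal Q}v=\varepsilon_{\mathcal Q}A_{\mathcal Q}\vec{1}$ coming from the value queries answered so far. When a demand query arrives, the adversary picks a \emph{basic feasible solution} of this system---which has support of size at most the current number of value-query constraints---commits only those few items to specific values, and shrinks $\varepsilon$ so that all remaining (uncommitted) items are too cheap to appear in any demanded set. The crucial accounting is then multiplicative: each of the $q_d$ demand queries commits at most $q_v$ items, so after $q_v$ value queries and $q_d$ demand queries at most $q_v\cdot q_d + q_v$ linear constraints pin down the valuation, and if $q_v,q_d<\sqrt m-1$ this is fewer than $m-1$, leaving two consistent valuations with different minimum items. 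This BFS-sparsity idea is the missing ingredient in your plan; there is no evident way to recover a $\sqrt m$ bound from a fixed-block, discrete-value construction.
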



Since we discuss only additive valuations in this section, we abuse notation and sometimes refer to additive valuations as vectors in $\R^m$, with each element representing the value of the corresponding item. We start with several lemmas regarding linear constraints that will be useful in the proof. First, recall that {\em basic feasible solutions (BFS)} are non-negative solutions to a linear system with minimal support (see, e.g., \cite{eisenbrand2006caratheodory,matousek2007understanding}).

\begin{lemma}[\cite{matousek2007understanding}]\label{lem:sparse}
	For $A\in \R^{\ell\times m}$ and $b\in \R^{\ell}$, if the system ($A\cdot v=b$, $v\geq 0$) has a solution, it has a solution with support of size at most $\ell$.
\end{lemma}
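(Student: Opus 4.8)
The plan is to run the classical ``support-shrinking'' exchange argument that underlies the existence of basic feasible solutions. First I would take, among all nonnegative solutions $v$ of $A\cdot v = b$, one whose support $S := \{i \in [m] : v_i > 0\}$ has minimum cardinality, and suppose towards a contradiction that $|S| > \ell$.

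Next, I would exploit linear dependence of the columns. The columns of $A$ indexed by $S$ form more than $\ell$ vectors in $\R^{\ell}$, hence they are linearly dependent; so there is a nonzero $w \in \R^m$ with $w_i = 0$ for every $i \notin S$ and $A\cdot w = 0$. Then $A\cdot (v + t w) = b$ for every scalar $t$, and since $v_i > 0$ on $S$ while $w$ vanishes off $S$, the vector $v + t w$ remains nonnegative for all $t$ in some open interval around $0$.

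Then I would move $t$ to the boundary of this interval. The set $\{t : v + t w \geq 0\}$ has a finite endpoint $t^{*}$ on at least one side: it cannot be all of $\R$, because $w$ has some nonzero coordinate $w_j$ with $j \in S$, which forces $v_j + t w_j$ to turn negative once $t$ grows large in the appropriate direction. At $t = t^{*}$ we still have $v + t^{*} w \geq 0$ and $A\cdot (v + t^{*} w) = b$, but now at least one coordinate $i \in S$ satisfies $v_i + t^{*} w_i = 0$, so $v + t^{*} w$ is a nonnegative solution of the system whose support is a strict subset of $S$ --- contradicting the minimality of $|S|$. Hence $|S| \leq \ell$, which is exactly the claim.

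There is essentially no genuine obstacle here; the only point that deserves a word of care is checking that one can always push $t$ to a feasibility boundary at which a coordinate of the \emph{current} support vanishes (rather than, say, exiting the nonnegative orthant through a coordinate that was already zero). This is immediate because $w$ is supported on $S$ and $v$ is strictly positive on $S$, so the only active nonnegativity constraints as $t$ varies are those indexed by $S$.
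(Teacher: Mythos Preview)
Your argument is correct and is the standard textbook proof of this fact. The paper does not actually supply its own proof of this lemma; it is stated with a citation to \cite{matousek2007understanding} and used as a black box, so there is no in-paper argument to compare against.
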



The following simple observation will be useful later in the proof. 
\begin{claim}\label{pro:continuance}
    Let $\alpha,\beta\in \R$ such that $\alpha+\beta=1$.
	If $\vec x,\vec y$ are solutions of the linear system $A\cdot v=b$, then so is $\vec z = \alpha\cdot \vec x + \beta \cdot \vec y$.
\end{claim}
\begin{proof}
Since both $\vec x$ and $\vec y$ are solutions of the linear system $A\cdot v=b$, we have that 
$$A\cdot \vec z=A\cdot (\alpha\cdot\vec  x + \beta\cdot \vec{y})=\alpha \cdot A\cdot\vec  x +\beta \cdot A \cdot \vec{y} =(\alpha+\beta)\cdot b=b$$
\end{proof}

\begin{lemma}\label{lem:n_value_queries}
    Let $A\in \R^{\ell\times m}$ be a matrix of rank at most $m-1$.
	The linear system $A\cdot v=A\cdot\vec{\textbf{1}}$ with the constraint $v\geq 0$ has two different solutions which disagree on the identity of the item with the smallest value.
\end{lemma}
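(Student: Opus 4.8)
The plan is a short perturbation argument: I would produce two feasible solutions by moving the all-ones vector $\vec{\textbf{1}}$ in two opposite directions along a suitably chosen kernel vector of $A$. The only inputs needed are elementary linear algebra together with Claim~\ref{pro:continuance}.

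First I would note that $\vec{\textbf{1}}$ is itself a feasible solution (it trivially satisfies $A\cdot v=A\cdot\vec{\textbf{1}}$, and $\vec{\textbf{1}}\geq 0$), and that by Claim~\ref{pro:continuance} the solution set of the equality system $A\cdot v=A\cdot\vec{\textbf{1}}$ is closed under affine combinations, hence equals the affine subspace $\vec{\textbf{1}}+\ker(A)$. Since $\operatorname{rank}(A)\leq m-1$, the kernel $\ker(A)$ is nontrivial, and I would argue it contains a vector $w$ that is \emph{not} a scalar multiple of $\vec{\textbf{1}}$, i.e. not constant across its coordinates: if $\dim\ker(A)\geq 2$ this is immediate because $\operatorname{span}\{\vec{\textbf{1}}\}$ is only one‑dimensional. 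Fixing such a $w$, write $a=\min_i w_i$ and $b=\max_i w_i$, so $a<b$. Now choose $t>0$ small enough that $1\pm tw_i>0$ for every $i$ (any $t<1/\max_i|w_i|$ works), and set $x=\vec{\textbf{1}}+tw$ and $y=\vec{\textbf{1}}-tw$. Both are nonnegative, both lie in $\vec{\textbf{1}}+\ker(A)$ and hence solve the system, and $x\neq y$ since $w\neq 0$ and $t>0$. The minimum entry of $x$ equals $1+ta$ and is attained exactly on $\{i:w_i=a\}$, while the minimum entry of $y$ equals $1-tb$ and is attained exactly on $\{i:w_i=b\}$; as $a<b$, these two index sets are disjoint, so no item is a minimum‑value item in both $x$ and $y$. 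Hence $x$ and $y$ are two distinct feasible solutions that disagree on the identity of the item with the smallest value.

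The step I expect to be the only delicate point is establishing the existence of a non‑constant kernel vector, i.e. ruling out the degenerate case $\ker(A)=\operatorname{span}\{\vec{\textbf{1}}\}$ (equivalently $\operatorname{rank}(A)=m-1$ together with $A\cdot\vec{\textbf{1}}=0$), in which every nonnegative solution is a multiple of $\vec{\textbf{1}}$ and all items are tied. In the intended application this does not arise: there $A$ records value queries to nonempty bundles, so $A\cdot\vec{\textbf{1}}\neq 0$ and therefore $\vec{\textbf{1}}\notin\ker(A)$, which already yields a non‑constant $w$ (equivalently, one may simply add the hypothesis $A\cdot\vec{\textbf{1}}\neq 0$). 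Everything else in the argument is routine.
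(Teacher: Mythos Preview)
Your proposal is correct and follows essentially the same approach as the paper: perturb $\vec{\textbf{1}}$ in two opposite directions along a non-constant kernel direction and observe that the argmin-index sets of the two resulting solutions are disjoint. Your handling of the degenerate case $\ker(A)=\operatorname{span}\{\vec{\textbf{1}}\}$ is in fact more explicit than the paper's, which simply asserts the existence of a solution independent of $\vec{\textbf{1}}$ without addressing that edge case.
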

\begin{proof}
	The vector $\vec{\textbf{1}}$ is a solution. In addition, since $A$'s rank is smaller than $m$, there exists some solution $\vec w\in\R^m$ such that $\vec w$ and $\vec{\textbf{1}}$ are independent. Let $\vec x = (1-\epsilon)\vec{\textbf{1}} + \epsilon\cdot \vec w$ and $\vec y = (1+\epsilon)\vec{\textbf{1}} - \epsilon\cdot \vec w$ for a small enough $\epsilon>0$ to ensure that all entries in $\vec x$ and $\vec y$ are positive. 
	Note that by Claim \ref{pro:continuance} both $\vec x$ and $\vec y$ are solutions of the linear system.
	The values in $\vec{x}$ and $\vec{y}$ are ordered according to $\vec{w}$ where a minimal item in $\vec{x}$ is maximal and $\vec{y}$ and vice versa. 
	Since $\vec{w}$ is not the all zero vector, these two solutions disagree on the identity of the items with the smallest value.
\end{proof}


We are now ready to prove Theorem \ref{thm-deterministic}.
Throughout the proof, we assume that in all demand queries that $A$ makes the prices are strictly positive.
This assumption only doubles the number of queries:
if there is a demand query that gives $0$ prices for some set of items $S$, we can replace all $0$ prices with $\infty$, run the new demand query and return the union of $S$ and the answer $T$ of the new demand query. Since the demand query should also return the value of $S\cup T$ we can make one value query $v(S)$ and return $v(S)+v(T)$ (recall that a demand query also returns the value of the most demanded set). 
{In addition, we assume that the first query is a value query for the entire set $\items$, increasing the total queries made by the algorithm by at most one.}
As the algorithm must work with any implementation of the demand query, it must work with the one we specify here.

\begin{proof}[of Theorem \ref{thm-deterministic}]
We show that for every deterministic algorithm $A$ there is an adversary that can answer all queries in a way such that as long as the algorithm did not make many queries, there are two different valuations $v,v'$, both consistent with the queries asked, each has a unique item with minimal value, but the items with the minimal value in $v$ and in $v'$ are different.
Thus, the algorithm does not distinguish between $v$ and $v'$ and does not find an $(m-1)$-optimal bundle.

We will describe a set  of valuations and inductively show that every valuation in this set is consistent with the queries asked so far. Thus, every item that is a minimal item of a valuation in the set is a possible solution. 

The next claim is the heart of the proof. To give some intuition, let us examine some of the possible answers of the adversary. For every bundle $S$ that its value is queried, the algorithm will return the value $|S|$. Suppose that a demand query, all with positive prices, is made. The adversary now ``defines'' a set of items $L$ that we ``set'' their value to be very big. The items in $L$ are the items that the demand query returns. We consider the values of items that are not in $L$ to be very low, although the adversary does not commit on their specific values (so any of them might be the minimal item). In the next value queries we will treat every $S\subseteq M-L$ as having value $v(S)=\eps'\cdot |S|$, for $\eps \gg \eps'$ ($\eps'$ will be smaller than the smallest price in the demand query). Thus, the demand query will only return the items in $L$. 

There are several challenges in achieving this. The first is making sure that after the first demand query we are consistent with the value queries that were done so far. The second is to be able to answer not just the first demand query but also the following ones. The third is to make sure that the set of items $L$ is small, otherwise the adversary has to commit on the values of all items too quickly. The next claim handles all these challenges.

\begin{claim}\label{claim-invariant}
Consider an execution of the algorithm after a set of $\mathcal Q$ queries. 
Let $t_{r}$ denote the number of value queries made before the $r$'th demand query in $\mathcal Q$. Suppose that $\mathcal Q$ contains $i$ demand queries and $x$ value queries.
Let $x'$ denote the number of value queries made after the last ($i$'th) demand query. There exists an adversary, a non-empty set of valuations $\mathcal V^{\mathcal Q}$, a set of items $\LQ\subseteq \items$ where $|\LQ|=\sum_{r\leq i}t_{r}$, a matrix $\AQ\in \set{0,1}^{(t_i+x')\times (m-|\LQ|)}$ and $\eQ>0$ such that:
\begin{itemize}
    \item All valuations in $v\in \mathcal V^{\mathcal Q}$ are consistent with all the queries made so far.
    \item For every $v,v'\in \mathcal V^{\mathcal Q}$ and $j\in \LQ$, $v(\{j\})=v'(\{j\})$.
    \item By renaming, assume without loss of generality that the items that are not in $\LQ$ are indexed $1,\ldots, m-|\LQ|$. For each $v\in \mathcal V^{\mathcal Q}$ it holds that:
        \begin{equation*}
            \AQ\cdot v = \eQ\cdot \AQ\cdot \vec{\textbf{1}}
        \end{equation*}
\end{itemize}
\end{claim}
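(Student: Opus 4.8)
I would prove Claim~\ref{claim-invariant} by induction on the number of queries $|\mathcal Q|$, specifying how the adversary answers each new query and updates the tuple $(\mathcal V^{\mathcal Q},\LQ,\AQ,\eQ)$, while maintaining the extra bookkeeping invariant that $\eQ$ is strictly below every coordinate of every price vector used in a demand query so far. For the base case — after the assumed first query, the value query on $\items$, answered by $m$ — take $\LQ=\emptyset$, $\eQ=1$ and let $\AQ$ be the $1\times m$ all-ones row; every additive $v\ge 0$ with $\sum_j v_j=m$ is consistent and satisfies $\AQ v=\eQ\,\AQ\vec{\textbf{1}}$, so $\mathcal V^{\mathcal Q}$ is non-empty and the invariants hold with $i=0$.

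\emph{Value query on $S$.} Write $S_L=S\cap\LQ$, $S_N=S\setminus\LQ$, and answer $\sum_{j\in S_L}v(\{j\})+\eQ\cdot|S_N|$ (well-defined since the members of $\mathcal V^{\mathcal Q}$ agree on $\LQ$). Keep $\LQp=\LQ$, $\eQp=\eQ$, and let $\AQp$ be $\AQ$ with one extra row, the $\{0,1\}$-indicator of $S_N$ over the columns $\items\setminus\LQ$. The canonical valuation that keeps the fixed values on $\LQ$ and equals $\eQ$ on every item of $\items\setminus\LQ$ still lies in $\mathcal V^{\mathcal Q'}$: it solves the enlarged system, matches the new answer by construction, and matched all earlier answers — in particular all earlier demand answers, because $\eQ$ is below every price ever used in a demand query, so under this valuation no item outside $\LQ$ is ever demanded. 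The row count increases by exactly one, as required.

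\emph{Demand query with strictly positive prices $p$, the $(i{+}1)$-st one.} Here $\AQ$ has $t_i+x'=t_{i+1}$ rows, hence rank at most $t_{i+1}$. I would pick $L'\subseteq\items\setminus\LQ$ with $|L'|=t_{i+1}$ whose columns span the column space of $\AQ$ (padding with further items of $\items\setminus\LQ$ if the rank is smaller), put $\LQp=\LQ\cup L'$ so that $|\LQp|=\sum_{r\le i+1}t_r$, and let $\AQp$ be $\AQ$ with the $L'$-columns deleted (same $t_{i+1}$ rows, now $m-|\LQp|$ columns). Choose $\eQp\in(0,\eQ)$ below every coordinate of $p$ (hence below every earlier price) and then freeze the values of the items of $L'$ to a solution $x$ of
\begin{equation*}
	\AQ^{L'}x \;=\; \eQ\cdot\AQ^{L'}\vec{\textbf{1}}\;+\;(\eQ-\eQp)\cdot\AQ^{\mathrm{rem}}\vec{\textbf{1}},
\end{equation*}
where $\AQ^{L'}$ and $\AQ^{\mathrm{rem}}$ are the submatrices of $\AQ$ on the columns $L'$ and on the surviving columns $\items\setminus\LQp$; the right-hand side lies in the column space of $\AQ$, which equals that of $\AQ^{L'}$, so $x$ exists, and since $x=\eQ\vec{\textbf{1}}$ solves the equation at $\eQp=\eQ$, a strictly positive solution exists whenever $\eQp$ is chosen close enough to $\eQ$ (these are the frozen values; note $x_j\approx\eQ>\eQp$). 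Answer the demand query with the demand set of the valuation $v^{\star}$ that uses the frozen values on $\LQp$ and the value $\eQp$ on $\items\setminus\LQp$; it is a subset of $\LQp$ since $\eQp<\min_j p_j$. Define $\mathcal V^{\mathcal Q'}$ to be the valuations agreeing with the frozen values on $\LQp$ whose restriction to $\items\setminus\LQp$ is a non-negative solution of $\AQp v=\eQp\,\AQp\vec{\textbf{1}}$ lying in a small enough ball around $\eQp\vec{\textbf{1}}$ that every coordinate stays strictly below $\min_j p_j$. This contains $v^{\star}$; a short substitution — using that each earlier value answer has the form $\sum_{j\in S\cap\LQ}v_j+\eQ|S\setminus\LQ|$ together with the displayed equation defining $x$ — shows every member is consistent with all value queries, and consistency with every demand query (old and new) is immediate since the surviving coordinates are below all relevant prices. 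The required counts for $\LQp,\AQp,\eQp$ hold by construction.

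\emph{The hard step and the conclusion.} The delicate case is the demand query: the adversary must reveal exactly the $t_{i+1}$ items of $L'$ (which forces $|\LQ|$ to grow precisely to $\sum_{r\le i}t_r$) and no more, while keeping the surviving items governed by an honest linear system of the prescribed shape and keeping the newly frozen values non-negative — this is exactly what dictates choosing $L'$ as a column-spanning set and $\eQp$ close to $\eQ$, and why $\eQ$ must be kept below all earlier demand prices so past answers remain valid. Granting Claim~\ref{claim-invariant}, Theorem~\ref{thm-deterministic} follows: when the algorithm halts, Lemma~\ref{lem:n_value_queries} applied to $\AQ$ over the $m-|\LQ|$ surviving items (with its perturbation parameter, and the successive $\eQp$'s, taken small enough that the global minimum of the resulting valuations is a surviving item) shows that if $\mathrm{rank}(\AQ)<m-|\LQ|$ then two valuations in $\mathcal V^{\mathcal Q}$ disagree on the identity of the smallest item, so the algorithm cannot be correct; hence the number $x$ of value queries satisfies $x\ge\mathrm{rank}(\AQ)\ge m-|\LQ|\ge m-i\cdot x$, so $x(i+1)\ge m$, and the total number $i+x$ of queries is $\Omega(\sqrt m)$.
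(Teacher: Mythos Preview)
Your inductive scheme matches the paper's, and the value-query step is essentially identical. The gap is in the demand-query step, specifically in how you choose $L'$ and justify that the frozen values $x$ are non-negative (and, for your direct consistency argument, below all earlier demand prices).

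You pick $L'$ as an arbitrary column-spanning subset of size $t_{i+1}$ and argue positivity of $x$ by continuity: at $\eQp=\eQ$ the solution is $x=\eQ\vec{\textbf{1}}$, so $x>0$ for $\eQp$ close enough to $\eQ$. But you also require $\eQp<\min_j p_j$. When the current demand prices lie below the current $\eQ$ --- nothing in the setup prevents this --- these two requirements conflict, and the continuity argument no longer applies. Concretely, with $\AQ=\left(\begin{smallmatrix}1&1&1&0\\0&0&1&1\end{smallmatrix}\right)$, $\eQ=1$, and the valid column-spanning choice $L'=\{3,4\}$, your displayed equation gives $x_3=3-2\eQp$ and $x_4=2\eQp-1$, so $x_4<0$ whenever $\eQp<\tfrac12$; if the current query has $\min_j p_j<\tfrac12$ you are stuck. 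The same tension undermines your claim that newly frozen items remain consistent with earlier demand answers: you rely on $x_j\approx\eQ$ (hence below earlier prices), but this too needs $\eQp$ close to $\eQ$.

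The paper avoids this by invoking Lemma~\ref{lem:sparse} instead of a column-spanning argument: it takes $L'$ to be the support of a basic feasible solution $\vec r\ge 0$ of $\AQp\cdot v=\eQp\,\AQp\vec{\textbf{1}}$ and sets the frozen values via the affine combination $\vec r'=(1-\eQ/\eQp)\vec r+\eQ\vec{\textbf{1}}$ (Claim~\ref{pro:continuance}). Because $\vec r\ge 0$ and $0<\eQ<\eQp$, this is non-negative for \emph{every} choice of $\eQ\in(0,\eQp)$, so the adversary is free to take $\eQ$ as small as it likes --- in particular $<p_{\min}/m$ --- regardless of how small the current prices are. Moreover, the paper sidesteps any direct bound on the frozen values by defining the new $\mathcal V^{\mathcal Q}$ as the subset of the old $\mathcal V^{\mathcal Q'}$ with the prescribed values on $L'$; consistency with all earlier queries is then inherited by induction rather than re-argued coordinate by coordinate, which is what makes your ball restriction unnecessary.
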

\begin{proof}
{We prove the claim by induction over the number of queries. The base case is when $\mathcal Q$ includes a single query, a value query for $\items$ that returns a value of $m$. Let $\LQ=\emptyset$, $\eQ=1$, and $\AQ$ which is a single all-one line. The claim trivially holds for the case the valuation $\vec{\textbf{1}}$ satisfies all three requirements.
}

We now assume the induction hypothesis for $\mathcal Q'$, and prove for $\mathcal Q$, where $\mathcal Q$ is $\mathcal Q'$ with an additional query. We describe how the adversary answers the additional query and how to obtain $\AQ$, $\LQ$, $\eQ$, for which the claim holds after each query. For each item $j\in \LQp$, let $\LQp(j)$ be the value of item $j$ for all valuations in $\mathcal V^{\mathcal Q'}$.

We first consider the case where the additional query is a value query, and then the case where the additional query is a demand query. If the algorithm makes a value query for some set $S$, the adversary answers that the value of $S$ is $\eQp\cdot |S\setminus \LQp|+\sum_{j\in S\cap \LQp} \LQp(j)$. This answer is consistent with the previous queries since there exists a valuation $v\in \mathcal V^{\mathcal Q'}$ that obeys the conditions in the statement of the lemma: e.g., $v(\{j\})=\LQp(j)$ for every $j\in \LQp$ and $v(\{j\})=\eQp$ for every $j\notin \LQp$. Define $\LQ=\LQp, \eQ=\eQp$ and the set of linear equations $\AQ$ to be $\AQp$ with the additional constraint that $v(S\setminus \LQp)=\eQp \cdot |S\setminus \LQp|$.


If the additional query is a demand query, the adversary answers it as follows. Find a solution for the system $\AQp \cdot v = \eQp\cdot \AQp\cdot \vec{\textbf{1}}$ and $v\geq 0$, with the smallest support size. 
Since the vector $\eQp\cdot \vec{\textbf{1}}$ is a solution, by Lemma~\ref{lem:sparse} there exists a solution $\vec r\in\R_{+}^{m-|\LQp|}$ with at most $t_{i+1}{=t_i+x'}$ non-zero coordinates (recall that $i$ is the number of demand queries in $\mathcal Q'$).
According to Claim~\ref{pro:continuance}, the vector $\vec {r'}=(1-\frac{\eQ}{\eQp}) \vec r+\eQ\cdot \vec{\textbf{1}}$ is also a valid solution. 

We now show that the induction hypothesis holds for $\mathcal Q$ with any $\eQ<\eQp$. Let $\LQ$ be $\LQp$ with all items that their value is non-zero in $\vec{r}$. We set the common value $\LQ(j)$ of any such newly added item $j\in\LQ\setminus\LQp$ to be its value in the solution $\vec {r'}$. We define $\mathcal V^{\mathcal Q}$ to be all valuations in $\mathcal V^{\mathcal Q'}$ that have $v(j)=\LQ(j)$ for every $j\in\LQ\setminus\LQp$.


The set of linear equations $\AQ$ is obtained from $\AQp$ by updating every constraint $\sum_{a\in S\setminus \LQp}{v(a)}= b$ in $\AQp$ to $\sum_{a\in S\setminus \LQ}v(a)= b-\sum_{a\in S\cap \LQ}v(a) = \eQ\cdot |S\setminus \LQ|$.

Note that $\mathcal V^{\mathcal Q}$ is not empty as it contains the valuation defined by $\vec {r'}$ and the values of the items in $\LQp$. Furthermore, all valuations in $\mathcal V^{\mathcal Q}$ are consistent with all queries made so far, and in particular with the last demand query. 
{We now set $\eQ$ to be small enough to make sure that only items in $\LQ$ could be in the demand set. If $p_{min}$ is the minimal price of an item in the $(i+1)$'th (last) demand query, $\eQ$ is chosen to be strictly smaller than $\frac{p_{min}}{m}$. Since the first line in $\AQ$ has that $\sum_{a\in\items\setminus\LQ}v(a)=\eQ\cdot|\items\setminus\LQ|$, this guarantees that for any item $j\notin\LQ$, the maximal possible value for $j$ is smaller than $m\cdot\eQ<p_{min}$ and thus $j$ is not in any most profitable set for the $i+1$'th demand query. Hence, committing on the values of items in $\LQ$ is sufficient for the   implementation of the demand query.}
The adversary then answers the demand query according to the valuation defined by the solution $\vec {r'}$. 
\end{proof}

Using Claim \ref{claim-invariant} we are now ready to complete the proof of the theorem.
Let $q_v$ be the total number of value queries and $q_d$ be the total number of demand queries made by the algorithm. Let $\mathcal Q$ be the list of queries.
Since the $i$'th 
demand query adds at most $t_i$ new items to the set $\LQ$ and $t_i$ is at most $q_v$, the total number of items in $\LQ$ is $q_v\cdot q_d$. Thus, the linear system has at most $q_v$ constraints. 
Furthermore, each item $j\in \LQ$ corresponds to a linear constraint of the form $v(j)=t$ for some known value $t$. That is, the set $\mathcal V^{\mathcal Q}$
is non-empty and defined by at most $q_v\cdot q_d+q_v$ linear constraints. When $q_d,q_v \leq \sqrt{m}-1$, we have that the total number of constraints is at most $q_v\cdot q_d+q_v<m-1$. By Lemma~\ref{lem:n_value_queries} the algorithm fails for some valuation $v$.
\stam{
During the run of the algorithm, we maintain two data structures. The first is a set $L$ of items for which the exact value of the item is known. Roughly speaking, we will make sure that every demand query adds not too many items to $L$. The second is a linear system $Av=b$ representing the constraints implied by the answers to the value queries that have been made. 

Let $t_i$ be the number of value queries made before the $i$'th demand query.
The data structures maintain the following inductive property:
after the $i$'th demand query of the algorithm, there exists an $\epsilon_i$ and a matrix $A\in \set{0,1}^{t_i\times (m-|L|)}$, such that all valuations that are consistent with the queries made by this point can be represented as the set $L$ and the following linear system that gives constraints on items that are not in $L$:
\begin{equation}
A\cdot v = \epsilon_i\cdot A\cdot \vec{\textbf{1}}
\end{equation}

We define $\epsilon_0=1$. For every value query made after the $i$'th demand query and before $(i+1)$'th demand query, the adversary answers similarly: $v(S)=\epsilon_i|S- L|+\sum_{j\in S\cap L} v(\{j\})$. This answer is consistent with the previous queries since it is consistent with Equation (\ref{eqn-all-epsilon}) as well as with the definition of $L$.
	
The adversary answers the $(i+1)$'th demand query as follows: find a sparse solution for $Av = \epsilon_i\cdot A\cdot \vec{\textbf{1}}$ and $v\geq 0$. Since the vector that all of its coordinates are $\epsilon_i$ is a solution, Lemma~\ref{lem:sparse} implies that there exists a solution $\vec r\in\R_{+}^{m-|L|}$ with only $t_i$ non-zero coordinates.

According to Claim~\ref{pro:continuance} the vector $\vec {r'}=(1-\epsilon_{i+1}) \vec r+\epsilon_{i+1}\cdot \vec{\textbf{1}}$ is also a valid solution for any $\epsilon_{i+1}$. We choose $\epsilon_{i+1}$ small enough to ensure that 
all values which are $0$ in $\vec r$, are still among the $m-|L|-t_i$ minimal entries in $r'$ \sd{do you mean $m-\sum_it_i?$}. \rk{yes. maybe clearer: among the $m-|L|-t_i?$}
In addition, $\epsilon_{i+1}$ is chosen to be strictly smaller than the minimal price of an item at the $(i+1)$'th demand query. 

The adversary then answers the demand query according to the solution $\vec {r'}$. Since $\epsilon_{i+1}$ is smaller than all prices, we have that none of the $m-|L|-t_i$ items of smallest value are in any demand set. Thus the demand query only reveals the identity of the $t_i$ items with the highest values that are not in $L$. We conservatively assume that the value of these $t_i$ items is also known after this query.
In particular, the adversary has not committed on the identity not values of the $m-|L|-t_i$ items with the smallest items (except for the linear constraints implied by the previous queries).

We then modify the constraint $A v = \epsilon_i A \cdot\vec{\textbf{1}}$ by subtracting any new item in $L$ from both sides of the equation. 
I.e., any constraint of the form $\sum_{a\in S}= b$ is updated to $\sum_{a\in S-L}v(a)= b-\sum_{a\in S\cap L}v(a)$. 
Since the new items in $L$ are chosen according to $r'$, we have that the new linear system takes the form $A' v = \epsilon_{i+1} A' \cdot\vec{\textbf{1}}$ (i.e., $b-\sum_{a\in S\cap L}v(a)=\epsilon_{i+1}|S-L|$) where the dimensions of $A'$ and $A$ are the same.
Observe that we modify $A$ to maintain our property with respect to $\epsilon_{i+1}$ without adding new constraints.
Hence, $t_i$ is bounded by the number of value queries made so far (independent of the number of demand queries \sd{that does not make sense. any value query can be simulated in our model by one demand query since the demand query also returns the value of the set. maybe it's better to prove this bound only with demand queries?}).\rk{we assumed at beginning of the proof that any query is either a value query or a demand query with no zero prices. This model had the same expressiveness as our original model with same asymptotic sample complexity.}

Let $q_v$ be the total number of value queries and $q_d$ be the total number of demand queries made throughout the algorithm.
Since the $i$'th demand query adds at most $t_i$ new items to the set $L$ and $t_i$ is at most $q_v$, the total number of items in $L$ at the end of the algorithm is $q_vq_d$ and the linear system has at most $q_v$ constraints. 
At the end of the algorithm, Each item in $x\in L$ corresponds to a linear constraint of the form $v(x)=t$ for some known value $t$. That is, the two data structures are translated to a set of at most $q_vq_d+q_v$ constraints.
By making at most $\sqrt{m}-1$ queries of each type, we have that the total number of constraints is at most $q_vq_d+q_v<m-1$. By Lemma~\ref{lem:n_value_queries} the algorithm fails for some valuation $v$.
}
\end{proof}

\subsection{Impossibilities for Additive Valuations using Value Queries}\label{sec:value_lb}
For deterministic algorithms, we show that $m-1$ value queries are needed to find an \opt{(m-1)} set. 
For randomized algorithms, we show two different proofs that $\Om{\frac{m}{\log m}}$ queries are needed to find a \opt{k} set (with each proof using a different $k$) with non-negligible success probability.

\begin{proposition}
    Let $A$ be a deterministic algorithm that given an additive valuation $v$ makes $q$ value queries and returns 
    an \opt{(m-1)} set
    (equivalently, finds the item with the smallest value). Then, the algorithm makes at least 
    {$m-1$} 
    value queries. 
\end{proposition}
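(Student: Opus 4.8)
The plan is to exhibit, for any deterministic algorithm $A$ making at most $m-2$ value queries, an adversary strategy that keeps two additive valuations alive which disagree on the identity of the minimum-value item, so $A$ cannot correctly output an $(m-1)$-optimal set. This is the ``value-query only'' analogue of the argument behind Theorem~\ref{thm-deterministic}, but cleaner: without demand queries there is no set $L$ of ``committed'' items and no need for the sparse-solution machinery, so the bound improves from $\Omega(\sqrt m)$ to the tight $m-1$.

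First I would set up the adversary. Every value query $S$ is answered with $|S|$; i.e., the adversary pretends the valuation is the all-ones vector $\vec{\mathbf 1}$, which is consistent with $v(S)=\sum_{j\in S}v_j$ for any $v$ satisfying the linear system $\{\sum_{j\in S}v_j=|S|\}$ over all queried $S$. After $q\le m-2$ queries the algorithm has collected a system $A\cdot v = A\cdot\vec{\mathbf 1}$ with $A\in\{0,1\}^{q\times m}$ and $q\le m-2$, so $\mathrm{rank}(A)\le m-2\le m-1$. I would then invoke Lemma~\ref{lem:n_value_queries} directly: since the rank is at most $m-1$, the system $A\cdot v = A\cdot\vec{\mathbf 1}$, $v\ge 0$, has two solutions $\vec x,\vec y$ that disagree on which coordinate is smallest, and both are strictly positive (the lemma's construction perturbs $\vec{\mathbf 1}$ by a small $\epsilon$ along a vector independent of $\vec{\mathbf 1}$, so positivity is automatic). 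Both $\vec x$ and $\vec y$, read as additive valuations, are consistent with every answer the adversary gave, yet the true minimum-value item differs between them, so whatever single item $A$ outputs is wrong for at least one of them. Hence $A$ must make at least $m-1$ value queries. I should double-check that Lemma~\ref{lem:n_value_queries} really only needs rank $\le m-1$ (it states ``rank at most $m-1$''), which matches $q\le m-2$; in fact even $q=m-1$ queries could leave rank $\le m-1$, but to be safe the clean statement is that $m-2$ queries are insufficient, giving the lower bound $m-1$.

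One subtlety to address: the linear system as literally recorded has one equation per queried set, but the adversary should make sure these equations are genuinely the only constraints, i.e., that any nonnegative solution of the system yields a valuation consistent with all the query answers — this is immediate because a value query on $S$ only constrains $\sum_{j\in S}v_j$, exactly the left-hand side of the corresponding equation. A second point: one might worry the algorithm's choice of the $(q{+}1)$-st query depends adaptively on earlier answers, but since the adversary's answers are fixed in advance (always $|S|$), adaptivity gives the algorithm no extra power — the transcript is determined, and after $m-2$ queries the rank bound holds regardless.

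The main obstacle, such as it is, is purely bookkeeping: making sure the count is tight, i.e., that $m-2$ queries genuinely force $\mathrm{rank}(A)\le m-1$ so that Lemma~\ref{lem:n_value_queries} applies, and articulating cleanly why two strictly-positive solutions disagreeing on the argmin defeat the algorithm. There is no hard combinatorial or analytic core here — the work was already done in Lemma~\ref{lem:sparse}, Claim~\ref{pro:continuance}, and Lemma~\ref{lem:n_value_queries} — so the proof should be short, essentially a one-paragraph reduction to Lemma~\ref{lem:n_value_queries}.
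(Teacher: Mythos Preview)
Your proposal is correct and follows the same adversary strategy as the paper: answer every value query $S$ with $|S|$, so that after at most $m-2$ queries the consistent valuations form the affine set $\{v:A v = A\vec{\mathbf 1}\}$ with $\mathrm{rank}(A)\le m-2$, leaving two nonnegative solutions that disagree on the minimum coordinate. The only difference is in the final linear-algebra step. The paper does it by hand: it pads to exactly $m-2$ independent equations, argues that two coordinates $w,z$ remain free subject only to $w+z=2$, and exhibits the pair $(1/2,3/2)$ versus $(3/2,1/2)$. You instead invoke Lemma~\ref{lem:n_value_queries} directly, which is more modular (it reuses machinery already set up for Theorem~\ref{thm-deterministic}) and avoids the somewhat informal ``two free variables'' passage in the paper's version. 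Either way the argument is the same in spirit; your packaging is cleaner.
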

\begin{proof}
We construct a hard valuation $v$ ``on the fly'' by observing the queries that $A$ makes and (partially) defining $v$ appropriately ($A$ is deterministic so the queries that it makes are only a function of the values that were returned in the previous queries). We will see that for every set of queries that $A$ makes, after $m-2$ queries there are at least two ways to complete the definition of $v$, and each completion has a different minimal-value item. Thus, $A$ cannot determine the minimal item after $m-2$ queries.

Specifically, in each of the first $m-2$ queries that $A$ makes, we partially define $v$ as follows: whenever the algorithm queries $v(S)$, set $v(S)=|S|$. Note that this construction is indeed valid in the sense that it is easy to extend it to a fully defined additive valuation (e.g., by setting $v(S)=|S|$ for every bundle $S$). 

The queries and answers naturally define a set of $m-2$ linear equations with $m$ variables, each equation has the form $\Sigma_{j\in S}x_j=|S|$, for some $S$. If there are less than $m-2$ independent rows in the corresponding matrix, we add some independent rows of the same form. Note that the newly added rows correspond to additional queries that the algorithm makes. 

Since there are $m$ equations and $m-2$ variables, the matrix has infinitely many solutions. In particular, there are $m-2$ variables such that if we set each of those variable 
to $1$, there are still two free variables, denoted $w,z$. Note that $w,z$ are constrained by at most one linear equation, and since $x_i=1$ for every $i$ is a valid solution, this equation must take the form $w+z=2$. Thus, both $w=\frac 1 2, z=2-\frac 1 2$, and $w=2-\frac 1 2, z=\frac 1 2$ complete setting every other variable to $1$ to two different valid solutions to the set of equations. Moreover, each of these solutions defines an additive valuation with a different minimal-value item. 
Therefore, $A$ cannot distinguish between these two valuations and find the minimal-value item if it makes less than {$m-1$} 
queries. 
\end{proof}

\begin{proposition}\label{pre:random_lb_additive_information} 
    Let $A$ be a randomized algorithm that given an additive valuation $v$ makes $q$ value queries, and with a constant positive probability returns a \opt{(m/2)} set. 
    Then, the algorithm makes in expectation at least $\Om{\frac{m}{\log m}}$ value queries.
\end{proposition}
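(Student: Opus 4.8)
The plan is to combine Yao's minimax principle with a direct counting (information-theoretic) argument. I would use the following hard distribution $\mathcal{D}$ over additive valuations (assume $m$ is even): draw a set $G\subseteq\items$ with $|G|=m/2$ uniformly at random, and set $v_j=1$ for $j\in G$ and $v_j=0$ for $j\notin G$. Two features of this distribution drive the proof. First, the unique $(m/2)$-optimal set is $G$ itself: for $|S|=m/2$ we have $v(S)=|S\cap G|\le m/2$, with equality iff $S=G$. Second, a value query on a bundle $S$ returns exactly $|S\cap G|$, which takes at most $m/2+1$ distinct values; thus each query conveys only $O(\log m)$ bits about $G$.

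Next I would reduce to a deterministic algorithm that makes few queries. Suppose some randomized algorithm makes at most $q$ value queries in expectation (worst case over inputs) and succeeds with probability at least $c$ for some constant $c>0$. Truncate it after $q_0:=4q/c$ queries, declaring failure if it would query more; by Markov's inequality this truncation decreases the per-input success probability by at most $c/4$, so the truncated algorithm still succeeds with probability at least $3c/4$ on every input while making at most $q_0$ value queries. By Yao's principle, there is then a deterministic algorithm $A$ making at most $q_0$ value queries with $\Pr_{v\sim\mathcal{D}}[A(v)=G]\ge 3c/4$. (The only slightly delicate point here is sequencing the Markov truncation before invoking Yao, so that a single deterministic algorithm is simultaneously query-bounded and accurate.)

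For the counting step, observe that since $A$ is deterministic and makes at most $q_0$ queries, and each answer lies in a set of size at most $m/2+1\le m+1$, the transcript of $A$ on a valuation in the support of $\mathcal{D}$ takes at most $(m+1)^{q_0}$ distinct values, and $A$'s output is a function of this transcript. Writing $G_\tau$ for the set that $A$ outputs on transcript $\tau$, we get
\[
\Pr_{v\sim\mathcal{D}}[A(v)=G]=\sum_\tau\Pr[\mathrm{transcript}(v)=\tau,\ G=G_\tau]\le\sum_\tau\Pr[G=G_\tau]\le\frac{(m+1)^{q_0}}{\binom{m}{m/2}}.
\]
Combining with the lower bound $3c/4$ on the success probability and the estimate $\binom{m}{m/2}\ge 2^m/(m+1)$, and taking logarithms, yields $q_0\log_2(m+1)\ge m-\log_2(m+1)+\log_2(3c/4)$, so $q_0=\Omega(m/\log m)$, and therefore $q=(c/4)\,q_0=\Omega(m/\log m)$, as claimed.

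I do not expect a substantial obstacle in this argument; it is a clean counting bound. The one conceptual point worth stating is why a $\log m$ factor (rather than a constant) separates this bound from the trivial $\Omega(m)$ information lower bound: a value query on $\mathcal{D}$ reveals only the scalar $|S\cap G|$, not the identity of the queried items inside $G$, so each query carries $\Theta(\log m)$ bits and $\Theta(m/\log m)$ of them are needed to pin down one of the $\binom{m}{m/2}=2^{\Theta(m)}$ candidate sets. If one is uncomfortable with zero item values, the same proof goes through verbatim with $v_j=1+\mathbb{1}[j\in G]$, since a value query on $S$ then returns $|S|+|S\cap G|$, from which the algorithm (which knows $|S|$) recovers $|S\cap G|$.
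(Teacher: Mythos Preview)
Your proposal is correct and follows essentially the same approach as the paper: the same hard distribution (uniform $0/1$ additive valuations with a random size-$m/2$ support), Yao's principle, and a transcript-counting argument using that each answer carries only $O(\log m)$ bits. The only notable difference is that you explicitly perform a Markov truncation to pass from an expected query bound to a worst-case one before invoking Yao, whereas the paper leaves this step implicit; your treatment is slightly more careful on this point but the arguments are otherwise the same.
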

\begin{proof}
Consider the following distribution over additive valuations: choose at random set of $m/2$ items and set the value of each item in the set to $1$. Set the value of the rest of the items to $0$. By Yao's principle, it is enough to show that every deterministic algorithm that succeeds with probability $\frac 2 3$ makes at least $\Om{\frac{m}{\log m}}$ value queries in expectation.

Fixing such a deterministic algorithm, there is a family of $\frac 2 3 \cdot {\binom{m}{\frac{m}{2}}}$ additive valuations on which the algorithm succeeds. We associate each such valuation  with a transcript of the run of the algorithm on this valuation. Such transcript consists of $q$ answers to value queries, each takes $\log \frac m 2$ bits to write down since the possible values are integers between $0$ to $\frac m 2$. Since the algorithm is deterministic we have that the identity of the bundle $S$ that is queried is only a function of the answers to the previous queries. Thus representing each transcript takes $q\cdot \log \frac m 2$ bits. Note that each such transcript must be different for every valuation that the algorithm succeeds on, and thus we have that $\frac 2 3 \cdot\log\binom{m}{m/2} \leq  q\cdot\log(m/2)$, and since $\binom{m}{m/2}>\frac{2^{m}}{m+1}$ we have that $q\geq \frac{2}{3} \cdot\frac{m-\log(m+1)}{\log(m/2)}=\Om{\frac{m}{\log m}}$ as needed.
\end{proof}

\begin{proposition}\label{pre:random_lb_additive_DISJ} 
    Let $A$ be a randomized algorithm that given an additive valuation $v$ makes $q$ value queries and returns a maximal value item (\opt{1} set) with a constant positive probability. Then, the algorithm makes in expectation at least $ \Om{\frac{m}{\log m}}$ value queries.
\end{proposition}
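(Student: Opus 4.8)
The plan is to prove this via a reduction from the two-party communication complexity of set disjointness (\textsc{Disj}), following the template sketched in the introduction's discussion of the "three types of coins" problem. Suppose Alice holds a set $X \subseteq [n]$ and Bob holds a set $Y \subseteq [n]$, with the promise that $|X \cap Y| \le 1$. I would have them construct an additive valuation on $m = 2n + O(1)$ items as follows: for each coordinate $i \in [n]$ allocate two items, $a_i$ and $b_i$, and a small number of ``anchor'' items of known value. The intended values are: $v(a_i) = 2$ if $i \in X$ and $v(a_i) = 1$ otherwise; $v(b_i) = 2$ if $i \in Y$ and $v(b_i) = 1$ otherwise; and we add one fixed item $c$ of value exactly $3$. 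Then the unique maximum-value item is $c$ \emph{unless} there is some coordinate $i \in X \cap Y$, in which case bundling the information appropriately would make the pair's effective value $4 > 3$. The cleanest version is to instead have the value of item $a_i$ equal $1 + \mathbf 1[i\in X] + \mathbf 1[i \in Y]$, which an additive valuation cannot directly express since $a_i$ belongs to only one party; so more carefully, I would use items indexed by $[n]$ where item $i$ has value $2 + \mathbf{1}[i\in X \cap Y]$ realized by giving Alice control of an additive perturbation $+\mathbf 1[i\in X]$ and Bob control of $+\mathbf 1[i \in Y]$ on a shared item, plus a fixed ``threshold'' item of value exactly $\tfrac{5}{2}$. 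Finding the max-value item then reveals whether $X \cap Y = \emptyset$.

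The key steps, in order: (1) Set up the reduction so that each value query $v(S)$ made by the algorithm can be answered using $O(\log m)$ bits of communication — Alice knows her contribution $\sum_{i \in S}\mathbf 1[i\in X]$ and Bob knows his, and each is an integer in $\{0,\dots,|S|\}$, so a constant number of $\log m$-bit messages suffice to compute $v(S)$; thus $q$ value queries simulate to a protocol with $O(q \log m)$ bits of communication. (2) Argue that an algorithm returning a $1$-optimal set (a maximum-value item) with constant probability lets the players decide \textsc{Disj} with constant probability: the returned item is the threshold item iff $X\cap Y=\emptyset$ (and is one of the ``doubled'' items iff the intersection is the single promised element). (3) Invoke the classical $\Omega(n)$ randomized communication lower bound for \textsc{Disj} on instances with intersection size at most $1$ (Kalyanasundaram–Schnitger / Razborov / Bar-Yossef et al.), which gives $\Omega(q \log m) \ge \Omega(m)$, hence $q = \Omega(m/\log m)$. (4) Convert the worst-case-input randomized communication bound to an expected-query-count bound via a standard Markov/Yao argument: if the expected number of queries were $o(m/\log m)$, truncating runs that exceed twice the expectation would still solve \textsc{Disj} with constant probability using $o(m)$ communication.

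The main obstacle is step (1) combined with getting the arithmetic of the gadget exactly right so that the reduction is faithful under the disjointness \emph{promise} ($|X\cap Y|\le 1$), rather than arbitrary $X,Y$ — in particular ensuring that when $X\cap Y = \emptyset$ the threshold item is the \emph{strict} unique maximizer, and when $|X\cap Y|=1$ some doubled item strictly beats it, with no ties that an adversarial tie-break could exploit. A secondary subtlety is that the problem statement allows only value queries (not demand queries), which is what makes the clean $O(\log m)$-bits-per-query simulation possible; I would note explicitly that this is why the bound here is information-theoretic and matches Proposition~\ref{pre:random_lb_additive_information} up to constants, while the separate $\Omega(\sqrt m)$ bound for demand queries (Theorem~\ref{thm-deterministic}) requires the entirely different linear-algebraic argument already given.
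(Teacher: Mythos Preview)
Your approach is correct and is essentially the same as the paper's: reduce from the randomized communication complexity of set disjointness, simulate each value query with $O(\log m)$ bits by having Alice and Bob exchange their partial sums over the queried set, and invoke the $\Omega(m)$ lower bound. The paper's gadget is a bit cleaner---it takes $v(\{i\}) = x_i + y_i$ directly (no base value, no extra threshold item, and no unique-intersection promise needed), and after the algorithm outputs an item $i^\star$ the players simply check whether $x_{i^\star}=y_{i^\star}=1$.
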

\begin{proof}
We prove our lower bound using a reduction to the Set Disjointness problem.
In this problem, Alice and Bob, are given two input vectors $x=(x_1,...,x_m),~y=(y_1,...y_m)\in\set{0,1}^m$ respectively, and they wish to determine whether there exists an index $i$ such that $x_i=y_i=1$. The randomized communication complexity of DISJ is $\Om{m}$, see, e.g., \cite{razborov1990distributional}.


Given an input $x,y\in\set{0,1}^m$ for the disjointness problem, define an additive valuation by setting the value of each item $w_i$ to be $v(\{w_i\})=x_i+y_i$. We follow the run of $A$ and simulate the value queries that $A$ makes: when $A$ queries the value of the bundle $S$, Alice uses $\log m$ bits to send $\Sigma_{j\in S}x_j$ and Bob uses $\log m$ bits to send $\Sigma_{j\in S}y_j$. The value $v(S)$ is simply the sum of these two numbers. Thus we have that if $A$ makes $q$ queries then it can be simulated for this family of instances by a communication protocol that takes ${2\cdot }q\cdot \log m$ bits.

Observe that the maximal value of a single item in $v$ equals to $2$ if and only if there exists an index $i$ for which the $i$-th variable has value $1$ for both Alice and Bob. Therefore, the communication protocol must use $\Omega(m)$ bits, which immediately implies that $q= \Om {\frac m {\log m}}$.
\end{proof}

	\bibliographystyle{plainnat}
    \bibliography{oracle_complexity} 

\begin{thebibliography}{48}
\providecommand{\natexlab}[1]{#1}
\providecommand{\url}[1]{\texttt{#1}}
\expandafter\ifx\csname urlstyle\endcsname\relax
  \providecommand{\doi}[1]{doi: #1}\else
  \providecommand{\doi}{doi: \begingroup \urlstyle{rm}\Url}\fi

\bibitem[Abdallah et~al.(2021)Abdallah, Asadpour, and Reed]{abdallah2021large}
Tarek Abdallah, Arash Asadpour, and Josh Reed.
\newblock Large-scale bundle-size pricing: A theoretical analysis.
\newblock \emph{Operations Research}, 2021.

\bibitem[Alaei(2014)]{alaei2014bayesian}
Saeed Alaei.
\newblock Bayesian combinatorial auctions: Expanding single buyer mechanisms to
  many buyers.
\newblock \emph{SIAM Journal on Computing}, 43\penalty0 (2):\penalty0 930--972,
  2014.

\bibitem[Alaei et~al.(2019)Alaei, Hartline, Niazadeh, Pountourakis, and
  Yuan]{alaei2019optimal}
Saeed Alaei, Jason Hartline, Rad Niazadeh, Emmanouil Pountourakis, and Yang
  Yuan.
\newblock Optimal auctions vs. anonymous pricing.
\newblock \emph{Games and Economic Behavior}, 118:\penalty0 494--510, 2019.

\bibitem[Assadi and Singla(2019)]{assadi2019improved}
Sepehr Assadi and Sahil Singla.
\newblock Improved truthful mechanisms for combinatorial auctions with
  submodular bidders.
\newblock In \emph{2019 IEEE 60th Annual Symposium on Foundations of Computer
  Science (FOCS)}, pages 233--248. IEEE, 2019.

\bibitem[Babaioff et~al.(2018)Babaioff, Nisan, and Rubinstein]{BNR18}
Moshe Babaioff, Noam Nisan, and Aviad Rubinstein.
\newblock Optimal deterministic mechanisms for an additive buyer.
\newblock In \emph{ACM Conference on Economics and Computation (ACM-EC)}, 2018.

\bibitem[Babaioff et~al.(2020)Babaioff, Immorlica, Lucier, and
  Weinberg]{babaioff2020simple}
Moshe Babaioff, Nicole Immorlica, Brendan Lucier, and S~Matthew Weinberg.
\newblock A simple and approximately optimal mechanism for an additive buyer.
\newblock \emph{Journal of the ACM (JACM)}, 67\penalty0 (4):\penalty0 1--40,
  2020.

\bibitem[Babaioff et~al.(2021)Babaioff, Gonczarowski, and
  Nisan]{babaioff2021menu}
Moshe Babaioff, Yannai~A Gonczarowski, and Noam Nisan.
\newblock The menu-size complexity of revenue approximation.
\newblock \emph{Games and Economic Behavior}, 2021.

\bibitem[Badanidiyuru et~al.(2012)Badanidiyuru, Dobzinski, and
  Oren]{badanidiyuru2012optimization}
Ashwinkumar Badanidiyuru, Shahar Dobzinski, and Sigal Oren.
\newblock Optimization with demand oracles.
\newblock In \emph{Proceedings of the 13th ACM conference on electronic
  commerce}, pages 110--127, 2012.

\bibitem[Blumrosen and Nisan(2010)]{blumrosen2010computational}
Liad Blumrosen and Noam Nisan.
\newblock On the computational power of demand queries.
\newblock \emph{SIAM Journal on Computing}, 39\penalty0 (4):\penalty0
  1372--1391, 2010.

\bibitem[Bshouty(2009)]{bshouty2009optimal}
Nader~H Bshouty.
\newblock Optimal algorithms for the coin weighing problem with a spring scale.
\newblock In \emph{COLT}, volume 2009, page~82, 2009.

\bibitem[Cai and Zhao(2017)]{cai2017simple}
Yang Cai and Mingfei Zhao.
\newblock Simple mechanisms for subadditive buyers via duality.
\newblock In \emph{Proceedings of the 49th Annual ACM SIGACT Symposium on
  Theory of Computing}, pages 170--183, 2017.

\bibitem[Calinescu et~al.(2011)Calinescu, Chekuri, Pal, and
  Vondr{\'a}k]{calinescu2011maximizing}
Gruia Calinescu, Chandra Chekuri, Martin Pal, and Jan Vondr{\'a}k.
\newblock Maximizing a monotone submodular function subject to a matroid
  constraint.
\newblock \emph{SIAM Journal on Computing}, 40\penalty0 (6):\penalty0
  1740--1766, 2011.

\bibitem[Chambers(1971)]{chambers1971algorithm}
JM~Chambers.
\newblock Algorithm 410: partial sorting.
\newblock \emph{Communications of the ACM}, 14\penalty0 (5):\penalty0 357--358,
  1971.

\bibitem[Chawla et~al.(2010)Chawla, Hartline, Malec, and
  Sivan]{chawla2010multi}
Shuchi Chawla, Jason~D Hartline, David~L Malec, and Balasubramanian Sivan.
\newblock Multi-parameter mechanism design and sequential posted pricing.
\newblock In \emph{Proceedings of the forty-second ACM symposium on Theory of
  computing}, pages 311--320, 2010.

\bibitem[Chawla et~al.(2020)Chawla, Teng, and Tzamos]{chawla2020menu}
Shuchi Chawla, Yifeng Teng, and Christos Tzamos.
\newblock Menu-size complexity and revenue continuity of buy-many mechanisms.
\newblock In \emph{Proceedings of the 21st ACM Conference on Economics and
  Computation}, pages 475--476, 2020.

\bibitem[Chu et~al.(2011)Chu, Leslie, and Sorensen]{chu2011bundle}
Chenghuan~Sean Chu, Phillip Leslie, and Alan Sorensen.
\newblock Bundle-size pricing as an approximation to mixed bundling.
\newblock \emph{The American Economic Review}, pages 263--303, 2011.

\bibitem[Cramton(1998)]{cramton1998ascending}
Peter Cramton.
\newblock Ascending auctions.
\newblock \emph{European Economic Review}, 42\penalty0 (3-5):\penalty0
  745--756, 1998.

\bibitem[Djackov(1975)]{djackov1975search}
AG~Djackov.
\newblock On a search model of false coins.
\newblock In \emph{Topics in Information Theory (Colloquia Mathematica
  Societatis Janos Bolyai 16). Budapest, Hungary: Hungarian Acad. Sci}, pages
  163--170, 1975.

\bibitem[Dobzinski et~al.(2006)Dobzinski, Nisan, and
  Schapira]{dobzinski2006truthful}
Shahar Dobzinski, Noam Nisan, and Michael Schapira.
\newblock Truthful randomized mechanisms for combinatorial auctions.
\newblock In \emph{Proceedings of the thirty-eighth annual ACM symposium on
  Theory of computing}, pages 644--652, 2006.

\bibitem[Du et~al.(2000)Du, Hwang, and Hwang]{du2000combinatorial}
Dingzhu Du, Frank~K Hwang, and Frank Hwang.
\newblock \emph{Combinatorial group testing and its applications}, volume~12.
\newblock World Scientific, 2000.

\bibitem[Dughmi et~al.(2014)Dughmi, Han, and Nisan]{dughmi2014sampling}
Shaddin Dughmi, Li~Han, and Noam Nisan.
\newblock Sampling and representation complexity of revenue maximization.
\newblock In \emph{International Conference on Web and Internet Economics},
  pages 277--291. Springer, 2014.

\bibitem[D{\"u}tting et~al.(2011)D{\"u}tting, Fischer, and
  Parkes]{dutting2011simplicity}
Paul D{\"u}tting, Felix Fischer, and David~C Parkes.
\newblock Simplicity-expressiveness tradeoffs in mechanism design.
\newblock In \emph{Proceedings of the 12th ACM conference on Electronic
  commerce}, pages 341--350, 2011.

\bibitem[Eden et~al.(2021)Eden, Feldman, Friedler, Talgam-Cohen, and
  Weinberg]{eden2021simple}
Alon Eden, Michal Feldman, Ophir Friedler, Inbal Talgam-Cohen, and S~Matthew
  Weinberg.
\newblock A simple and approximately optimal mechanism for a buyer with
  complements.
\newblock \emph{Operations Research}, 69\penalty0 (1):\penalty0 188--206, 2021.

\bibitem[Eisenbrand and Shmonin(2006)]{eisenbrand2006caratheodory}
Friedrich Eisenbrand and Gennady Shmonin.
\newblock Carath{\'e}odory bounds for integer cones.
\newblock \emph{Operations Research Letters}, 34\penalty0 (5):\penalty0
  564--568, 2006.

\bibitem[Feige and Vondrak(2006)]{feige2006approximation}
Uriel Feige and Jan Vondrak.
\newblock Approximation algorithms for allocation problems: Improving the
  factor of 1-1/e.
\newblock In \emph{2006 47th Annual IEEE Symposium on Foundations of Computer
  Science (FOCS'06)}, pages 667--676. IEEE, 2006.

\bibitem[Feige and Vondr{\'a}k(2010)]{feige2010submodular}
Uriel Feige and Jan Vondr{\'a}k.
\newblock The submodular welfare problem with demand queries.
\newblock \emph{Theory of Computing}, 6\penalty0 (1):\penalty0 247--290, 2010.

\bibitem[Feldman et~al.(2014)Feldman, Gravin, and
  Lucier]{feldman2014combinatorial}
Michal Feldman, Nick Gravin, and Brendan Lucier.
\newblock Combinatorial auctions via posted prices.
\newblock In \emph{Proceedings of the twenty-sixth annual ACM-SIAM symposium on
  Discrete algorithms}, pages 123--135. SIAM, 2014.

\bibitem[Gonczarowski(2018)]{gonczarowski2018bounding}
Yannai~A Gonczarowski.
\newblock Bounding the menu-size of approximately optimal auctions via
  optimal-transport duality.
\newblock In \emph{Proceedings of the 50th Annual ACM SIGACT Symposium on
  Theory of Computing}, pages 123--131, 2018.

\bibitem[Gul and Stacchetti(2000)]{gul2000english}
Faruk Gul and Ennio Stacchetti.
\newblock The english auction with differentiated commodities.
\newblock \emph{Journal of Economic theory}, 92\penalty0 (1):\penalty0 66--95,
  2000.

\bibitem[Hart and Nisan(2019)]{hart2019selling}
Sergiu Hart and Noam Nisan.
\newblock Selling multiple correlated goods: Revenue maximization and menu-size
  complexity.
\newblock \emph{Journal of Economic Theory}, 183:\penalty0 991--1029, 2019.

\bibitem[Hartline and Roughgarden(2009)]{hartline2009simple}
Jason~D Hartline and Tim Roughgarden.
\newblock Simple versus optimal mechanisms.
\newblock In \emph{Proceedings of the 10th ACM conference on Electronic
  commerce}, pages 225--234, 2009.

\bibitem[Hoare(1961)]{hoare1961find}
C.~A.~R. Hoare.
\newblock Algorithm 65: Find.
\newblock \emph{Communications of the ACM}, 4\penalty0 (7):\penalty0 321–322,
  1961.

\bibitem[Knuth(1974)]{knuth1974asymptotic}
Donald~E Knuth.
\newblock The asymptotic number of geometries.
\newblock \emph{Journal of Combinatorial Theory, Series A}, 16\penalty0
  (3):\penalty0 398--400, 1974.

\bibitem[Kothari et~al.(2019)Kothari, Singla, Mohan, Schvartzman, and
  Weinberg]{kothari2019approximation}
Pravesh Kothari, Sahil Singla, Divyarthi Mohan, Ariel Schvartzman, and
  S~Matthew Weinberg.
\newblock Approximation schemes for a unit-demand buyer with independent items
  via symmetries.
\newblock In \emph{2019 IEEE 60th Annual Symposium on Foundations of Computer
  Science (FOCS)}, pages 220--232. IEEE, 2019.

\bibitem[Lindstrom(1975)]{lindstrom1975determining}
Bernt Lindstrom.
\newblock Determining subsets by unramified experiments.
\newblock \emph{A Survey of Statistical Design and Linear Models}, 1975.

\bibitem[Matousek and G{\"a}rtner(2007)]{matousek2007understanding}
Jiri Matousek and Bernd G{\"a}rtner.
\newblock \emph{Understanding and using linear programming}.
\newblock Springer Science \& Business Media, 2007.

\bibitem[Mishra and Parkes(2007)]{mishra2007ascending}
Debasis Mishra and David~C Parkes.
\newblock Ascending price vickrey auctions for general valuations.
\newblock \emph{Journal of Economic Theory}, 132\penalty0 (1):\penalty0
  335--366, 2007.

\bibitem[Moldovanu and Tietzel(1998)]{moldovanu1998goethe}
Benny Moldovanu and Manfred Tietzel.
\newblock Goethe's second-price auction.
\newblock \emph{Journal of Political Economy}, 106\penalty0 (4):\penalty0
  854--859, 1998.

\bibitem[Musser(1997)]{musser1997introspective}
David~R Musser.
\newblock Introspective sorting and selection algorithms.
\newblock \emph{Software: Practice and Experience}, 27\penalty0 (8):\penalty0
  983--993, 1997.

\bibitem[Nemhauser et~al.(1978)Nemhauser, Wolsey, and
  Fisher]{nemhauser1978analysis}
George~L Nemhauser, Laurence~A Wolsey, and Marshall~L Fisher.
\newblock An analysis of approximations for maximizing submodular set
  functions—i.
\newblock \emph{Mathematical programming}, 14\penalty0 (1):\penalty0 265--294,
  1978.

\bibitem[Nisan and Segal(2006)]{nisan2006communication}
Noam Nisan and Ilya Segal.
\newblock The communication requirements of efficient allocations and
  supporting prices.
\newblock \emph{Journal of Economic Theory}, 129\penalty0 (1):\penalty0
  192--224, 2006.

\bibitem[Razborov(1990)]{razborov1990distributional}
Alexander~A Razborov.
\newblock On the distributional complexity of disjointness.
\newblock In \emph{International Colloquium on Automata, Languages, and
  Programming}, pages 249--253. Springer, 1990.

\bibitem[Ronen(2001)]{Ronen01}
Amir Ronen.
\newblock On approximating optimal auctions.
\newblock In Michael~P. Wellman and Yoav Shoham, editors, \emph{Proceedings 3rd
  {ACM} Conference on Electronic Commerce (EC-2001), Tampa, Florida, USA,
  October 14-17, 2001}, pages 11--17. {ACM}, 2001.
\newblock \doi{10.1145/501158.501160}.
\newblock URL \url{https://doi.org/10.1145/501158.501160}.

\bibitem[Rubinstein and Weinberg(2018)]{rubinstein2018simple}
Aviad Rubinstein and S~Matthew Weinberg.
\newblock Simple mechanisms for a subadditive buyer and applications to revenue
  monotonicity.
\newblock \emph{ACM Transactions on Economics and Computation (TEAC)},
  6\penalty0 (3-4):\penalty0 1--25, 2018.

\bibitem[Rubinstein and Zhao(2021)]{rubinstein2021randomized}
Aviad Rubinstein and Junyao Zhao.
\newblock The randomized communication complexity of randomized auctions.
\newblock In \emph{Proceedings of the 53rd Annual ACM SIGACT Symposium on
  Theory of Computing}, pages 882--895, 2021.

\bibitem[Saxena et~al.(2018)Saxena, Schvartzman, and Weinberg]{saxena2018menu}
Raghuvansh~R Saxena, Ariel Schvartzman, and S~Matthew Weinberg.
\newblock The menu complexity of “one-and-a-half-dimensional” mechanism
  design.
\newblock In \emph{Proceedings of the Twenty-Ninth Annual ACM-SIAM Symposium on
  Discrete Algorithms}, pages 2026--2035. SIAM, 2018.

\bibitem[Thompson and Leyton-Brown(2013)]{thompson2013revenue}
David~RM Thompson and Kevin Leyton-Brown.
\newblock Revenue optimization in the generalized second-price auction.
\newblock In \emph{Proceedings of the fourteenth ACM conference on Electronic
  commerce}, pages 837--852, 2013.

\bibitem[Vondr{\'a}k(2008)]{vondrak2008optimal}
Jan Vondr{\'a}k.
\newblock Optimal approximation for the submodular welfare problem in the value
  oracle model.
\newblock In \emph{Proceedings of the fortieth annual ACM symposium on Theory
  of computing}, pages 67--74, 2008.

\end{thebibliography}
	\appendix	
	\section{Missing proofs from Section~\ref{sec:algorithms}}\label{sec:algorithms_app}

\subsection{Proof of Theorem \ref{thm:findmaxk}}\label{sec-additive}


The following two lemmas will be useful for constructing an algorithm to find \opt{k} set for additive valuations with small number of value and demand queries. 

	\begin{lemma}\label{lem:random-larger}
	For a subadditive valuation $v$ over a set $\items$ of $m$ items and a threshold $t$. 
    Let $L$ be the set of elements of value (as a singleton) larger than $t$, that is $L= \{ k\in \items| v({k})>t\}$.

	There exists an algorithm such that for any $v$ and $t$, if $L$ is not empty returns an element from $L$ picked uniformly at random. The algorithm makes $\O{\log m}$ value and demand queries.
	
    \end{lemma}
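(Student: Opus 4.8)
The plan is as follows. If the algorithm draws a uniformly random permutation $s_1,\dots,s_m$ of the items, then the first element of $L$ in this order, namely $s_{i^*}$ with $i^*=\min\{i : \{s_1,\dots,s_i\}\cap L\neq\emptyset\}$, is distributed uniformly on $L$ (whenever $L\neq\emptyset$). So it suffices to locate $i^*$, which I would do by binary search over the index $i$: the prefixes $P_i=\{s_1,\dots,s_i\}$ have the monotone property that $P_i\cap L\neq\emptyset$ implies $P_{i+1}\cap L\neq\emptyset$, and $P_m=\items$ intersects $L$, so there is a well-defined least such index and $s_{i^*}$ is then the unique newly-added element responsible for the intersection becoming non-empty, hence $s_{i^*}\in L$.

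The one ingredient the binary search needs is a subroutine $\textsc{HasL}(S')$ deciding, for any $S'\subseteq\items$, whether $S'\cap L\neq\emptyset$ using $O(1)$ queries. I would implement it as follows. Query $v(\items)$ once at the outset (and reuse its value), then issue a demand query with price $t$ on every item of $S'$ and price $v(\items)+1$ on every item outside $S'$. By monotonicity and non-negativity, no item outside $S'$ can lie in a demanded set (its marginal contribution is at most $v(\items)$, below its price), so any demanded set $D$ satisfies $D\subseteq S'$ and realizes $\max_{T\subseteq S'}\bigl(v(T)-t|T|\bigr)$. Since $v$ is subadditive, $v(T)\le\sum_{j\in T}v(\{j\})$, so this maximum is strictly positive if and only if some $j\in S'$ has $v(\{j\})>t$, i.e.\ iff $S'\cap L\neq\emptyset$; therefore $\textsc{HasL}(S')$ simply returns the predicate ``$v(D)>t|D|$'' (the value $v(D)$ is reported together with $D$ by the demand query). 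A point I would stress is that this predicate is correct no matter how the demand oracle breaks ties, because \emph{every} valid demanded set has the same profit $\max_{T\subseteq S'}(v(T)-t|T|)$; hence the test — and the whole binary search built on it — works against an arbitrary, even adversarial, implementation of the demand query.

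Putting it together: the algorithm calls $\textsc{HasL}(\items)$ first (reporting $L=\emptyset$ if it fails), then binary-searches for $i^*$ with $\lceil\log_2 m\rceil$ further calls to $\textsc{HasL}$, and returns $s_{i^*}$. The total cost is one value query plus $O(\log m)$ calls each using a single demand query, i.e.\ $\O{\log m}$ value and demand queries in the worst case; randomness is used only to generate the permutation, and, for every outcome of the oracle's tie-breaking, $s_{i^*}$ lies in $L$ and is distributed exactly uniformly on $L$.

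The main obstacle — essentially the only non-bookkeeping step — is making the ``$S'\cap L\neq\emptyset$'' test robust. A naive uniform demand query at price $t$ may return a non-empty set merely because of a profit-$0$ tie even when $S'\cap L=\emptyset$, and for subadditive (indeed even additive) valuations it need not return all of $L$, so one cannot read membership off the returned set. The fix above — pricing items outside $S'$ prohibitively so the demand is confined to $S'$, and testing the strict inequality $v(D)>t|D|$ rather than non-emptiness of $D$ — circumvents both issues, and it is exactly here that subadditivity (to pass from a profitable bundle to a profitable singleton) and monotonicity (to confine the demand to $S'$) are used.
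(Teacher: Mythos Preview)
Your proof is correct. The subroutine $\textsc{HasL}$ is sound: pricing items outside $S'$ at $v(\items)+1$ forces $D\subseteq S'$ (since any bundle containing such an item has profit below $-1<0$), and the subadditivity argument that $\max_{T\subseteq S'}(v(T)-t|T|)>0$ iff $S'\cap L\neq\emptyset$ is exactly right and tie-breaking-robust. The ``first $L$-element in a uniformly random permutation is uniform on $L$'' fact makes the uniformity immediate, and the binary search uses $\lceil\log_2 m\rceil$ calls, so the query bound is even worst-case.

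The paper takes a related but different route: it recursively splits the current candidate set $N$ into two random near-equal halves, runs a (restricted) uniform demand query on each to test whether that half meets $L$, discards any half that does not, and then picks one of the surviving halves uniformly at random to recurse on. The uniformity argument there is that at every branching step either only one half survives (and no $L$-element is discarded) or both survive and each is chosen with probability $\tfrac12$, so every $L$-element is equally likely to remain. Your permutation-plus-binary-search approach front-loads all the randomness and reduces the uniformity claim to a one-line symmetry fact; it also makes a single demand query per level rather than two, and you are more explicit than the paper about why the $v(D)>t|D|$ test is immune to adversarial tie-breaking. Conversely, the paper's halving scheme is slightly more self-contained in that it does not need to precompute $v(\items)$ to set a prohibitive price. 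Both arguments yield a deterministic $O(\log m)$ query bound.
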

    \begin{proof}
    Denote the set returned by a demand query with uniform price $t$ by $D$. 
    If $t\cdot |D|=v(D)$ we have that the profit from $D$ is zero ($v(D)-p(D)=0$) which implies that there is no item $d\in \items$ such that $v(d)>t$, otherwise the profit from buying item $d$ alone is positive, contradicting that $D$ is a set in demand. Thus, if $t\cdot |D|=v(D)$ then $L$ is empty and the  algorithm terminates.
    
    Otherwise, $v(D)>t\cdot |D|$. By subadditivity, there exists $d\in D$ such that $v(d)>t$ and thus $L$ is non-empty.
    Given $t$, the algorithm works as follows:
    It keeps a set $N$ containing all items who might be in $L$, initiated as $N=\items$. It then goes iteratively:
    \begin{itemize}
    \item  If $|N|=1$, the single element of $N$ is in $L$, and we return it. 
    \item Otherwise, split the set $N$ into two random sets, $N_0$ and $N_1$, of sizes as equal as possible (equal up to one item, that is, $||N_0|-|N_1||\leq 1$).
    For each one of the two sets $N_0$ and $N_1$, the algorithm makes a uniform price demand query at price $t$. Denote the returned sets by $D_0$ and $D_1$.
	If the returned set $D_i$ is empty or satisfies $t\cdot |D_i|=v(D_i)$, as before, the set contains no element from $L$, and is discarded.
	As $v(D)>t\cdot |D|$, by subadditivity, $v(D_i)>t\cdot |D_i|$ for at least one $i\in\{0,1\}$, so at least one of them is not discarded.
	Thus, $D_i$ is discarded if and only if it does not intersect $L$. 
	We pick one of these non-discarded sets at random. Denote it corresponding set $N_i$ by $N$, and recursively run the algorithm on this new set $N$ and threshold $t$. 
	\end{itemize}

	First, observe that the algorithm terminates in  $\O{\log m}$ demand queries, as the size of $D$ shrinks by a constant factor (about $2$) at every iteration. 
	
	Finally, we observe that if $L$ is non-empty then every element in $L$ has the same probability of being picked. First observe that only elements in $L$ are ever returned. 
	Secondly, at each iteration, one set $N_i$ is discarded. It can be since $v(D_i)=t\cdot|D_i|$, in that case no item from $L$ is discarded.
	In the other case, both $N_0$ and $N_1$ contains items from $L$ and the probability any item from $L$ to be discarded in this round is equal.
    As this claim is true for every $j\in L$, all elements in $L$ have the same probability of being selected. 
    \end{proof}

\cout{
Similarly, for subadditive valuations we can sample a random element smaller than a give threshold with $\O{\log m}$ demand queries in expectation.

\begin{lemma}\label{lem:random-smaller}
	Fix any subadditive \mbc{might only be true for additive} valuation $v$ over a set $\items$ of $m$ items and a threshold $t$. 
    Let $Z$ be the set of elements of value (as a singleton) smaller than $t$, that is $Z= \{ k\in \items| v({k})<t\}$.
	There is a randomized algorithm such that if $Z$ is not empty it returns an element from $Z$ picked uniformly at random, in $\O{\log m}$ demand queries in expectation.
\end{lemma}
\begin{proof}

    Denote the set returned by a demand query with uniform price $t$ by $D$.
    By subadditivity, any item $d\in D$ satisfies that $v(d)\geq v(D)-v(D\setminus\{d\})\geq t$, so it is not in $Z$.
    Let $\bar{D} = \items\setminus D$ be the complement of $D$. Every item in $Z$ must be in $\bar{D}$. 

    Given $\bar{D}$ and $t$, the algorithm works as follows:
    \begin{itemize}
    \item  If $|\bar{D}|=1$: if $v(\bar{D})<t$ we return the single element of $\bar{D}$ as it is in $Z$, otherwise the algorithm terminates ($Z$ is empty). 
    \item Otherwise, split the set $\bar{D}$ into two random sets, $D_0$ and $D_1$, of sizes as equal as possible (equal up to one item, that is, $||D_0|-|D_1||\leq 1$).
	For each one of the two sets $D_0$ and $D_1$, the algorithm makes a uniform price demand query at price $t$, and denote the returned sets by $D'_0$ and $D'_1$, and their complements by 
	$\bar{D}'_0$ and $\bar{D}'_1$ ($\bar{D}'_i = D_i\setminus D'_i$ for $i\in \{0,1\}$).
	By subadditivity, for $i\in \{0,1\}$, any item $d\in D'_i$ satisfies that $v(d)\geq v(D'_i)-v(D'_i\setminus\{d\})\geq t$, so it is not in $Z$, and thus any item $d\in D_i$ that is in $Z$ must be in $\bar{D}'_i$. 
	If $\bar{D}'_i=t\cdot |\bar{D}'_i|$, by additivity\mbc{this seems to hold for additive but not for subadditive} it contains no item in $Z$, and is discarded.
	We pick one of $\bar{D}'_0$ and $\bar{D}'_1$ that was not discarded,  uniformly at random, and 
	then iterate by running the algorithm 
	on a chosen set as the new $\bar{D}$, with threshold $t$.
	\end{itemize}

	First, observe that the algorithm terminates in  $\O{\log m}$ demand queries, as the size of $\bar{D}$ shrinks by a constant factor at every iteration. Also observe that if $Z$ is non-empty then every element in $Z$ has the same probability of being picked. First observe that only elements in $Z$ are ever returned. 
	Secondly, at every iteration, a set $\bar{D}'_i$ that contains an element in $Z$ is never discarded \mbc{this seems to only hold for additive.}. Finally, 
	for every element $j\in Z$, if at some iteration $j\in \bar{D}'_i$ for some $i\in \{0,1\}$, then it must be the case that this set was considered for the next iteration. 
	If $\bar{D}'_{1-i}$ was empty than $\bar{D}'_i$ was used in the next iteration, and if not, both sets had equal chance of being picked for the next iteration.
	As this claim is true for every $j\in Z$, all elements in $Z$ have the same probability of being selected. \end{proof}
}
If the valuation is additive, we can use Lemma \ref{lem:random-larger} to sample a random element \emph{smaller} than a given threshold in $\O{\log m}$ demand queries.
\begin{lemma}
\label{lem:random-smaller-addiitive}
    For an additive valuation $v$ over a set $\items$ of $m$ items and threshold $t$, let $Z$ be the set of elements of value (as a singleton) smaller than $t$, that is $Z= \{ k\in \items| v({k})<t\}$.
	
    There exists a randomized algorithm such that for any $v$ and $t$, if $Z$ is not empty returns an element from $Z$ picked uniformly at random, using $\O{\log m}$ value and demand queries in expectation.
	
\end{lemma}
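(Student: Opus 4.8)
The plan is to reduce sampling a uniformly random element \emph{below} the threshold $t$ to the already-established Lemma~\ref{lem:random-larger}, which samples a uniformly random element \emph{above} a threshold. The key observation is that for an additive valuation, negating values flips the order of items, so ``smaller than $t$'' becomes ``larger than $-t$'' — but we cannot feed a negated valuation to the oracles directly, since valuations must be non-negative and monotone. Instead, I would exploit the structure of demand queries. First I would run a uniform-price demand query at price $t$, obtaining a demanded set $D$. By additivity, every item $j$ with $v(j) > t$ must lie in $D$ and every item with $v(j) < t$ must lie in $\items \setminus D$; items with $v(j) = t$ may go either way, but that does not matter for identifying the set $Z$ of strictly-cheap items, which is contained in $\items\setminus D$ (and we can include boundary items harmlessly or detect them).

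The main step is then: restricted to the items in $\bar D = \items \setminus D$, I want to sample uniformly from those of value strictly less than $t$. Here I would mimic the recursive halving scheme of Lemma~\ref{lem:random-larger} but ``in reverse'': split the current candidate set $N$ (initially $\bar D$) into two random halves $N_0, N_1$ of nearly equal size; for each half $N_i$ run a uniform-price demand query at price $t$ and let $D_i$ be the returned set. By additivity, $D_i$ consists exactly of the items in $N_i$ with value $\ge t$ (up to tie-breaking on value-$t$ items), so $N_i \setminus D_i$ contains all the strictly-cheap items of $N_i$, and we can test whether $N_i \setminus D_i$ contains a strictly-cheap item by comparing $v(N_i\setminus D_i)$ (obtained from a value query, or from the value returned by the demand query together with one more value query) to $t\cdot|N_i\setminus D_i|$: strict inequality $v(N_i\setminus D_i) < t\cdot|N_i\setminus D_i|$ witnesses a strictly-cheap element. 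Discard any half whose ``cheap remainder'' contains no strictly-cheap element; pick uniformly at random among the surviving halves and recurse on its cheap remainder. When the candidate set has size one and that element has value $< t$, return it; when no element anywhere is strictly below $t$, report that $Z$ is empty.

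For correctness I would argue, exactly as in Lemma~\ref{lem:random-larger}, two things: (i) only elements of $Z$ are ever returned, since at every level we restrict to the ``below $t$'' side and the size-one base case checks $v(\cdot) < t$ explicitly; and (ii) uniformity — at each recursion step, a half containing an element of $Z$ is never discarded, and whenever both halves contain elements of $Z$ they are equally likely to be chosen, so by symmetry every $j \in Z$ survives each round with the same probability, and hence is returned with the same probability. For the query bound, each recursion level makes $\O{1}$ demand and value queries, and the candidate set shrinks by a factor of roughly $2$ each level (the cheap remainder of a chosen half is at most half the size of the parent, plus the usual rounding slack), so $\O{\log m}$ levels suffice, giving $\O{\log m}$ value and demand queries in expectation — indeed in the worst case, since the halving is deterministic in size.

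The main obstacle, as in the companion lemma, is the behavior of value-$t$ items under adversarial tie-breaking in the demand oracle: such an item might be placed in $D_i$ or excluded from it arbitrarily, so the set $N_i\setminus D_i$ is not cleanly ``items of value $< t$'' but only ``items of value $\le t$, minus some adversarial subset of the value-$t$ items.'' I expect to handle this exactly the way Lemma~\ref{lem:random-larger} handles it: the discard test uses the \emph{strict} inequality $v(N_i\setminus D_i) < t\cdot|N_i\setminus D_i|$, which is insensitive to how value-$t$ items are distributed — a half is kept iff its remainder genuinely contains an element strictly below $t$ — and value-$t$ items, never being in $Z$, are irrelevant to the uniformity argument over $Z$. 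A minor additional point to check is that restricting the oracles to a subset $N_i$ of items is legitimate in this query model (it is: set the prices of items outside $N_i$ to $\infty$), which the paper already uses implicitly elsewhere.
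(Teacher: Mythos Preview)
Your approach is correct, but it takes a different route from the paper's. The paper gives a one-line reduction: set $W=v(\items)+1$ and define the additive valuation $v'$ by $v'_j=W-v_j$; then $Z=\{j:v_j<t\}=\{j:v'_j>W-t\}$, so one simply invokes Lemma~\ref{lem:random-larger} on $v'$ with threshold $W-t$. Queries on $v'$ are simulated by queries on $v$: $v'(S)=|S|\cdot W - v(S)$, and a demand query on $v'$ at prices $p$ is answered by taking the complement of a demand query on $v$ at prices $W-p$ (for additive valuations the demanded set at prices $q$ is exactly the items with $v_j>q_j$, up to ties, so complementation flips the inequality).

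You actually spotted this idea in your first paragraph and dismissed it (``we cannot feed a negated valuation to the oracles directly, since valuations must be non-negative and monotone''); the paper's point is precisely that the shift $W-v_j$ fixes both objections while preserving additivity. Your alternative---re-running the halving scheme of Lemma~\ref{lem:random-larger} in mirror image---works fine, and your uniformity and $\O{\log m}$ arguments go through. One incidental simplification available in your version: once the initial demand query has stripped out all items of value $>t$, every remaining item has value $\le t$, so at each subsequent level the per-half demand query is unnecessary---the test $v(N_i)<t\cdot|N_i|$ already detects whether $N_i$ meets $Z$. The paper's reduction buys brevity and avoids re-proving uniformity; your direct construction is self-contained and, with the simplification just noted, uses only one demand query plus $\O{\log m}$ value queries.
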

\begin{proof}
 Let $W$ be a large enough constant, say $W=v(\items)+1$.
 Consider the additive valuation $v'$ in which for every item $j\in \items$ the value of $j$ is $v'_j=W-v_j$.
 Picking a random item from $Z= \{ k\in \items| v({k})<t\}$ is equivalent to picking a random item from 
 $L'= \{ k\in \items| v'({k})> W-t\}$.
 We now use Lemma \ref{lem:random-larger} with valuation $v'$ and threshold $W-t$ to sample a random item from $L'$ (or equivalently, a random element form $Z$). 
 To use the lemma we observe that both value and demand queries for $v'$ can be simulated by value and demand queries on $v$. 
 For value query on a set $S$ the value $v'(S)$ is simply $v'(S) = |S|\cdot W-v(S)$, and a demand query on $v'$ with price $p_j$ for item $j$ is replaced by a demand query on $v$ with price $W-p_j$.
\end{proof}

\cout{
\begin{lemma}\label{lem:next-larger}
	Fix any additive \mbc{is it true for subadditive?} valuation $v$ over a set $\items$ of $m$ items and a threshold $t$. 
	There is an algorithm that in $\O{\log^2 m}$ demand queries in expectation returns an element $j\in \items$ that has maximal value smaller than $t$, if such element exists. That is, $v_j<t$ and $v_j\geq v_k$ for any item $k$ for which $v_k<t$. 
	
	Similarly, there is an algorithm that in $\O{\log^2 m}$ demand queries in expectation returns an element $j\in \items$ that has minimal value larger than $t$, if such element exists. That is, $v_j>t$ and $v_j\leq v_k$ for any item $k$ for which $v_k>t$. 
\end{lemma}
\begin{proof}
We present an algorithm that return an element $j\in \items$ that has maximal value smaller than $t$, if such element exists.

\mbc{revise:} The algorithm works as follows. Given a subadditive valuation $v$ over a set $\items$ of size $m$, the algorithm selects an item $j\in \items$ uniformly at random. 
It then runs the following iterative procedure:
\begin{itemize}

	\item 
	Let $Z= \{ k\in \items| v({k})<t\}$ be the set of items such that each item has value strictly smaller than $t$. 
    We use Lemma \ref{lem:random-smaller}
	to pick an element $i$ from $Z$, uniformly at random, in $\O{\log m}$ demand queries in expectation. 
    If no item was picked (the algorithm of the lemma terminated without returning an element), 
    we check if $v_j<t$, if so we return item $j$, and otherwise terminate. 
    Otherwise an element $i$ was picked from $Z$,  
	we replace $j$ by $i$, set $t=v(\{i\})$, and reiterate (recomputing $Z$ and so on). 
	\end{itemize}

    We first observe that the algorithm indeed returns an item of maximal value that is smaller than $t$. 
    This is true as the value of $j$ monotonically increases and if the algorithm terminates and return $j$ it is because $L$ is empty, or equivalently, there is no other item has higher value (so $j$ is maximal). 

	We claim that within $\O{\log m}$ iterations the algorithm terminates, as at each iteration the set of items of value above $t$ shrinks by factor of two, in expectation. As each iteration uses $\O{\log m}$ demand queries in expectation, the algorithm uses $\O{\log^2 m}$  demand queries in expectation. 
\end{proof}

An immediate corollary of the lemma is that we can find all items of value exactly $t$ in $\O{\log^2 m}$ queries in expectation.

\begin{lemma}\label{lem:findall}
	Given an additive valuation $v$ over a set $\items$ of size $m$, and a value $t\in \R$, there exists a randomized algorithm that finds \textbf{all} of the items with value exactly $t$ in $\items$ and makes $\O{\log^2 m}$ value and demand queries in expectation.
\end{lemma}
\begin{proof}
    Using Lemma \ref{lem:next-larger} we find an element $j\in \items$ that has minimal value larger than $t$, in $\O{\log^2 m}$ demand queries in expectation.
    A demand query at value $(t+v_j)/2$ return the set of items of value strictly larger than $t$. Similarly, using the lemma we find an element $i\in \items$ that has maximal value smaller than $t$, in $\O{\log^2 m}$ demand queries in expectation.
    A demand query at value $(t+v_i)/2$ return a set of all items that have value $t$, or larger. By removing the set of items strictly larger than $t$, we get the set of all items of value exactly $t$.     
\end{proof}

\mbc{OLD lemma, replaced by the above. Should be removed:}
} 

We next use the above lemma to find, for any fixed value $t$, all items of value exactly $t$. 
\begin{lemma}\label{lem:findall}
There exists a randomized algorithm that for any additive valuation $v$ over a set $\items$ of size $m$ and a value $t\in \R$, using value and demand queries finds all items in $\items$ with value exactly $t$. The algorithm makes in expectation $\O{\log^2 m}$ queries.
\end{lemma}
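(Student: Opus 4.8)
The plan is to reduce the task to finding the two ``neighboring values'' of $t$. Write $\{k : v_k = t\} = \{k : v_k \ge t\}\setminus\{k : v_k > t\}$. If $t^+ := \min\{v_k : v_k > t\}$ exists, then a uniform-price demand query at price $\tfrac{t+t^+}{2}$ returns \emph{exactly} $\{k : v_k > t\}$: no item has value in the open interval $(t,t^+)$, so the price equals no item's value and the demanded set is unique regardless of tie-breaking. If instead no item has value above $t$ --- which is detected by a single uniform-price demand query at price $t$ yielding zero profit, since for an additive $v$ the demand profit at uniform price $t$ equals $\sum_k\max(v_k-t,0)$ --- then $\{k : v_k > t\}=\emptyset$. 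Symmetrically, if $t^- := \max\{v_k : v_k < t\}$ exists, a demand query at price $\tfrac{t^-+t}{2}$ returns exactly $\{k : v_k \ge t\}$, and if no item has value below $t$ then $\{k : v_k\ge t\}=\items$. Hence, given $t^+$ and $t^-$ (or certificates that they do not exist), the answer is obtained with $O(1)$ further queries, and it suffices to compute $t^+$ (and, by symmetry, $t^-$) using $O(\log^2 m)$ queries in expectation.

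To find $t^+$ I would run, after the zero-profit test above (which either finishes or guarantees some item has value $> t$), a \textsc{QuickSelect}-style narrowing. Use Lemma~\ref{lem:random-larger} once (at threshold $t$) followed by a value query to obtain a ``ceiling'' $h$ equal to the value of a uniformly random item with value $> t$. Then repeatedly: restrict attention to the items whose value lies in the current window $(t,h)$ --- realized by a demand query at price $h$ to split off the ``$<h$'' side, combined with Lemma~\ref{lem:random-larger} at threshold $t$ applied to that side, using infinite prices to confine the queries to the relevant item set --- sample such an item uniformly at random, and contract $h$ to its value. A standard counting argument (as in the analysis of \textsc{QuickSelect}) shows that the number of items still inside the window shrinks by a constant factor in expectation each round, so after $O(\log m)$ rounds the window is empty, at which point $h=t^+$. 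Since each round costs $O(\log m)$ queries, the total is $O(\log^2 m)$ in expectation. To find $t^-$, apply exactly this procedure to the transformed additive valuation $v'=(W-v_1,\dots,W-v_m)$ with $W:=v(\items)+1$ --- whose value and demand queries are simulated by those of $v$, as in the proof of Lemma~\ref{lem:random-smaller-addiitive} --- using that $\max\{v_k:v_k<t\}$ corresponds to $\min\{v'_k:v'_k>W-t\}$.

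The main obstacle is adversarial tie-breaking in the demand queries used to restrict to a window: when the contracting ceiling $h$ happens to equal the value of other items, a demand query at price $h$ need not cleanly separate those items from the window, which can stall the shrinkage. Making the ``restrict-to-window'' step provably shrink the candidate set by a constant factor in expectation against \emph{every} implementation of the demand oracle is the delicate point, and it is handled using the same tie-breaking techniques developed for the main cardinality-constrained algorithm (Theorem~\ref{thm:findmaxk}); the final demand queries at $\tfrac{t+t^+}{2}$ and $\tfrac{t^-+t}{2}$ are, by contrast, tie-free by construction.
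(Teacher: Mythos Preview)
Your high–level plan—compute $t^{+}$ and $t^{-}$, then use two tie–free uniform demand queries at the midpoints—is a sensible alternative and is in fact the route the authors apparently considered first (a version of it is commented out in the source). The paper's actual proof takes a different, more direct path: it does one demand query at price $t$ to split $\items$ into $D$ and $C$, and then, on the $D$ side, it never tries to identify $t^{+}$. Instead it repeatedly samples (via Lemma~\ref{lem:random-larger}) a random value $q>t$ from the surviving set, issues a demand query at $p=(q+t)/2$, and removes the returned bundle. Every removed item has value $>t$, and the removed bundle is guaranteed to contain all items of value $\ge q$, so the set $\{k\in S: v_k>t\}$ shrinks by a constant factor in expectation regardless of tie-breaking. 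The $C$ side is symmetric via Lemma~\ref{lem:random-smaller-addiitive}. This buys robustness to ties for free: the price $p$ is strictly between $t$ and $q$, so even if some items have value exactly $p$, removing or keeping them is harmless.

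Your approach, by contrast, has a real gap at precisely the point you flag. Consider many items of value $t+1$ (and say one of value $t+2$). Your first sample sets $h=t+1$. A demand query at price $h$ can adversarially exclude every value-$h$ item from the demand, so $C_h$ contains all of them; then $\{k\in C_h: v_k>t\}$ is nonempty even though the open window $(t,h)$ is empty, and your procedure re-samples value $h$ forever. The fix you propose—``handled using the same tie-breaking techniques developed for the main cardinality-constrained algorithm (Theorem~\ref{thm:findmaxk})''—is circular: the proof of Theorem~\ref{thm:findmaxk} \emph{uses} Lemma~\ref{lem:findall} to obtain the three-way partition $S_H,S_M,S_L$, and there is no separate tie-breaking gadget there to borrow. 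If you want to rescue the $t^{+}$ route, you would need an independent way to certify that $(t,h)$ is empty (or to sample uniformly from it) in $O(\log m)$ queries under adversarial tie-breaking; the paper's peeling argument is exactly such a gadget, but once you have it you no longer need $t^{+}$ at all.
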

\begin{proof}
	We start with a uniform demand query where the price of each item is $t$, splitting the items into a demand set $D$ and its complement $C=\items\setminus D$.
	Any item of value exactly $t$ can belong to any of the two sets. 
	We show how to find all items of value exactly $t$ in $D$, the algorithm for $C$ is the same up to trivial adjustments. 
	
	First, observe that either the set $D$ does not contain items of value $t$ or that $t$ is the minimal value in the set.
	A uniform demand query for a price that is strictly larger than $t$ but strictly smaller than any other item in $D$ will return all items of value larger than $t$ in $D$, and only them.	
	In order to find such a price, we run the following iterative algorithm:
	We maintain a set $S$, initiated to have all items in $D$. 
	The set will contain any item in $D$ that might have value $t$.
	If the average value of an item in $S$ is $t$ ($v(S)=t\cdot |S|$), we are done (all elements in $S$ must have value $t$). Otherwise, there is at least one element of value higher than $t$.   
	
	We now use Lemma \ref{lem:random-larger} on the set $S$ as the set of all items and the threshold $t$ to pick an element from $L= \{ k\in S| v({k})>t\}$  uniformly at random, in $\O{\log m}$ demand queries. Let $q$ be the value of the element picked.

	We have that in expectation, $q$ is smaller or equal to at least half of the items in $L$. 
	Let $p=\frac{q+t}{2}$ and note that $q>p>t$. 
	A uniform demand query on $S$ at price $p$ returns a set of items, each of value larger than $t$, and that set is of size at least half the size of $L$, in expectation (as it includes all items of value at least $q$). 
	We remove all these demanded items from $S$, and iterate. As the problem size shrinks by factor of two, $\O{\log m}$ rounds, each with $\O{\log m}$ 
	demand queries, 
	suffice in expectation in order of identifying all the set $L$ and remove it from $S$ completely, leaving in $S$ exactly the set of items from $D$ that have value exactly $t$.
	
	The set $C$ is handled in a similar way with some minor changes.
	First, when using Lemma \ref{lem:random-smaller-addiitive} we do so to pick an element from $Z= \{ k\in S| v({k})<t\}$  uniformly at random (instead of picking from $L$) and denote its value by $v$. 
	Unlike for $D$, to keep items of value $t$ at $S$ we now update $S$ to include items that are \textbf{in} the demand for uniform price $p=\frac{v+t}{2}$ which now satisfies $t>p>q$. 
\end{proof}

We can now conclude the proof of Theorem \ref{thm:findmaxk}.

    The algorithm gradually builds a \opt{k} set $K$ by maintaining a set $S$ that contains items that are still candidates for inclusion in the set $K$. It runs in iterations, in each iteration items are either moved from $S$  to $K$ (if we determine that they are among the $k$ items with the highest value), or removed from $S$ (if we know for sure they are not). In expectation, $S$ will shrink by a constant factor at each round, so the expected number of rounds is $\O{\log m}$ rounds. 
    The algorithm continues until $K$ is an \opt{k} set. We move to present the algorithm more formally. 


	
	Given the parameter $k$, the algorithm works as follows. It first initializes $S=\items$ and $K=\emptyset$. Then it runs the following iterative procedure:
	\begin{itemize}
	    \item 
	    Sample a random item $t\in S$. Make a uniform demand query with its value as the price for each item. In addition, using Lemma~\ref{lem:findall}, find all items of value $t$, i.e., find a partition of $S$ into three sets: $S_H = \set{j\in S\mid v_j>t},~S_M = \set{j\in S\mid v_j=t}$ and $S_L = \set{j\in S\mid v_j<t}$.
	\begin{itemize}
	\item If $k-|K|<|S_H|$, we update $S$ to be the set $S_H$ (no new items are added to $K$). Reiterate with the updated $S$.
	\item If $|S_H|\leq k-|K|\leq |S_H\cup S_M|$, the algorithm adds $S_H$ to $K$ and additionally adds  arbitrary items from $S_M$ to $K$ to complete filling it with $k$ items altogether. It then returns $K$ and terminate.
	\item If $|S_H\cup S_M|< k-|K|$ then it adds $S_H\cup S_M$ to $K$, update $S$ to be the set $S_L$. Reiterate with the updated $S$.
	\end{itemize}
	\end{itemize}
	We first argue that the algorithm indeed returns a \opt{k} set. Since the valuation function is additive, for any $r<k$, a \opt{r} set can be extended to a \opt{k} set by adding items or largest value that are not in the set. The algorithm does this till $k$ items  are added. 
	
	Each iteration of the algorithm makes $\O{\log^2 m}$ demand queries in expectation. 
	In expectation, at each round, at least half of the items in $S$ are classified and the algorithm terminates after $\O{\log m}$ rounds in expectation. 

\subsection{Proof of Theorem \ref{thm:findmaxkMR}}\label{sec-mr}




We first find a maximum weight independent set $R$, and denote its rank by $r$. 
When $k\leq r$, finding a \opt{k} set reduces to finding a \opt{k} set in $R$. Since the restriction of the valuation $v$ to the set of items $R$ is additive (since $R$ is a base), we can use the algorithm described in Theorem~\ref{thm:findmaxk} to find a $k$-optimal set.
If $k>r$ then  \opt{k} set can be constructed by adding $k-r$ arbitrary items to $R$.  

We next show that a maximum weight independent set $R$ can be found using $\O{\log^2 m}$ demand queries in expectation. 
As by Theorem~\ref{thm:findmaxk} we can find a \opt{k} set in $R$ using $\O{\log^3 m}$ demand queries in expectation, the bound on the number of queries follows.

In the next proof we use the notation $v(\cdot |R)$ to denote the marginal valuation given a set $R$. I.e., for a set $S$ we have that $v(S|R)=v(S\cup R)-v(R)$. Given query oracle for $v$, both value and demand queries can be easily implemented for $v(\cdot |R)$: value queries by querying $v(S\cup R)$ and $v(R)$, and demand queries by setting zero prices for all items in $R$, which by monotonicity guarantees that all items in $R$ are in the demand.

\begin{lemma}\label{lem:findMWB}
	There exists a randomized algorithm that for any  weighted matroid-rank valuation $v$ over $m$ items, finds a maximum weight independent set $R$ using value and demand queries. The algorithm
	makes in expectation $\O{\log^2 m}$ queries. 
\end{lemma}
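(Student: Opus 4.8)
The plan is to prove Lemma~\ref{lem:findMWB} by adapting the divide-and-conquer scheme behind Theorem~\ref{thm:findmaxk} to the matroid setting, driven by two structural facts about weighted matroid-rank valuations. First, for such a valuation a demand query is essentially a run of the greedy algorithm: one may always assume the returned set is independent, and at a uniform price $t$ the demand oracle returns --- up to tie-breaking among items of weight exactly $t$ --- a maximum-weight independent set of the matroid restricted to the items of weight strictly above $t$. Second, greedy is ``front-loadable'': a maximum-weight independent set among the items of weight above a threshold extends to a maximum-weight base of the whole matroid, so once we have identified the greedy selection among the heavy items we may commit it, contract the matroid, and recurse on the rest. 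Queries to the contracted matroid $\mathcal{M}/R$ are simulated by the marginal valuation $v(\cdot\mid R)$ exactly as described just before the lemma statement (value queries by two value queries, demand queries by pricing the items of $R$ at $0$).

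The algorithm then maintains a committed independent set $R$ (initially empty) and a candidate set $S$ (initially $\items$), with the invariant that $R$ together with a maximum-weight independent set of $\mathcal{M}/R$ on $S$ is a maximum-weight base of $\mathcal{M}$. In each round we sample a uniformly random $j\in S$ and read off its weight $t=v(\{j\})$; a uniform demand query at price $t$ on $v(\cdot\mid R)$ restricted to $S$ returns a set $D$, namely the greedy selection among the items of $S$ of weight $>t$ (plus possibly some items of weight exactly $t$). We add $D$ to $R$, discard from $S$ those items of weight $>t$ not in $D$ --- after contracting by $D$ these have become loops, hence are irrelevant --- and handle the items of weight exactly $t$ by an $\O{\log m}$-query subroutine in the spirit of Lemma~\ref{lem:findall} (using Lemmas~\ref{lem:random-larger} and~\ref{lem:random-smaller-addiitive}), which isolates this weight class and commits a maximal independent subset of it in the current contracted matroid. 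What remains in $S$ has all items of weight $<t$, and we recurse. Since $t$ is the weight of a random item, in expectation a constant fraction of $S$ is committed or discarded each round, so the recursion has depth $\O{\log m}$ and uses $\O{\log m}$ queries per round, for a total of $\O{\log^2 m}$ in expectation. This proves Lemma~\ref{lem:findMWB}; Theorem~\ref{thm:findmaxkMR} then follows by running it to obtain $R$ with $r=|R|$ equal to the rank of $\mathcal{M}$ and, when $k\le r$, invoking Theorem~\ref{thm:findmaxk} on the restriction of $v$ to $R$ (which is additive because $R$ is a base), and when $k>r$ simply padding $R$ with $k-r$ arbitrary items; the total is $\O{\log^3 m}$.

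The main obstacle, as in the additive case but more severe, is the adversarial tie-breaking of the demand oracle among items of equal weight. In the additive setting a demand query at price $t$ returns \emph{all} items of value above $t$, which is what makes Lemma~\ref{lem:findall} possible; here a demand query returns only an \emph{independent subset} of the heavy items, so a single round cannot expose an entire weight class, and it is not even immediate which items outside $D$ should be discarded versus kept. Making the weight-$t$ subroutine correct for \emph{every} implementation of the oracle, and proving that each round still resolves a constant fraction of the candidates so that the $\O{\log^2 m}$ bound and termination go through, is the delicate part. The supporting matroid facts invoked along the way --- that a demanded set may be taken independent, that greedy is front-loadable, and that the restriction of a weighted matroid-rank function to one of its bases is additive --- are standard and will be established as short claims.
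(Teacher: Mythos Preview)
Your proposal and the paper's proof share the same high-level shape—grow a committed independent set $R$ round by round, using a random pivot and a demand query on the marginal valuation $v(\cdot\mid R)$—but the paper's execution is substantially simpler and sidesteps precisely the difficulty you flag as ``the delicate part.''

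The paper does not sample a uniform item from a candidate set and use its \emph{weight} as the threshold. Instead, in each round it applies Lemma~\ref{lem:random-larger} directly to the marginal valuation $v(\cdot\mid R)$ (which is subadditive, so the lemma applies) with threshold $0$, obtaining in $\O{\log m}$ queries a uniformly random item $r$ among those with $v(r\mid R)>0$. It then issues a single demand query at price $p=v(r\mid R)/2$ for items outside $R$ (price $0$ on $R$) and replaces $R$ by the returned set $D\supseteq R$. Because $p$ is strictly below $v(r\mid R)$, there is no tie at the pivot: in expectation at least half of the positive-marginal items have marginal at least $p$, and each such item is either absorbed into $D$ or acquires zero marginal after contracting by $D$. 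Thus the number of items with positive marginal relative to $R$ shrinks by a constant factor per round, and the process halts when $v(R)=v(\items)$. No separate candidate set $S$, no explicit discard step, and no weight-class isolation are needed.

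Your plan has a genuine gap at the tie-handling step. You invoke Lemmas~\ref{lem:findall} and~\ref{lem:random-smaller-addiitive} ``in spirit,'' but both are stated and proved only for \emph{additive} valuations; Lemma~\ref{lem:random-smaller-addiitive} in particular works by negating the valuation ($v'_j=W-v_j$), which has no analogue for a matroid-rank function. To run your subroutine you would need demand queries on the additive weight vector $\omega$, yet a demand query on $v$ returns only an \emph{independent} subset of the heavy items, not all of them, so it is not clear how to expose an entire weight class in $\O{\log m}$ queries. Relatedly, your discard step (``discard from $S$ those items of weight $>t$ not in $D$'') presumes you can identify those items, but the demand query does not reveal which non-returned items had weight above $t$ versus at or below $t$. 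The paper's device—halving the sampled \emph{marginal} value and measuring progress by the count of positive-marginal items—eliminates both issues at once.
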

\begin{proof}
\cout{	First, the algorithm makes the value query $v(\items)$ for the value of the set of all items. This is also the value of any maximum weight independent set.
	Let $t = \min_{a\in \items,~\omega(a)>0}{\omega(a)}$ the value of the smallest positive item.\mbc{I think we have a result that we can find such an item for sub-additive in $\O{\log^2 m}$ queries in expectation. Can't we just use it?}
	A uniform demand query for price that is any positive value strictly smaller than $t$, returns a maximum weight independent set and no other additional item. 
	That is, our goal is to find a positive 
	number smaller than $t$.
	}
	The algorithm is iterative. First, initialize $R$ to be the empty set. At each round we consider the valuation $v(\cdot|R)$ and observe that it is subadditive. Thus, we can use Lemma \ref{lem:random-larger} to pick a uniform random item $r$ from the set of items satisfying $v(r|R)>0$, using $\O{\log m}$ demand queries in expectation.
	We then make a demand query with price $p=v(r|R)/2$ for any item in $\items\setminus R$, and zero for all items in $R$. The returned set $D$ contains $R$ and is a subset of a maximum weight independent set. 
	We update $R$ to be $D$ (which is a superset of the prior $R$), and if for the updated $R$ it holds that $v(R)<v(\items)$ we reiterate, again picking a random $r$ and so on. 
    The process ends when $R$ satisfies $v(R)=v(\items)$, and thus is a maximum weight independent set.
 
    
    Clearly, if the algorithm terminates with $R$ satisfying $v(R)=v(\items)$ then $R$ is indeed a maximum weight independent set - in every iteration $R$ is an independent set, and as $v(R)=v(\items)$ it has maximal weight.
    The algorithm must terminate with $R$ satisfying $v(R)=v(\items)$ as $R$ size monotonically increases and is bounded, and unless $v(R)=v(\items)$ there is always an item not in $R$ with positive marginal, so some item can be picked at the next iteration.  

    Finally, we claim that the expected number of demand queries the algorithm makes is $\O{\log^2 m}$. 
    Since $r$ is selected randomly from all items with positive marginal value  relative to $R$, we have that, in expectation, at least half of the items in $\items\setminus R$ have marginal of at least $p$ relative to $R$. 
    Each such item is either in the returned set $D$ (and hence in the updated $R$) or has a negative marginal utility when added to $D$. 
    That is, at each round, in expectation,  at least half of the items are either in $R$ or have a marginal 0 when added to it.
    Hence, after $\O{\log m}$ rounds in expectation, all items are either in $R$ or have a marginal 0 relative to $R$. 
    When that is the situation, $R$ is a maximum weight independent set. 
    Since each round requires $\O{\log m}$ queries, the total number of queries is still $\O{\log^2 m}$.
    \end{proof}

    \stam{
    
    \rk{the non-eliminated set} now have 0 marginal utility when added to $R$
	
	where at each round we find a random item who is a part of some maximum independent set. We then make a uniform demand query for this price for any item and receive an independent set which is a subset of a maximum independent set.
	We set the prices of those items to zero to make sure that they will be taken, and continue to the next round.
	
	In order to find a random item from a maximum independent set we maintain a set $S$ (initiate as $N\setminus R$) and a set $T$ (initiate as an empty set). We split $S$ randomly into two sets $S_0,~S_1$ such that $|S_0||-|S_1||\leq 1$. If there is $i\in\set{0,1}$ such that $v(S_i\cup R\cup T)=v(\items)$, we know that the corresponding set $S_i$ can complete $R$ to a maximum independent set and we keep this set (i.e. $S\gets S_{i}$ and $T\gets T$).
	
	Otherwise, we have that none of the sets completes $R\cup T$ into a maximal base by its own (but that their union does). Thus, both sets contains some items from an extension to a maximal base of $R$.
	In this case, we set $T = T\cup S_0$ and continue our search in $S_1$.
	After $\log_2 n$ iteration we have a random item of value $v$ which is part of a maximal base that contains $R$.

	Denote by $r$ the rank of the matroid, i.e., the size of its base. 
	Our iterative algorithm keep a set $R$ of items that are independent and can be extended to maximal base (initiated as an empty set).
	
	In order to find a random item from a base we maintain a set $S$ (initiate as $N\setminus R$) and a set $T$ (initiate as an empty set). We split $S$ randomly into two sets $S_0,~S_1$ such that $|S_0||-|S_1||\leq 1$. If the value of $S_0\cup R\cup T$ or $S_1\cup R\cup T$ is the value of the grand bundle, we know that this set complete $R$ to a maximal base and we keep our search with this set (i.e. $S\gets S_i$ and $T\gets T$ where $S_1\cup R\cup T$ contains a maximal base).
	Otherwise, we have that none of the sets completes $R\cup T$ into a maximal base by its own (but that their union does). Thus, both sets contains some items from an extension to a maximal base of $R$.
	In this case, we set $T = T\cup S_0$ and continue our search in $S_1$.
	After $\log_2 n$ iteration we have a random item of value $v$ which is part of a maximal base that contains $R$.
	
	We have that in expectation, this item has a marginal contribution which smaller or equal to at least half of the items with non-zero marginal contribution to $R$ in this base (since it was chosen uniformly \rk{uniformly from what set?}).
	Making a demand query for price $\frac{v}{2}$ for each item in $N\setminus R$ will give us in expectation at least $h/2$ new items of the maximal base where $h=r-|R|$. \rk{wrong. rephrase!!!}
	Thus, in expectation, after logarithmic number of rounds we found a maximal base of the matroid and we may apply the algorithm from Theorem~\ref{thm:findmaxk} on this set.
	%
	%
	
	}
\section{Proof of Theorem \ref{thm:menu_expected}}
\label{app:imp}
Given a menu $\menu$, we normalize\footnote{It is trivial to adapt the queries to this normalization, e.g., divide the price of any demand query with this normalization factor.} the minimal positive price to $1$ and denote the highest price in the menu by $H_{\menu}$.


We next use Lemma 9.1 of \cite{hart2019selling} to   
show that for any bundle-size pricing menu $\menu$ there exists a bundle-size pricing menu with bundle-size pricing menu size $\O{\epsilon^{-2}\ln(H_{\menu}})$ that obtains at least $1-\epsilon$ fraction of $\menu$'s revenue.
Our lemma  makes simple observations regarding the result of \cite{hart2019selling} when applied to a bundle-size pricing menu. 
\begin{lemma}\label{lem:loglevels}
	Given a bundle-size pricing menu $\menu_1$, 
	{for any $\epsilon>0$} there exists a bundle-size pricing menu $\menu_2$ that offers at most $2+\frac{5}{\epsilon^2}\cdot \ln H_{\menu_1}$ different bundles' sizes, such that for any monotone valuation $v$, $\REV_{\menu_2}(v)\geq(1-\epsilon)\REV_{\menu_1}(v)$. Moreover, if $H_{\menu_1}>2$, the ratio of two different prices in $\menu_2$ is at least $e^{\frac{\epsilon^2}{7}}$. 
\end{lemma}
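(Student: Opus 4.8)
The plan is to obtain the statement essentially for free from Lemma~9.1 of \cite{hart2019selling} (the Hart--Nisan price-discretization lemma), combined with two elementary observations about what that lemma does when its input is a bundle-size pricing menu. As already set up just before the statement, normalize $\menu_1$ so its smallest positive price is $1$; then every positive price of $\menu_1$ lies in $[1,H_{\menu_1}]$, and since all the relevant revenue comparisons are scale-invariant it suffices to argue for such menus. If $\menu_1$ already uses at most $2+\frac{5}{\epsilon^2}\ln H_{\menu_1}$ distinct prices, simply take $\menu_2=\menu_1$; so assume it uses more.

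First I would recall what Lemma~9.1 of \cite{hart2019selling} gives: any menu whose positive prices lie in $[1,H]$ can be replaced, losing at most an $\epsilon$-fraction of the revenue \emph{against every valuation}, by a menu whose prices lie on a geometric grid of ratio $1+\Theta(\epsilon^2)$ inside the price range, having at most one outcome per grid price; in particular it has only $\O{\epsilon^{-2}\ln H}$ distinct prices, obtained by discarding some offers and consolidating the remaining offers band-by-band (keeping the most valuable offer available in each price band). The first (``simple'') observation is that this transformation preserves the bundle-size structure: every outcome offered by $\menu_1$ is of the form ``receive an arbitrary set of size $q_i$'', and consolidating within a price band by keeping the outcome of largest value amounts, for size-indexed outcomes, to keeping the offer of largest size; hence the output is again a bundle-size pricing menu, and its size (the number of distinct offered quantities) equals its number of distinct prices. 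This already produces a bundle-size pricing menu $\menu_2$ with $\O{\epsilon^{-2}\ln H_{\menu_1}}$ offered sizes and $\REV_{\menu_2}(v)\ge(1-\epsilon)\REV_{\menu_1}(v)$ for every monotone $v$.

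The second observation is the quantitative bookkeeping. Instantiating the grid ratio as (at most) $e^{\epsilon^2/5}$, the number of grid points inside $[1,H_{\menu_1}]$ is at most $\lfloor \tfrac{5}{\epsilon^2}\ln H_{\menu_1}\rfloor+1\le \tfrac{5}{\epsilon^2}\ln H_{\menu_1}+1$, and adding the always-present no-purchase option (size $0$, price $0$) accounts for the remaining $+1$, giving the claimed bound $2+\frac{5}{\epsilon^2}\ln H_{\menu_1}$ on the number of offered sizes. Since, after consolidation, the distinct prices of $\menu_2$ sit on distinct grid points whose spacing is at least the grid ratio, any two distinct prices differ by a factor at least $e^{\epsilon^2/7}$; this ``moreover'' clause is non-vacuous exactly when $H_{\menu_1}>2$, i.e.\ when the price range contains at least two grid points (and for $H_{\menu_1}\le 2$ it is empty). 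Calibrating the loss parameter fed into Lemma~9.1 to be exactly $\epsilon$ completes the proof.

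The step I expect to require the most care is making the reduction airtight under our adversarial tie-breaking convention. In our model the buyer may return \emph{any} profit-maximizing bundle, and revenue guarantees for discretized menus are delicate: naively rounding prices can tempt the buyer into a much cheaper bundle and collapse the revenue, so one must verify that the retained offers of $\menu_2$ are exactly such that, for every valuation and every choice of demanded bundle, the buyer's payment in $\menu_2$ is at least $(1-\epsilon)$ times her (worst-case) payment in $\menu_1$ --- this robustness is precisely what Lemma~9.1 of \cite{hart2019selling} is engineered to deliver, and re-checking it in our setting, together with tracking the exact constants $5$ and $7$ and the additive $2$ through the Hart--Nisan estimate, is the only genuine work; the bundle-size-preservation and counting observations are routine.
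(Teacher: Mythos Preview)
Your approach is essentially the paper's: apply the Hart--Nisan geometric price discretization and observe that it preserves bundle-size structure and yields the claimed size and ratio bounds. Two small points are worth flagging. First, your shortcut ``if $\menu_1$ already has few prices, take $\menu_2=\menu_1$'' breaks the ``moreover'' clause, since $\menu_1$'s price ratios can be arbitrarily close to~$1$; the paper avoids this by \emph{always} applying the rounding. Second, rather than citing Lemma~9.1 of \cite{hart2019selling} as a black box, the paper re-runs the argument explicitly (rounding each price $s$ up to the nearest power of $H^{1/K}$ and scaling by $1-\epsilon/2$, with $K$ the least integer such that $H^{1/K}\le 1+\epsilon^2/4$), precisely because extracting the specific constants $5$, $7$, and the additive $2$ --- and verifying the adversarial-tie-breaking robustness you rightly identify as the delicate step --- requires tracking through that computation; you correctly anticipate this is where the work lies.
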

\begin{proof}
    For $\epsilon\geq 1$ the claim is trivially true with $\menu_2$ being the menu that only offers nothing for zero payment. 
    We now assume that  $\epsilon<1$.
	Denote $H=H_{\menu_1}$. 
	We start by splitting the range $[1,H]$ into $K$ subranges, each with a ratio of $H^{1/K}$ between its endpoints
	where $K$ is the smallest integer such that $H^{1/K}\leq 1+\frac{\epsilon^2}{4}$, i.e., $K \leq 1+\ln^{-1}(1+\frac{\epsilon^2}{4} )\cdot \log H$, which is at most $1+\frac{5}{\epsilon^2}\cdot \ln H$ for $\epsilon<1$. 
	All prices in the same subranges are rounded to the same single price in $\menu_2$. By the monotonicity assumption, if two different bundle sizes are offered for the same price, a larger size bundle will always be selected. Hence, for all price in $\menu_2$ we keep only entries offering the largest bundle sizes for that price and $\menu_2$ will have at most $K+1$ different price levels.

	We now explain the rounding schema.  
	For any price $s$ in $\menu_1$, we apply the transform $\phi(s)$ by rounding $s$ up to the {next multiple of $H^{1/K}$} and then multiplying it by $1-\frac{\epsilon}{2}$.
	Hence we have that, $(1-\frac{\epsilon}{2})s\leq\phi(s)$ and $\phi(s)<s\cdot {(1-\frac{\epsilon}{2})} \cdot H^{1/K}<(1-\frac{\epsilon}{2})(1+\frac{\epsilon^2}{4})\cdot s$ for any price $s$.
	We then have that for any two prices $s,s'$:
	\begin{equation}\label{eq:phi_rounding}
	\phi(s)-\phi(s') \leq 
	\left(1-\frac{\epsilon}{2}\right)\left(1+\frac{\epsilon^2}{4}\right)s - \left(1-\frac{\epsilon}{2}\right)s'= 
	s -s' - \left(\frac{\epsilon}{2}\right)\left(\left(1-\frac{\epsilon}{2} + \frac{\epsilon^2}{4} \right)s-s'\right) < 
	s-s',
	\end{equation}
    When the last inequality holds whenever $\left(1-\frac{\epsilon}{2} + \frac{\epsilon^2}{4} \right)s-s'>0$. 
    
    
	To complete the proof we show that for any valuation $v$, if the profit-maximizing set in $\menu_1$ was $A$ and has generated revenue of $s$, then  the profit-maximizing set $A'$ in $\menu_2$ generates revenue of at least $s\cdot (1-\epsilon)$. 
	Denote by $s'$ the price of $A'$ in $\menu_1$.
	Since $A$ is selected in $\menu_1$, we have that $v(A)-s\geq v(A')-s'$.
	A necessary condition for a buyer to  select $A'$ in $\menu_2$ is that $v(A)-\phi(s)\leq v(A')-\phi(s')$.
	Combining the two we get that a necessary condition  to  select $A'$ in $\menu_2$ is that $\phi(s)-\phi(s') \geq s-s'$.
	For this not to contradict  Equation~\eqref{eq:phi_rounding} it must holds that $\left(1-\frac{\epsilon}{2} + \frac{\epsilon^2}{4} \right)s\leq s'$.
	From this inequality we derive that for $A'$ that is picked in $\menu_2$ the payment is 
	$\phi(s')>(1-\frac{\epsilon}{2})s'>\left(1-\frac{\epsilon}{2} + \frac{\epsilon^2}{4} \right)(1-\frac{\epsilon}{2})s>(1-\epsilon)s$, and we conclude that  $\REV_{\menu_2}(v)\geq(1-\epsilon)\REV_{\menu_1}(v)$. 
 In addition, we have that for $H>2$ and $0<\epsilon<1$, the parameter  $K$ is smaller than $\frac{7}{\epsilon^2}\ln H$ which implies that
	$H^{1/K}>H^{(\frac{7}{\epsilon^2}\ln H)^{-1}}=e^{\frac{\epsilon^2}{7}}$ and this is the minimal ratio between two different prices in $\menu_2$.
\end{proof}

Recall that by Theorem \ref{thm:findmaxkMR}, for any weighted matroid-rank valuation $v$ over a set of size $m$, there exists a randomized algorithm that finds a \opt{k} set and has in expectation makes $\O{\log^3 m}$ demand queries. Combining this with the lemma above we get as an immediate corollary that for any bundle-size pricing menu $\menu_1$, 
there exists a bundle-size pricing menu $\menu_2$ 
such that for any weighted matroid-rank  valuation $v$ it holds that $\REV_{\menu_2}(v)\geq(1-\epsilon)\REV_{\menu_1}(v)$, and such that the buyer can find a profit-maximizing set with  $\O{\log^3 m\cdot \epsilon^{-2}\log H_{\menu}}$ demand queries in expectation.

The number of demand queries in above result depends on the price-ratio $H_{\menu}$ being not too large. 
Our main result in this section is that we can get rid of the dependence on $H_{\menu}$ when optimizing the \emph{expected} revenue for a given distribution over valuations (rather than ex-post, for any given valuation).  

Before proving our main result, we prove a lemma showing that there are cases in which many entries of a menu can be removed without harming the {\em expected} revenue by much.
\begin{lemma}\label{lem:delete_small}
    Given a bundle-size pricing menu $\menu_1$ over $m$ items, a distribution $\D$ over valuations, and $\epsilon>0$, let $d$ be smallest bundle size in $\menu_1$ such that the expected revenue from selling bundles of size $d$ is at least $\frac{\epsilon}{m}\REV(\menu_1,\D)$.
    Let $\menu_2$ be the menu obtained from $\menu_1$ by removing all entries of bundles smaller then $d$. Then, $\REV(\menu_2,\D)\geq (1-\epsilon)\REV(\menu_1,\D)$.
\end{lemma}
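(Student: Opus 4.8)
The plan is to decompose $\REV(\menu_1,\D)$ according to the size of the bundle that is actually sold, bound the contribution of the sizes below $d$ using the defining property of $d$, and then show that deleting the small-size entries leaves the contribution of the sizes $\geq d$ essentially untouched. As in the rest of the paper I read $\REV(\menu,\D)$ as the revenue under worst-case (adversarial) tie-breaking, so the buyer may be assumed to select a \emph{cheapest} profit-maximizing menu entry; the case $\epsilon\geq 1$ is vacuous, so assume $\epsilon<1$.

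First I would let $R_j$ denote the expected revenue (over $v\sim\D$) collected from buyers whose cheapest profit-maximizing bundle in $\menu_1$ has size $j$, so that $\REV(\menu_1,\D)=\sum_j R_j$. By the minimality of $d$, every size $j<d$ that is offered in $\menu_1$ satisfies $R_j<\frac{\epsilon}{m}\REV(\menu_1,\D)$; since the offered sizes below $d$ all lie in $\{1,\dots,d-1\}$ and $d\leq m$, there are fewer than $m$ of them, and the size-$0$ entry contributes nothing. Summing, $\sum_{j<d}R_j<\epsilon\,\REV(\menu_1,\D)$, hence $\sum_{j\geq d}R_j>(1-\epsilon)\REV(\menu_1,\D)$. (This also shows $d$ is well defined when $\epsilon<1$.)

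It then remains to prove $\REV(\menu_2,\D)\geq\sum_{j\geq d}R_j$, which I would do pointwise. Fix a valuation $v$ whose cheapest $\menu_1$-optimal bundle $B$ has $|B|\geq d$, with profit $\pi_1=v(B)-p_{|B|}$. Because $\menu_2$ retains every entry of size $\geq d$ we have $B\in\menu_2$, and because $\menu_2\subseteq\menu_1$, the buyer's optimal profit in $\menu_2$ is sandwiched between $\pi_1$ (attained via $B$) and $\pi_1$ (nothing in $\menu_2$ can beat the $\menu_1$-optimum), so it equals $\pi_1$. Now any entry of $\menu_2$ attaining profit $\pi_1$ for $v$ is also an entry of $\menu_1$ attaining the maximum profit there, so it costs at least $p_{|B|}$ (as $B$ is a cheapest such entry of $\menu_1$), while $B$ is available in $\menu_2$ at price exactly $p_{|B|}$. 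Hence the cheapest profit-maximizing entry of $\menu_2$ also has price $p_{|B|}$, so $v$ contributes exactly $p_{|B|}$ to $\REV(\menu_2,\D)$, the same amount it contributes to $R_{|B|}$. Buyers whose $\menu_1$-optimal bundle has size $<d$ contribute a nonnegative amount to $\REV(\menu_2,\D)$. Taking expectations over $v\sim\D$ yields $\REV(\menu_2,\D)\geq\sum_{j\geq d}R_j>(1-\epsilon)\REV(\menu_1,\D)$.

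The step I expect to be the crux is the last one: a priori, deleting the small-size entries could prompt a buyer who used to purchase a large bundle to \emph{downgrade} to a cheaper surviving entry, which would drop $\REV(\menu_2,\D)$ below $\sum_{j\geq d}R_j$. What rules this out is that $B$ is chosen as a \emph{cheapest} maximizer in $\menu_1$, so no surviving (hence $\menu_1$-)entry realizing the same profit can be strictly cheaper; combined with the worst-case-tie-breaking reading of $\REV$, this is exactly what is needed, and no further argument about the buyer's incentives is required.
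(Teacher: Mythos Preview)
Your proof is correct and follows essentially the same approach as the paper: decompose the revenue by the size of the bundle sold, bound the total contribution of sizes below $d$ by $\epsilon\,\REV(\menu_1,\D)$ via the minimality of $d$, and argue that a buyer who was purchasing a bundle of size $\geq d$ in $\menu_1$ continues to do so in $\menu_2$ since only unchosen (cheaper) options were removed. Your treatment is in fact more careful than the paper's on the tie-breaking point you flag at the end; the paper simply asserts ``removing an option that was not picked will not change the selection,'' whereas you make explicit why, under worst-case tie-breaking, no surviving entry can be strictly cheaper than $B$.
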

\begin{proof}
    For any valuation in which a bundle of size $d'\geq d$ was selected in $\menu_1$, it was picked although the buyer had the option to buy a smaller and cheaper bundle. 
    Removing an option that was not picked  will not change the selection and  
    hence, even if every cheaper bundle is removed, bundle $d'$ will also selected in $\menu_2$.
    That is, removing bundles of size $k$ smaller than $d$ from the menu will result in revenue loss that is bounded by the revenue contribution of valuations that selected those bundles of size $k$ in $\menu_1$, and that loss is at most $\frac{\epsilon}{m}\REV(\menu_1,\D)$.
    Since at most $m$ bundles are removed, $\REV(\menu_2,\D)\geq (1-\epsilon)\REV(\menu_1,\D)$.
\end{proof}

{We now prove our main result in this section by using the above two lemmas and applying our main upper bound of Theorem~\ref{thm:findmaxkMR} (which shows that 
for weighted matroid-rank valuations
we can  find a \opt{k} set in poly-logarithmic number of demand queries).}

We are now ready to 
prove Theorem \ref{thm:menu_expected}.
\cout{
\begin{theorem}
	Given a distribution $\D$ over weighted matroid-rank valuations over a set $\items$ of $m$ items, a bundle-size pricing menu $\menu_1$, and $\epsilon>0$, there exists a bundle-size pricing menu $\menu_2$ such that $\REV(\menu_2,\D)\geq(1-\epsilon)\REV(\menu_1,\D)$ and such that a profit-maximizing bundle for $\menu_2$ can be found in $\O{\epsilon^{-3}\log^4 m}$ demand queries in expectation,  where the expectation is over the internal random coins of the algorithm and the distribution $\D$. 
\end{theorem}
}

\begin{proof}[of Theorem \ref{thm:menu_expected}]
	Denote $H=H_{\menu_1}$. 
	Given $\menu_1$ we use Lemma~\ref{lem:loglevels} to construct a bundle-size pricing menu $\menu_2'$ with $\O{\epsilon^{-2}\log H}$ price levels such that $\REV(\menu_2',\D)\geq(1-\epsilon)\REV(\menu_1,\D)$.
	For $H\leq m$, we have that the number of price levels in $\menu_2'$ is $\O{\epsilon^{-2}\log m}$, and by solving for each price level separately, we can find a profit-maximizing set with  $\O{\log^4 m\cdot \epsilon^{-2}}$ demand queries in expectation. 
	
	Else, we have that $H>m$ and we now modify $\menu_2'$ using the information that valuations are drawn from $\D$.
	We use Lemma~\ref{lem:delete_small} to construct a bundle-size pricing menu $\menu_2''$, by removing entries from $\menu_2'$ that have low expected revenue contribution, and get a menu
	such that $\REV(\menu_2'',\D)\geq(1-\epsilon)\REV(\menu_2',\D)$.
	Note that since $\menu_2''$ is obtained by deleting entries, the ratio between prices is still at least $e^{7/\epsilon^{2}}$.
	%
	%
	Let $p$ the price of the cheapest bundle in $\menu_2''$. By the definition of $\menu_2''$ we have that $p\geq p\cdot\Pr[\text{cheapest bundle selected by the buyer}]\geq \frac{\epsilon}{m}\REV(\menu_2'',\D)$.
	
	Let $t=\epsilon^{-1} m^3p$.
	If $\Pr_{v\sim\D}\left[v(\items)\geq t\right]>m^{-2}$, we define $\menu_2$ to be the menu that only sells the grand bundle for price $t$, having an expected revenue which is at least $ t\cdot m^{-2}=mp/\epsilon\geq \REV(\menu_2'',\D)>(1-2\epsilon)\REV(\menu_1,\D)$.
	This menu is implementable using a single value query.

	Else $\Pr_{v\sim\D}\left[v(\items)\geq t\right]\leq m^{-2}$. In this case we define $\menu_2$ to be $\menu_2''$ and we implement it using the following algorithm:
	We start by querying the value of the grand bundle $v(\items)$. 
	We then consider two cases. 
	
	In the first case $v(\items)\geq t$. In this case we use the fact that for matroid-rank valuations the greedy algorithm finds a \opt{k} set for every $k\in [m]$: at each point selecting the item with the highest marginal value with respect to the set selected so far. 
	The greedy algorithm requires $O(m^2)$ value queries. 
	Since 
	the probability of running this algorithm is at most $m^{-2}$, and when running the number of queries is $O(m^2)$, this case contributes only a constant to the expected query complexity, where the expectation is taken over the distribution $\D$.
	
	Else, we are in the second case in which  $v(\items)\leq t =\epsilon^{-1} m^3p$. By monotonicity no bundle is sold for a price higher than $t$ in this realization of $v$.
	Since $p$ is the cheapest price in the menu and $t =\epsilon^{-1} m^3p\geq v(\items)$, the ratio between the highest and lowest prices of sold bundles is at most $\epsilon^{-1} m^3$. 
	As $H>2$ (which is the case since $H>m$), 
	by Lemma~\ref{lem:loglevels} the ratio between any two different prices in the menu is at least
	$e^{\epsilon^2/7}$. 
	Thus,  there are at most $21\epsilon^{-3}\cdot\log m$ bundle sizes in the range between $p$ and $\epsilon^{-1} m^3p$ that are needed to be considered. 
	{For any weighted matroid-rank valuations we can find a profit-maximizing set for each bundle size separately using  $\O{\log^3 m}$ demand queries in expectation, by applying our main upper bound presented in Theorem~\ref{thm:findmaxkMR}.}
	Among those bundles, we return the one with the highest profit when paying its price.	
	In total, the expected number of queries is then $\O{\epsilon^{-3}\log^4 m}$ demand queries, where the expectation is taken over the internal random coins of the algorithm and the distribution $\D$.
	
	In both of the possible menus, $\menu_2''$ and the grand bundle, the expected revenue is at least  $(1-2\epsilon)\REV(\menu_1,\D)$.
	Setting $\epsilon' = \epsilon/2$ in the theorem statement completes the proof.	
	%
\end{proof}

{
As an immediate corollary of the theorem we get that it is possible to take the optimal bundle-size pricing revenue on $\mathcal D$, and while losing at most $\epsilon$ fraction of the revenue, convert it to another bundle-size pricing menu that has a primitive complexity of $poly(\log m, \frac 1 \eps)$.}


\cout{

\section{Missing proofs from Section~\ref{sec:symmenu}}\label{sec:symmenu_app}
\mbc{ mover the proofs for 2 lemmas and theorem back to appendix B}
\subsection{Proof of Lemma~\ref{lem:loglevels}}\label{sec:lem_loglog}
\begin{proof}
    For $\epsilon\geq 1$ the claim is trivially true with $\menu_2$ being the menu that only offers nothing for zero payment. 
    We now assume that  $\epsilon<1$.
	Denote $H=H_{\menu_1}$. 
	We start by splitting the range $[1,H]$ into $K$ subranges, each with a ratio of $H^{1/K}$ between its endpoints
	where $K$ is the smallest integer such that $H^{1/K}\leq 1+\frac{\epsilon^2}{4}$, i.e., $K \leq 1+\ln^{-1}(1+\frac{\epsilon^2}{4} )\cdot \log H$, which is at most $1+\frac{5}{\epsilon^2}\cdot \ln H$ for $\epsilon<1$. 
	All prices in the same subranges are rounded to the same single price in $\menu_2$. By the monotonicity assumption, if two different bundle sizes are offered for the same price, a larger size bundle will always be selected. Hence, for all price in $\menu_2$ we keep only entries offering the largest bundle sizes for that price and $\menu_2$ will have at most $K+1$ different price levels.

	We now explain the rounding schema.  
	For any price $s$ in $\menu_1$, we apply the transform $\phi(s)$ by rounding $s$ up to the {next multiple of $H^{1/K}$} and then multiplying it by $1-\frac{\epsilon}{2}$.
	Hence we have that, $(1-\frac{\epsilon}{2})s\leq\phi(s)$ and $\phi(s)<s\cdot {(1-\frac{\epsilon}{2})} \cdot H^{1/K}<(1-\frac{\epsilon}{2})(1+\frac{\epsilon^2}{4})\cdot s$ for any price $s$.
	We then have that for any two prices $s,s'$:
	\begin{equation}\label{eq:phi_rounding}
	\phi(s)-\phi(s') \leq 
	\left(1-\frac{\epsilon}{2}\right)\left(1+\frac{\epsilon^2}{4}\right)s - \left(1-\frac{\epsilon}{2}\right)s'= 
	s -s' - \left(\frac{\epsilon}{2}\right)\left(\left(1-\frac{\epsilon}{2} + \frac{\epsilon^2}{4} \right)s-s'\right) < 
	s-s',
	\end{equation}
    When the last inequality holds whenever $\left(1-\frac{\epsilon}{2} + \frac{\epsilon^2}{4} \right)s-s'>0$. 
    
    
	To complete the proof we show that for any valuation $v$, if the profit-maximizing set in $\menu_1$ was $A$ and has generated revenue of $s$, then  the profit-maximizing set $A'$ in $\menu_2$ generates revenue of at least $s\cdot (1-\epsilon)$. 
	Denote by $s'$ the price of $A'$ in $\menu_1$.
	Since $A$ is selected in $\menu_1$, we have that $v(A)-s\geq v(A')-s'$.
	A necessary condition for a buyer to  select $A'$ in $\menu_2$ is that $v(A)-\phi(s)\leq v(A')-\phi(s')$.
	Combining the two we get that a necessary condition  to  select $A'$ in $\menu_2$ is that $\phi(s)-\phi(s') \geq s-s'$.
	For this not to contradict  Equation~\eqref{eq:phi_rounding} it must holds that $\left(1-\frac{\epsilon}{2} + \frac{\epsilon^2}{4} \right)s\leq s'$.
	From this inequality we derive that for $A'$ that is picked in $\menu_2$ the payment is 
	$\phi(s')>(1-\frac{\epsilon}{2})s'>\left(1-\frac{\epsilon}{2} + \frac{\epsilon^2}{4} \right)(1-\frac{\epsilon}{2})s>(1-\epsilon)s$, and we conclude that  $\REV_{\menu_2}(v)\geq(1-\epsilon)\REV_{\menu_1}(v)$. 
 In addition, we have that for $H>2$ and $0<\epsilon<1$, the parameter  $K$ is smaller than $\frac{7}{\epsilon^2}\ln H$ which implies that
	$H^{1/K}>H^{(\frac{7}{\epsilon^2}\ln H)^{-1}}=e^{\frac{\epsilon^2}{7}}$ and this is the minimal ratio between two different prices in $\menu_2$.
\end{proof}

\subsection{Proof of Lemma~\ref{lem:delete_small}}\label{sec:lem_delete_small}
\begin{proof}
    For any valuation in which a bundle of size $d'\geq d$ was selected in $\menu_1$, it was picked although the buyer had the option to buy a smaller and cheaper bundle. 
    Removing an option that was not picked  will not change the selection and  
    hence, even if every cheaper bundle is removed, bundle $d'$ will also selected in $\menu_2$.
    That is, removing bundles of size $k$ smaller than $d$ from the menu will result in revenue loss that is bounded by the revenue contribution of valuations that selected those bundles of size $k$ in $\menu_1$, and that loss is at most $\frac{\epsilon}{m}\REV(\menu_1,\D)$.
    Since at most $m$ bundles are removed, $\REV(\menu_2,\D)\geq (1-\epsilon)\REV(\menu_1,\D)$.
\end{proof}

\subsection{Proof of Lemma~\ref{lem:delete_small}}\label{sec:thm_menu}
\begin{proof}
	Denote $H=H_{\menu_1}$. 
	Given $\menu_1$ we use Lemma~\ref{lem:loglevels} to construct a bundle-size pricing menu $\menu_2'$ with $\O{\epsilon^{-2}\log H}$ price levels such that $\REV(\menu_2',\D)\geq(1-\epsilon)\REV(\menu_1,\D)$.
	For $H\leq m$, we have that the number of price levels in $\menu_2'$ is $\O{\epsilon^{-2}\log m}$, and by solving for each price level separately, we can find a profit-maximizing set with  $\O{\log^4 m\cdot \epsilon^{-2}}$ demand queries in expectation. 
	
	Else, we have that $H>m$ and we now modify $\menu_2'$ using the information that valuations are drawn from $\D$.
	We use Lemma~\ref{lem:delete_small} to construct a bundle-size pricing menu $\menu_2''$, by removing entries from $\menu_2'$ that have low expected revenue contribution, and get a menu
	such that $\REV(\menu_2'',\D)\geq(1-\epsilon)\REV(\menu_2',\D)$.
	Note that since $\menu_2''$ is obtained by deleting entries, the ratio between prices is still at least $e^{7/\epsilon^{2}}$.
	%
	%
	Let $p$ the price of the cheapest bundle in $\menu_2''$. By the definition of $\menu_2''$ we have that $p\geq p\cdot\Pr[\text{cheapest bundle selected by the buyer}]\geq \frac{\epsilon}{m}\REV(\menu_2'',\D)$.
	
	Let $t=\epsilon^{-1} m^3p$.
	If $\Pr_{v\sim\D}\left[v(\items)\geq t\right]>m^{-2}$, we define $\menu_2$ to be the menu that only sells the grand bundle for price $t$, having an expected revenue which is at least $ t\cdot m^{-2}=mp/\epsilon\geq \REV(\menu_2'',\D)>(1-2\epsilon)\REV(\menu_1,\D)$.
	This menu is implementable using a single value query.

	Else $\Pr_{v\sim\D}\left[v(\items)\geq t\right]\leq m^{-2}$. In this case we define $\menu_2$ to be $\menu_2''$ and we implement it using the following algorithm:
	We start by querying the value of the grand bundle $v(\items)$. 
	We then consider two cases. 
	
	In the first case $v(\items)\geq t$. In this case we use the fact that for matroid-rank valuations the greedy algorithm finds a \opt{k} set for every $k\in [m]$: at each point selecting the item with the highest marginal value with respect to the set selected so far. 
	The greedy algorithm requires $O(m^2)$ value queries. 
	Since 
	the probability of running this algorithm is at most $m^{-2}$, and when running the number of queries is $O(m^2)$, this case contributes only a constant to the expected query complexity, where the expectation is taken over the distribution $\D$.
	
	Else, we are in the second case in which  $v(\items)\leq t =\epsilon^{-1} m^3p$. By monotonicity no bundle is sold for a price higher than $t$ in this realization of $v$.
	Since $p$ is the cheapest price in the menu and $t =\epsilon^{-1} m^3p\geq v(\items)$, the ratio between the highest and lowest prices of sold bundles is at most $\epsilon^{-1} m^3$. 
	As $H>2$ (which is the case since $H>m$), 
	by Lemma~\ref{lem:loglevels} the ratio between any two different prices in the menu is at least
	$e^{\epsilon^2/7}$. 
	Thus,  there are at most $21\epsilon^{-3}\cdot\log m$ bundle sizes in the range between $p$ and $\epsilon^{-1} m^3p$ that are needed to be considered. 
	{For any weighted matroid-rank valuations we can find a profit-maximizing set for each bundle size separately using  $\O{\log^3 m}$ demand queries in expectation, by applying our main upper bound presented in Theorem~\ref{thm:findmaxkMR}.}
	Among those bundles, we return the one with the highest profit when paying its price.	
	In total, the expected number of queries is then $\O{\epsilon^{-3}\log^4 m}$ demand queries, where the expectation is taken over the internal random coins of the algorithm and the distribution $\D$.
	
	In both of the possible menus, $\menu_2''$ and the grand bundle, the expected revenue is at least  $(1-2\epsilon)\REV(\menu_1,\D)$.
	Setting $\epsilon' = \epsilon/2$ in the theorem statement completes the proof.	
	%
\end{proof}
}

\section{Lower Bounds}
\subsection{Submodular Valuation (Proof of  Theorem \ref{thm:submodular})} \label{app:submodular} 
In this section we prove Theorem \ref{thm:submodular}. 
Some definitions and auxiliary claims are presented in Subsection~\ref{subsec-submodular-aux-app}. We then prove 
Lemma \ref{lem:SM_value_queries} in Section \ref{subsec-submodular-value}.

\subsubsection{Definitions and Auxiliary Claims}\label{subsec-submodular-aux-app}
We first show that every valuation $v$ in the support of $\D$ is indeed submodular. We then present some definitions and prove several claims that will be helpful in the proof of the theorem.

\begin{lemma}
	Every valuation $v$ in the support of $\D$ is submodular.
\end{lemma}
\begin{proof}
	It is sufficient to show that for any $S\subseteq M$ and $a,b\in M \setminus S$
	it holds that $v(S\cup\set{a})+v(S\cup\set{b}) \geq v(S\cup\set{a,b}) +v (S)$.
	We split into cases and verify the inequality  holds for each case:
	\begin{itemize}
		\item $|S|>k+1$: $v(S\cup\set{a})+v(S\cup\set{b}) =2k = v(S\cup\set{a,b}) +v (S)$.
		\item $|S|=k+1,~S\in\Bv$: $v(S\cup\set{a})+v(S\cup\set{b}) =2k = v(S\cup\set{a,b}) +v (S)$.
		\item $|S|=k+1,~S\notin\Bv$: $v(S\cup\set{a})+v(S\cup\set{b}) =2k > 2k - \frac {3} {11} = v(S\cup\set{a,b}) +v (S)$.
		\item $|S|=k,~S=\Gv$: $v(S\cup\set{a})+v(S\cup\set{b}) =2k -\frac {6} {11}= v(S\cup\set{a,b}) +v (S)$.
		\item $|S|=k,~S\neq \Gv$: $v(S\cup\set{a})+v(S\cup\set{b}) \geq 2k-\frac 6 {11} > 2k-\frac 7 {11} = v(S\cup\set{a,b}) +v (S)$.
		\item $|S|=k-1,~S\in\Rv$: $v(S\cup\set{a})+v(S\cup\set{b}) \geq 2k -\frac {14} {11} \geq v(S\cup\set{a,b}) +v (S)$.
		\item $|S|=k-1,~S\notin\Rv$: $v(S\cup\set{a})+v(S\cup\set{b}) = 2k -\frac {14} {11} \geq v(S\cup\set{a,b}) +v (S)$.
		\item $|S|<k-1$: $v(S\cup\set{a})+v(S\cup\set{b}) \geq 2k - \frac {28} {14} \geq v(S\cup\set{a,b}) +v (S)$.
	\end{itemize}
\end{proof}

A valuation in the support of $\D$ is completely defined by the values of all sets of size $k$ and $k+1$.
We say that a set of value queries $\Q$ is in a {\em canonical form} if all queries in $\Q$ are for sets of size $k$ or $k+1$, and for every query of size $k$ all of its supersets of size $k+1$ are also in $\Q$. 
{Essentially, 
all information that a set of value queries conveys  about a valuation can also be conveyed by some set of queries that is in a canonical form and is not much larger.}
The next proposition shows that we can assume that the query set is in a canonical form at a cost of a polynomial blow-up in the number of queries:

\begin{proposition}
Let $A'$ be an algorithm that makes $t$ value queries on a valuation  in the support of $\D$. Then, there is an algorithm $A$ that simulates $A'$ while making $m^2\cdot t$ value queries on a valuation in the support of $\D$. Moreover, the set of queries that $A$ makes has a canonical form.
\end{proposition}
\begin{proof}
We show how instead of querying directly a set $S$ we can compute its value by $m^2$ value queries to bundles of size $k$ and $k+1$. We split into cases:
\begin{itemize}
    \item $|S|<k-1$ or $|S|>k+1$: the value of $S$ is known and the query is discarded. 
    \item $|S|=k+1$: the query remains the same.
    \item $|S|=k-1$: the query is replaced with value queries to all $\frac{(m-k)(m-k-1)}{2}$ sets of size $k+1$ that contain $S$. If the value of all these sets of size $k+1$ is $k$ then $S\in \Rv$ and its value is $k-1$, otherwise $S\notin \Rv$ and its value is $k- \frac {14} {11}$.
    \item $|S|=k$: the query remains the same and we additionally make value queries to all $(m-k)$ sets of size $k+1$ containing $S$. 
\end{itemize}
\end{proof}

We next present several useful definitions and notations.  
Fix some deterministic algorithm $A$ that makes only value queries and runs on valuations from $\D$. 
Fix any valuation $v$ from the support of $\D$, and let $\Qv$ denote the list of $t$ bundles that $A$ queried together with their values.
Let $\DQv$ denote the distribution over valuations that is obtained by sampling according to $\D$ a valuation that is consistent with the queries in $\Qv$. 
    Let $\Byes$ be the family of sets that includes every set $S$ such that $\Pr_{v'\sim\DQv}[S\in\Bvp]=1$.
    Similarly, let $\Bno$ be the family of sets that includes every set $S$ such that $\Pr_{v'\sim\DQv}[S\in\Bvp]=0$.
    Let $\Qvk$ be the family of sets of size $k$ that were queried in $\Qv$.
 
We now claim that assuming queries are in a canonical form, the conditional distribution for sets not queried is essentially identical to the prior.
\begin{lemma}\label{lem:B_is_independent_app}
    Fix any valuation $v$ sampled from $\D$ and assume $\Qv$ is in a canonical form. 
    It holds that:
    \begin{itemize}
        \item For any set $S$ of size $k+1$ it holds that $\Pr_{v'\sim\DQv}[S\in\Bvp]\in \{0,1,\frac 1 {m^{2}}\}$.
        \item The conditional probabilities are independent:  for any family $\mathcal{F}$ of sets of size $k+1$ it holds that
    $\Pr_{v'\sim\DQv}[\forall S\in\mathcal{F}\ ,S\notin\Bvp]= \prod_{S\in\mathcal{F}}\Pr_{v'\sim\DQv}[S\notin\Bvp]$.
    \end{itemize}
\end{lemma}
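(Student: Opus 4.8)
The plan is to compute the posterior distribution $\DQv$ directly by Bayes' rule, exploiting the two–stage description of the prior $\D$: first every $(k+1)$-set is placed into $\Bvp$ independently with probability $1/m^2$, and only afterwards is $\Gvp$ drawn uniformly from the family $\mathcal{C}(\Bvp)$ of $k$-sets not contained in any member of $\Bvp$ (with $\Rvp$ a deterministic function of $\Bvp$). Since a valuation in the support is determined by its values on sets of size $k$ and $k+1$, and $\Qv$ is in canonical form, the transcript reveals precisely two kinds of information: for each queried $(k+1)$-set $T$ the indicator $\mathbf{1}[T\in\Bvp]$, and for each queried $k$-set $S$ the indicator $\mathbf{1}[S=\Gvp]$. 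Moreover, canonicality forces every $(k+1)$-superset of a queried $k$-set into $\Qv$, so whether a queried $k$-set belongs to $\mathcal{C}(\Bvp)$ is already pinned down by the $(k+1)$-queries.

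First I would dispose of the easy half of the first bullet: for a queried $(k+1)$-set the answer directly pins down membership in $\Bvp$, so the probability is $0$ or $1$. For a non-queried $(k+1)$-set $S$, write $\Bvp = B_0\cup\tilde B$, where $B_0$ is the (revealed) intersection of $\Bvp$ with the queried $(k+1)$-sets and $\tilde B$ the intersection with the non-queried ones, whose prior is the product $\mathrm{Bernoulli}(1/m^2)$ measure. The combinatorial heart of the argument is the observation that \emph{no $k$-subset of a non-queried $(k+1)$-set is itself a queried $k$-set} — otherwise canonicality would have forced that $(k+1)$-set into $\Qv$. Hence flipping the membership of $S$ in $\tilde B$ changes the covered status only of $k$-subsets of $S$, none of which are queried, and so does not change which queried $k$-sets are (non-)candidates for $\Gvp$; the likelihood of the observed $k$-query answers therefore depends on $\tilde B$ at most through the single global quantity $|\mathcal{C}(\Bvp)|$. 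Using that $\Bvp$ is very sparse (so that $|\mathcal{C}(\Bvp)|$ is essentially $\binom{m}{k}$, after the conditioning — already built into $\D$ — that $\mathcal{C}(\Bvp)\neq\emptyset$), this quantity is constant across the relevant $\tilde B$ up to negligible corrections, so in the Bayes ratio it cancels and the posterior on the non-queried $(k+1)$-sets is again the product $\mathrm{Bernoulli}(1/m^2)$ measure. Both bullets follow: the marginal of $S\in\Bvp$ is $1/m^2$, and for any family $\mathcal{F}$ the event $\{\forall S\in\mathcal{F},\,S\notin\Bvp\}$ factorizes, with queried members of $\mathcal{F}$ contributing factors in $\{0,1\}$ and non-queried members contributing factors $1-1/m^2$.

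The main obstacle is exactly the coupling between $\Gvp$ and $\Bvp$: because $\Gvp$ is sampled after, and from a law depending on, $\Bvp$, an answer to a $k$-query is a priori a nontrivial piece of information about $\Bvp$, and the whole role of the canonical-form hypothesis is to confine that information to the $(k+1)$-sets that were themselves queried. Concretely the work splits into (a) proving the ``no queried $k$-subset'' observation and deducing that, conditioned on the $(k+1)$-query answers, the law of the $k$-query answers factors through $|\mathcal{C}(\Bvp)|$ alone; and (b) controlling that normalization — showing $|\mathcal{C}(\Bvp)|$ varies across the posterior support by only a $1\pm o(1)$ factor, using $|\Bvp|\le O(\binom{m}{k+1}/m^2)$ and that each member of $\Bvp$ covers only $k+1$ sets of size $k$ — so that it drops out of all the ratios. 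Step (b) is where the estimate is most delicate, and it is also the place where the conditioning that $\mathcal{C}(\Bvp)$ is nonempty (and not pathologically small) is needed.
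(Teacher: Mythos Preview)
Your central combinatorial observation --- that canonicality forces every $(k+1)$-superset of a queried $k$-set into $\Qv$, so a non-queried $(k+1)$-set $S$ has no queried $k$-subset --- is exactly the hinge of the paper's argument as well. The paper, however, stops there: having made this observation it simply invokes the principle of deferred decisions to conclude that the indicator $\mathbf{1}[S\in\Bvp]$ is independent of all the query answers, giving posterior probability exactly $1/m^{2}$ (and likewise the product formula for the second bullet). It does not engage at all with the second-stage coupling through $|\mathcal C(\Bvp)|$ that you isolate in your step~(b).

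Your more careful route is sound, but it proves only the \emph{approximate} statement: the posterior on the non-queried $(k+1)$-sets is the product $\mathrm{Bernoulli}(1/m^{2})$ law up to a $1\pm o(1)$ multiplicative correction coming from fluctuations in $|\mathcal C(\Bvp)|$. That does not establish the lemma \emph{as stated}, which asserts the value $1/m^{2}$ exactly. Your caution here is in fact warranted: conditioning on a queried candidate $k$-set not being $\Gvp$ genuinely biases $|\mathcal C(\Bvp)|$ upward and hence biases membership of non-queried $(k+1)$-sets in $\Bvp$ slightly downward, so the exact $1/m^{2}$ is not literally correct and the paper's one-line appeal to deferred decisions glosses over this. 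For every downstream use of the lemma in the paper --- bounding expected numbers of value queries needed to simulate a demand query, and the $(1-1/m^{2})^{\Theta(m^{2})}$ estimates --- the approximate version you obtain is what is actually needed and is entirely sufficient; you should just state and prove that version rather than the exact one.
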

\begin{proof}
    We start with proving the first part. First, if $S\in \Bno$ or $S\in \Byes$ then by definition $S\in \Bvp$ with probability $0$ or $1$, respectively. Thus, the first part of the claim holds for all sets of size $k+1$ that were queried in $\Qv$. Next, suppose that $S$ was not queried in $\Qv$. Since $\Qv$ is in a canonical form, if $T$ is a set of size $k$ that was queried, all of its supersets of size $k+1$ were queried. Thus, as $S$ was not queried, none of its subsets was queried. 
  
   Recall that the sampling process that defines $\Bv$ picks every set of size $k+1$ to belong to $\Bv$ independently with probability $\frac 1 {m^2}$. The posterior probability given $\Qv$ is still $\frac 1 {m^2}$, as none of sets of size $k$ that were queried is a subset of $S$, and conditional on this event the value of $S$ is sampled independently with probability  $\frac 1 {m^2}$.  Thus, by the principle of deferred decisions we can think of the membership of $S$ in $\Bv$ as determined after the values $\Qv$ are given, and thus:
    $$
    \Pr_{v'\sim\DQv}[S\in\Bvp]= \Pr_{v'\sim\D}[S\in\Bvp|\Qv=\mathcal{Q}_{v'}]=
    \frac{\Pr_{v'\sim\D}[S\in\Bvp~\&~\Qv=\mathcal{Q}_{v'}]}{\Pr_{v'\sim\D}[\Qv=\mathcal{Q}_{v'}]}
    $$
    The claim follows since $\Qv$ is in canonical form and as $S\notin \Qv$ it holds that: $${\Pr_{v'\sim\D}[S\in\Bvp~\&~\Qv=\mathcal{Q}_{v'}]}=\frac{1}{m^2}\cdot {\Pr_{v'\sim\D}[\Qv=\mathcal{Q}_{v'}]}$$
%
%
%
    For the second bullet, if $\mathcal{F}\cap\Byes$ is not empty, we have that $\Pr_{v'\sim\DQv}[\forall S\in\mathcal{F}\ ,S\notin\Bvp]=0=\prod_{S\in\mathcal{F}}\Pr_{v'\sim\DQv}[S\notin\Bvp]$.
    Otherwise, since the membership of each set $S$ in $\Bvp$ is independent in $\D$ for sets of size $k+1$, and none of the sets of size $k$ in $\Qv$ is contained in any set in 
    $\mathcal{F}\setminus\Bno$, we again can similarly apply Bayes' rule and the priciple of deferred decisions and get that:
    $$
    \Pr_{v'\sim\DQv}[\forall S\in\mathcal{F}\ ,S\notin\Bvp]=
    \Pr_{v'\sim\D}[\forall S\notin\mathcal{F}\setminus\Bno\ ,S\in\Bvp]=(1-\frac{1}{m^2})^{|\mathcal{F}\setminus\Bno|}=
    \prod_{S\in\mathcal{F}}\Pr_{v'\sim\DQv}[S\notin\Bvp]
    $$
\end{proof}

\subsubsection{An Impossibility for Value Queries}\label{subsec-submodular-value}
We next prove Lemma \ref{lem:SM_value_queries}, our lower bound for value queries.
\cout{
\begin{lemma}\label{lem:SM_value_queries}
    Fix some deterministic algorithm $A$ that makes only value queries and the set of those queries is in a canonical form. Suppose that $A$ makes $t<1.9^m$ value queries on valuations that are sampled from $\D$. Then, for a large enough $m$, the probability (over $\D$) that $A$ finds a $k$-optimal set is at most $\frac {t} {{1.9}^{m}}$.
\end{lemma}
}

\begin{proof}[Proof of Lemma~\ref{lem:SM_value_queries}]
    Fix any valuation $v$ in the support of $\D$, and let $\Qv$ denote the list of $t$ bundles that $A$ queried together with their values and assume that $\Qv$ has a canonical form. 
    

    
    {
    We now analyze the conditional distribution (given $\Qv$) that a specific set $S$ (of size $k$) is \opt{k}, showing that this probability is exponentially small even after the algorithm makes its $t$ queries, as long as $t$ is not huge.
    Thus with high probability the algorithm cannot determine which bundle is the good set.} We start with showing that by ``ignoring'' queries to bundles of size $k$. Then, we will show that the effect of such queries is small.

    Let $\D^{\Byes, \Bno}$ be the distribution over valuations obtained by sampling from $\D$ a valuation that agrees on $\Byes$ and $\Bno$. 
    \begin{claim}\label{cla:values_miss_G}
        For any set $S$, $|S|=k$, $\Pr_{v'\sim \D^{\Byes, \Bno}}[S=\Gvp]<\frac{2}{\binom{m}{k}-mt}$.
    \end{claim}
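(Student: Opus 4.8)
The plan is to reduce Claim~\ref{cla:values_miss_G} to a Chernoff bound on the number of size-$(k+1)$ sets that the posterior distribution places into $\Bvp$. Write $\mathcal T$ for the family of size-$k$ sets that are a subset of no member of $\Byes$. By Lemma~\ref{lem:B_is_independent} a size-$(k+1)$ set can have posterior probability $1$ of lying in $\Bvp$ only if it was queried, so $|\Byes|\le t$, and since each member of $\Byes$ has only $k+1\le m$ subsets of size $k$ (recall $k=\tfrac m2$), we get $|\mathcal T|\ge\binom m k - t(k+1)\ge\binom m k - mt$. Hence it suffices to show $\Pr_{v'\sim\D^{\Byes,\Bno}}[\Gvp=S]<2/|\mathcal T|$. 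If $S\notin\mathcal T$ then $S\subseteq F$ for some $F\in\Byes\subseteq\Bvp$, so $S$ is covered by $\Bvp$ in every realization and $\Gvp\ne S$ always; so I may assume $S\in\mathcal T$.

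Next I would unpack the posterior. By Lemma~\ref{lem:B_is_independent}, under $\D^{\Byes,\Bno}$ the family $\Bvp$ is $\Byes$ together with an independent random selection of the ``free'' size-$(k+1)$ sets (those not queried), each chosen with probability $1/m^2$; then $\Gvp$ is drawn uniformly from the family $A$ of size-$k$ sets contained in no member of $\Bvp$, and $A\subseteq\mathcal T$ always. Thus $\Pr[\Gvp=S]=\E[\Bvp]{\mathbf 1[S\in A]/|A|}$, so everything hinges on showing that $|A|$ is almost all of $\mathcal T$ with overwhelming probability. The key observation is that a size-$k$ set is covered only if one of its at most $m$ size-$(k+1)$ supersets is selected, so the number of covered sets is at most $(k+1)\cdot F$, where $F$ denotes the number of selected free size-$(k+1)$ sets. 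Now $F$ is a sum of at least $\binom m{k+1}-t$ i.i.d.\ Bernoulli$(1/m^2)$ variables, so under the ambient assumptions ($k=\tfrac m2$, $t<1.9^m$, $m$ large) its mean $\mu$ is of order $2^m/\mathrm{poly}(m)$, and the Chernoff bound gives $\Pr[F\ge 2\mu]\le e^{-\mu/3}$, which is doubly exponentially small — in particular at most $1/(4|\mathcal T|)$ for $m$ large.

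Putting it together: on the typical event $\{F<2\mu\}$ at most $(k+1)\cdot 2\mu=\O{\binom m k/m}$ sets of $\mathcal T$ are covered, hence $|A|\ge|\mathcal T|(1-4/m)$. Splitting the expectation accordingly, I bound $\mathbf 1[S\in A]/|A|$ by $(|\mathcal T|(1-4/m))^{-1}$ on this event and crudely by $1$ on the complement (where it is only nonzero when $|A|\ge 1$), giving $\Pr[\Gvp=S]\le\frac{1}{|\mathcal T|(1-4/m)}+e^{-\mu/3}<\frac{2}{|\mathcal T|}\le\frac{2}{\binom m k - mt}$ for $m$ large enough. Essentially the only delicate point is forcing the ``bad'' event to contribute $o(1/|\mathcal T|)$: a Markov or even Chebyshev estimate on the number of covered sets is far too weak (it yields only $\O{1/m}$, or $\O{m^2/|\mathcal T|}$, which after the trivial $1/|A|\le 1$ swamps the target $2/|\mathcal T|$), and the fix is precisely to track the genuinely independent sum $F$ rather than the positively correlated ``set $T$ is covered'' indicators, so that a Chernoff-type tail is available. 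The remaining work is routine bookkeeping: checking that $\mu$ is large enough compared with $\log|\mathcal T|$, and that $\binom m k - mt\ge\binom m k/2$, which both follow from $t<1.9^m$ and $\binom{m}{m/2}\ge 2^m/(m+1)$ for $m$ large.
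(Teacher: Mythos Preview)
Your argument is correct but follows a genuinely different route from the paper's. The paper does not use any concentration: after the same counting step giving $|\mathcal T|\ge\binom m k-mt$, it argues that for any two sets $S_1,S_2\in\mathcal T$ the ratio $\Pr[\Gvp=S_1]/\Pr[\Gvp=S_2]$ is at most $(1-1/m^2)^{-(m-k)}\le 2$, since the only asymmetry between $S_1$ and $S_2$ under $\D^{\Byes,\Bno}$ is how many of their $m-k$ size-$(k{+}1)$ supersets lie in $\Bno$; the uniform choice of $\Gvp$ over the uncovered sets then forces $\Pr[\Gvp=S]\le 2/|\mathcal T|$ by the pigeonhole ``if one were larger than $2/|\mathcal T|$, the rest would sum past $1$''. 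This is slicker and needs no tail bound, though the paper's equality $\Pr[S=\Gvp]=\Pr[S\in\mathcal F_{v'}]\cdot\tfrac1{|\mathcal F_{v'}|}$ is written informally (the denominator is random) and requires a bit of care to make fully rigorous. Your Chernoff route sidesteps that informality entirely: you control $|A|$ directly via the independent sum $F$, and your observation that Markov or Chebyshev on the covered-set count would leave a residual of order $m^2/|\mathcal T|$, hence the need for an exponential tail on $F$, is exactly right. Both approaches land on the same $2/|\mathcal T|$ bound; yours is heavier but more transparent, the paper's is shorter but leans on a near-symmetry that deserves a line of justification.
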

    \begin{proof}
        Observe that a $S$ of size $k$ that is a subset of some set $S'\in\Byes$ cannot be $\Gvp$ for any $v'\in supp(\D^{\Byes, \Bno})$. Each set of size $k+1$ contains $k+1$ sets of size $k$. 
        Thus there are at most $(k+1)\cdot t \leq m\cdot t$  sets of size $k$ for which $\Byes$ dictates that they cannot be $\Gvp$ for any $v'\in supp(\D^{\Byes, \Bno})$.
    
        Each of the remaining sets is not contained in any set from $\Byes$ and is contained in $m-k$ other sets of size $k+1$. There are at {least}
        $\binom{m}{k}-m\cdot t$ such sets.
        Next, for every two sets $S_1$ and $S_2$ that are not contained in any set from $\Byes$, we compare the probabilities of the events $S_1=\Gvp$ and of $S_2=\Gvp$ in a valuation $v'$ sampled from $\D^{\Byes, \Bno}$.

        For a valuation $v'$ let $\mathcal F_{v'}$ be the family of sets of size $k$ who are not contained in any of the sets from $\Bvp$. Recall that all sets of size $k$ that are in $\mathcal F_{v'}$ have the same ex-ante probability of being $\Gvp$.
        
        Consider some set $S$ of size $k$. The probability that $S=\Gvp$ when the valuation $v'$ is sampled from $\D^{\Byes, \Bno}$ is $0$ whenever $\Byes$ contains a superset of $S$. Else, none of the supersets of $S$ is in $\Byes$, and by Lemma~\ref{lem:B_is_independent}, if the number of its supersets that are in $\Bno$ is $\ell$ then $\Pr_{v'\sim \D^{\Byes, \Bno}}[S=\Gvp]=\Pr_{v'\sim \D^{\Byes, \Bno}}[S\in \mathcal F_{v'}]\cdot \frac{1}{|\mathcal F_{v'}|} = \left(1-\frac 1 {m^{2}}\right)^{m-k-\ell}\cdot \frac{1}{|\mathcal F_{v'}|}$.


        For two sets $S_1$ and $S_2$ that are not contained in any set from $\Byes$, let $\ell_i$ be the number of sets  in $\Bno$ that contain $S_i$. We have that:
        
        $$\frac{\Pr_{v'\sim\D^{\Byes, \Bno}}[S_1=\Gvp]}{\Pr_{v'\sim\D^{\Byes, \Bno}}[S_2=\Gvp]}=\frac {\left(1-\frac 1 {m^{2}}\right)^{m-k-\ell_1}} {\left(1-\frac 1 {m^{2}}\right)^{m-k-\ell_2} }= \frac 1 {\left(1-\frac{1}{m^2}\right)^{(\ell_1-\ell_2)}} \leq 2
        $$
        where the last inequality uses the fact that since $0\leq \ell_1,\ell_2\leq m-k\leq m$, we have that $\ell_1-\ell_2<m$ and thus the ratio is at most $\frac 1 {\left(1-\frac{1}{m^2}\right)^{m}}$, which approaches to $1$ as $m$ goes to infinity. 
        
        
        Assume by contradiction that for some $S$ we have that $\Pr_{v'\sim \D^{\Byes, \Bno}}[S=\Gvp]>\frac{2}{\binom{m}{k}-m\cdot t}$, {as the ratio of probabilities for any two sets $S_1,S_2$ to be the \opt{k} set is bounded by $2$, the minimal probability for each of the sets of size $k$ (of positive probability) to be the \opt{k} set is at least $\frac{1}{\binom{m}{k}-m\cdot t}$}. 
        As there are at least  $\binom{m}{k}-m\cdot t$ sets with positive probability, summing the probabilities over all of them exceeds $1$, a contradiction.
    \end{proof}
    
    We have shown that when conditioning on $\Byes$ and $\Bno$ the probability of a set being the good set is low. To fully condition on $\Qv$ we also need to condition on  $\Qvk$. 
    We show that even after conditioning on $\Qvk$ the probability for finding $\Gv$ is still small.
    \begin{claim}\label{cla:k_miss_G}
        {For $|\Qvk|<1.9^m$ and a large enough $m$, if for any set $S'$, $\Pr_{v'\sim \D^{\Byes, \Bno}}[S'=\Gvp]<\frac{1}{1.95^m}$, then for any set $S$, either $S\in \Qv$ or $\Pr_{v'\sim \DQv}[S=\Gvp]<\frac{2}{1.95^m}$. 
        Moreover, $\Pr_{v'\sim\D^{\Byes, \Bno}}[S\in \Qvk]\leq\frac{2|\Qvk|}{1.95^m}$.
        }
    \end{claim}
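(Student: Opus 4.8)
The plan is to reduce the fully conditioned distribution $\DQv$ to the partially conditioned distribution $\D^{\Byes, \Bno}$ (for which the hypothesis already gives an exponentially small bound, coming from Claim~\ref{cla:values_miss_G}), and to control the effect of the extra conditioning. First I would pin down exactly what information $\Qv$ carries. Since $\Qv$ is in canonical form, its size-$(k+1)$ queries determine precisely which size-$(k+1)$ sets lie in $\Byes$ and which lie in $\Bno$: by Lemma~\ref{lem:B_is_independent} an unqueried size-$(k+1)$ set has conditional probability $1/m^2$ of being in $\Bvp$, so it is never forced into $\Bvp$, and if a queried size-$k$ set equals $\Gv$ then all of its $(k{+}1)$-supersets are already queried and hence already recorded in $\Bno$. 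So conditioning on the $(k{+}1)$-part of $\Qv$ is the same as conditioning on $(\Byes,\Bno)$. The size-$k$ queries carry only one extra bit per set, since $v(T)=k-6/11$ if $T=\Gv$ and $v(T)=k-7/11$ otherwise.

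Next I would split into two cases. If some queried size-$k$ set $T$ has value $k-6/11$, then $\Gv=T\in\Qvk$ and every valuation consistent with $\Qv$ has good set $T$; hence for any $S\notin\Qv$ we have $S\neq T$ and $\Pr_{v'\sim\DQv}[S=\Gvp]=0$, which is below $2/1.95^m$. Otherwise all queried size-$k$ sets have value $k-7/11$, so conditioning on $\Qvk$ on top of $(\Byes,\Bno)$ is exactly conditioning on the event $E=\{\Gvp\notin\Qvk\}$, i.e. $\DQv=\D^{\Byes,\Bno}\mid E$. For any size-$k$ set $S\notin\Qvk$ the event $\{S=\Gvp\}$ is contained in $E$, so
$$\Pr_{v'\sim\DQv}[S=\Gvp]=\frac{\Pr_{v'\sim\D^{\Byes,\Bno}}[S=\Gvp]}{\Pr_{v'\sim\D^{\Byes,\Bno}}[E]}.$$

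Finally I would bound the two factors. The numerator is at most $1/1.95^m$ by the hypothesis. For the denominator, the union bound together with the same hypothesis gives $\Pr_{v'\sim\D^{\Byes,\Bno}}[\Gvp\in\Qvk]\le|\Qvk|/1.95^m<1.9^m/1.95^m$, which is the ``moreover'' assertion (even with a factor $2$ to spare) and which tends to $0$; hence for $m$ large enough $\Pr_{v'\sim\D^{\Byes,\Bno}}[E]>1/2$, and the displayed ratio is below $2/1.95^m$. The cases $|S|\neq k$ or $S\in\Qv$ are immediate. I expect the only delicate point to be the first step — arguing from canonicity that the $(k{+}1)$-queries reveal nothing beyond $\Byes,\Bno$ and the $k$-queries reveal nothing beyond whether $\Gvp$ is one of the queried $k$-sets; once that is nailed down via Lemma~\ref{lem:B_is_independent} and the explicit form of the valuation, the rest is a one-line computation.
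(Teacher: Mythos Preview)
Your argument is correct and is essentially the paper's argument at its core: both proofs identify that $\DQv$ is $\D^{\Byes,\Bno}$ further conditioned on the event $C$ (your $E$) that none of the queried size-$k$ sets equals $\Gvp$, and both finish with the identical Bayes computation
\[
\Pr_{v'\sim\DQv}[S=\Gvp]=\frac{\Pr_{v'\sim\D^{\Byes,\Bno}}[S=\Gvp]}{\Pr_{v'\sim\D^{\Byes,\Bno}}[C]}<\frac{1/1.95^{m}}{1-\mathrm{(small)}}<\frac{2}{1.95^{m}}.
\]

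The one structural difference is how the denominator is bounded. The paper sets up an induction on $r=|\Qvk|$: it peels off one queried $k$-set at a time, invokes the first part of the claim on the reduced query list $\Q^{S}=\Qv\setminus\{S\}$ to bound $\Pr[S=\Gvp]$ by $2/1.95^{m}$, and accumulates these bounds to get $\Pr[\Gvp\in\Qvk]\le 2r/1.95^{m}$. You instead apply the hypothesis $\Pr_{\D^{\Byes,\Bno}}[S'=\Gvp]<1/1.95^{m}$ directly to each $S'\in\Qvk$ and take a union bound, obtaining the sharper $|\Qvk|/1.95^{m}$ with no induction. Your route is shorter and avoids the slight awkwardness in the paper's induction of applying the hypothesis to the artificial query set $\Q^{S}$ (which need not arise from any run of the algorithm). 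The paper's inductive framing would matter if the hypothesis were stated only for $\DQv$ rather than for $\D^{\Byes,\Bno}$, but as written your direct approach suffices.
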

    \begin{proof}
        We prove the claim by induction on $r = |\Qvk|$.
        For $r=0$ the claim trivially holds. 
        Assume the claim holds for any $r'<r$, we prove that it holds for $r$.

        For a set $S\in\Qvk$, let $\Q^S = \Qv\setminus\set{S}$.
        By the induction hypothesis, the probability that a set $S\in\Qv$ is $\Gvp$ for $v'$ sampled from $\D$ conditioned on the values of sets in $\Q^S$, is at most $\frac{2}{1.95^m}$.
        
        By the induction hypothesis, since $|\Q^S|=r-1$, with probability at least $1-\frac{2(r-1)}{1.95^m}$, none of the sets in $\Q^S$ is $\Gv$.
        Together, with probability at least $1-\frac{2r}{1.95^m}$, none of the sets in $\Qv$ is $\Gv$.
        
        Fix any set $T\notin \Qvk$ of size $k$. Let $C$ be the event that all sets in $\Qvk$ are not \opt{k} (which also determines their exact values). By Bayes' theorem, conditioned on $C$ we have that
        $$\Pr_{v'\sim \DQv}[T=\Gvp|C]=\Pr_{v'\sim \D^{\Byes, \Bno}}[T=\Gvp|C]=
        \frac{\Pr_{v'\sim \D^{\Byes, \Bno}}[T=\Gvp]}{\Pr_{v'\sim \D^{\Byes, \Bno}}[C]} < 
        \frac{\frac{1}{1.95^m}}{1-\frac{2r}{1.95^m}}
        <
        \frac{2}{1.95^m}$$
        
        where the last inequality holds for $r<1.9^m$ and a large enough $m$.
        
    \end{proof}

    
    For $t<1.9^m$, $k=\frac{m}{2}$ and a large enough $m$, by Claim~\ref{cla:values_miss_G}, we have that $\Pr_{v'\sim \DQv}[S=\Gvp]<\frac{2}{\binom{m}{k}-mt}<\frac{2}{1.95^m}$ and the conditions of Claim \ref{cla:k_miss_G} hold.
    
    Considering the set $S$ returned by the algorithm as a $t+1$'th query, the probability that either $\Gv\in\Qv$ or $S=\Gv$ is at most $\frac{2(t+1)}{1.95^m}$ with probability over $\D$.
    For large enough $m$, we have that $\frac{2(t+1)}{1.95^m}<\frac t {1.9^{m}}$.
    
     
\end{proof}

\cout{

\subsubsection{Simulating Demand Queries by Value Queries}\label{subsec-submodular-demand}
In this section we prove Lemma~\ref{lem:SM_demand2value}, showing that for valuations drawn from $\D$, demand queries can be simulated by value queries, {and with high probability polynomial number of queries is sufficient  for the simulation}.

\cout{

\begin{lemma}\label{lem:SM_demand2value}
    
    Fix a deterministic algorithm $A$ that uses $t$ demand and value queries for valuations sampled from $\D$. For any $\alpha>1$, with probability $1-\frac 1 \alpha$, $A$ can be implemented using at most $2m^5\cdot t^2\cdot \alpha$ value queries.
\end{lemma}
}

\begin{proof} [of Lemma~\ref{lem:SM_demand2value}]
   First, we prove by induction on $t$ that a set of value queries in a canonical form that is followed by a demand query, can be simulated by a set of value queries that has a canonical form, and the expected size of that set is at most $t+2\cdot m^5$. 
   The claim trivially holds for $t=0$ and Claim \ref{claim:induction-step} proves the induction step.
   Second, given the claim, we use Markov's inequality, to argue that 
   for any $\alpha>0$, with probability at most $\frac{1}{t\cdot \alpha}$, more than $2m^5\cdot t\cdot \alpha$ value queries are needed for the implementation of a query.  Hence, using the union bound, with probability $1-\frac{t}{t\cdot \alpha}=1-\frac 1 \alpha$ all $t$ demand queries can be implemented using at most $2m^5\cdot t^2\cdot \alpha$ value queries.
    


	\cout{
	\mbc{How about that: instead of always implementing the demand query at price $p$, we replace the demand query by two queries: first, we check if the the cheapest set of size $k$ is $G$ by adding one value query. If so, we have found $G$ (and we can halt). Otherwise, we run the demand query (but need not worry about the demand being $G$).
	\\
	Thus, when given a general list of value and demand queries we do the following: 
	We replace each demand query by a value query for the cheapest set of size $k$, followed by the demand query. 
	If at any point we find a set that is $G$ (by a value query), we halt and return it.  
	}
	
    At the heart of our harness result of finding \opt{k} sets for valuations in the support of $\D$ is the inability of demand queries to directly find the good set $G$, unless it is the cheapest set of size $k$, as  if it is not, sets of other sizes are always more profitable and uninformative regarding the identity of the set $G$.   
	\begin{claim}\label{claim:notG}
	Fix any valuation $v$ sampled from $\D$.
    For any price vector $p$, {if $G$ is not the cheapest bundle of size $k$, then }a demand query with price vector $p$ never returns the set $G$. 
    I.e., for any $p$ there is a set $S\neq G$ such that $U(S,p)>U(G,p)$.
    \end{claim}	
    \begin{proof}
        \mbc{add a proof.}\rk{what about the price vector with price $0$ for $i\in G$ and price $2$ otherwise?}\mbc{ok, but other than that, is the claim correct? the new version is still enough. }
    \end{proof}
    }

    To complete the proof we next prove the induction step for the claim that a demand query that follows a set of value queries that is in a canonical form, can be simulated by adding a small number of value queries, still in a canonical form.  

    \begin{claim}\label{claim:induction-step}
    Fix a deterministic algorithm $A$ that runs on valuations sampled from $\D$ and up to some point has used a set of $t$ value queries, and the set has a canonical form. 
    Fix any demand query for price vector $p$. 
    Then, it is possible to simulate all these queries (including the demand query) by a set of value queries that has a canonical form, and the expected size of that set is at most $t+2\cdot m^5$. 
    \end{claim}
    \begin{proof}
    For $m\leq 3$, the total number of subsets of $2^m$ is smaller than $2\cdot m^5$ and the claim trivially holds. We next assume that $m$ is even and $m\geq 4$. 

    We first present some intuition for the proof. 
    Consider a demand query with price vector $p$ for valuation $v$.
	For a set $S$, denote by $U(S,p)=v(S)-p(S)$ the profit from buying set $S$ at price $p(S)=\sum_{i\in S} p_i$. A demand query returns  a set $S$ that has maximal profit under price vector $p$.  
	Clearly, if we can find a most profitable bundle of size $d$ for every $d\in[m]$ then we can return a most profitable set (with a set from these $m$ bundles that is most profitable). 
	While finding a most profitable bundle of every size is clearly sufficient, it turns out it is not necessary, and we show that a most profitable set can be found with polynomially many value queries, without always knowing a most profitable bundle of size $k$. 
	We first show that for each $d\neq k$, a most profitable bundle of size $d$ can indeed be found by value queries to a family $\Fd$ of sets of size $d$ that we can specify.
	Second, we show that if every set that is a most profitable set overall is of size $k$, then such a set can also be found by value queries to a family $\Fk$ of sets of size $k$ that we can specify.
	Finally, we show that the set of all queries (the $t$ value queries as well as value queries to new sets that are in these families of sets) is only polynomially larger than $t$. We next present the formal claim and its proof.
        
    For each $d\in [m]$ we define a family $\Fd$ of sets of size $d$, such that: 
    \begin{itemize}
        \item If $d\neq k$ then some set of size $d$ that has the highest profit among all sets of size $d$ belongs to the family $\Fd$.
        \item If every demanded set is of size $d=k$, then 
        a demanded set of size $k$ belongs to the family $\Fk$.
        \item For even $m\geq 4$ it holds that $\E[v\sim D]{\sum_{d\in[m]}|\Fd\setminus\Qv|} \leq 2m^3$.
    \end{itemize}
        
\cout{

    \begin{claim}
    Fix a deterministic algorithm $A$ that runs on valuations sampled from $\D$ and up to some point has used $t$ value queries, only to bundles of size $k$ and $k+1$. Let $\Qv$ be the list of $t=|\Qv|$ value queries done by the algorithm when running on valuation $v$. Assume the algorithm now uses a demand query for price vector $p$. Then there is a family of sets $\Fp$ that can be found with value queries only, such that: 
          \begin{itemize}
              \item There is a profit maximizing set for price vector $p$ in the family $\Fp$.
              \item Let $W=E_{v\sim D}[|\Fp \setminus \Qv|]$ be the number of additional value queries that are needed to find the value of any  set in  $\Fp$  that is not already in $\Qv$.
            Then $W\leq 2\cdot m^3$. \mbc{note that this is not canonical. Maybe add the $m^2$ and claim canonical?}
          \end{itemize}
    Thus, $t$ value queries for bundles of sizes $k$ and $k+1$ that is followed by a demand query, can be simulated \mbc{need to define what that means} by at most $t+2m^5$ value queries for bundles of sizes $k$ and $k+1$ \mbc{we need to put back "canonical form"}.
    \end{claim}
    \begin{proof}
    For $m\leq 3$, the total number of subsets of $[m]$ is $2^m< 2\cdot m^3$ and the claim trivially holds. We next assume that $m>3$. 
        
    For each $d\in [m]$ we define a family $\Fd$ of sets of size $d$, such that $\Fp\setminus\Qv = \cup_{d\in [m]} \Fd\setminus \Qv$ and additionally:
    \begin{itemize}
        \item If $d\neq k$ then some set of size $d$ that has the highest profit among all sets of size $d$ belongs to the family $\Fd$.
        \item If every demanded set is of size $d=k$, then 
        a demanded set of size $k$ belongs to the family $\Fk$.
    \end{itemize}
        
    \end{proof}

	\begin{claim}
	    Fix a deterministic algorithm $A$ that uses $t$ value queries to bundles of size $k$ and $k+1$ followed by a single demand query for price vector $p$, for valuations sampled from $\D$.
	    Let $\Qv$ be the list of $t=|\Qv|$ value queries done by the algorithm when running on valuation $v$.
        For any size $d\in[m]$, 
        we can find a family $\Fd$ of sets of size $d$ using value queries only,  such that:
        \begin{itemize}
            \item If every demanded set is of size $d$, then set of size $d$ that has the highest profit among all sets of size $d$ belongs to  the family $\Fd$.
            \item Let $W=E_{v\sim D}[|\cup_{d\in[m]}\Fd \setminus \Qv|]$ be the number of additional value queries that are needed to find the value of any  set in  $\cup_{d\in[m]} \Fd$  that is not already in $\Qv$.
            Then $W\leq 2\cdot m^3$.
        \end{itemize}
        \end{claim}	
        For $m\leq 3$, the total number of subsets of $[m]$ is $2^m< 2\cdot m^3$ and the claim trivially holds. We next assume that $m>3$. 
        We denote by $\DQ$ the distribution $\D$ conditional on the query list being $\Qv$.
        
\mbc{the proof will continue from here\\:}        
}
Assume algorithm $A$ is running on valuation $v$ sampled from $\D$, and the algorithm was using the set $\Qv$ of value queries that is a canonical form.  

For each size $d$ we consider the list of size $d$ from cheapest to most expensive (breaking ties arbitrarily). Let $\Sdp$ denote a cheapest set of size $d$. 
    \begin{itemize}
        \item For $d<k-1$ or $d>k+1$, all bundles of size $d$ have the same value, thus a profit maximizing set of size $d$ is simply some cheapest set of size $d$, so we define  $\Fd=\{\Sdp\}$. 
        \item For $d=k+1$, there are two possible values for a bundle of size $k+1$, 
        depending on whether the bundle is in $\Bv$ or not. 
        Let $S_{\Bv}$ be a cheapest bundle in $\Bv$ that is the first in order of set prices.
        A most profitable set of size $d=k+1$ is then either $S_{\Bv}$ or $\Spkp$. 
        We add to $\Fkp$ the cheapest sets of size $d=k+1$ in increasing order of price, till we find the first set that belongs to ${\Bv}$. Note that $\Spkp$ is the first added set and is always in  $\Fkp$. 
        Since $Q_v$ is in a canonical form, by Lemma \ref{lem:B_is_independent},
        it holds that either $S\in \Q_v$ or that the probability that $S\in {\Bvp}$ conditional on $v'$ being sampled according to $\DQv$ is $\frac {1} {m^{2}}$. 
        Hence the expected number of cheapest bundles of size $k+1$ that are not in $\Qv$ till a set in $\Bv$ is found is at most $m^2$, that is, $\E{|\Fkp\setminus\Qv|}\leq m^2$.
        
        \item For $d=k-1$, there are two possible values for a bundle of size $k-1$, 
        depending on whether the bundle is in ${\Rv}$ or not. 
        Let $S_{\Rv}$ be a cheapest bundle in ${\Rv}$ that is the first in order of set prices.
        A most profitable set of size $d=k-1$ is then either $S_{\Rv}$ or $\Spkm$. 
        We add to $\Fkm$ the cheapest sets of size $d=k-1$ in increasing order of price, till we find the first set that belongs to ${\Rv}$. Note that $\Spkm$ is the first added set and is always in  $\Fkm$. 
        
        Consider some set $S$ of size $k-1$, we need bound the probability that $S$ is in ${\Rvp}$ given that $v'$ is sampled according to $\DQv$. 
        For given $v'$, $S\in \Rvp$ if none of its supersets are in $\Bvp$. 
        Hence, by Lemma \ref{lem:B_is_independent}, since $\Qv$ is in a canonical form, 
        for any set $S$ of size $k-1$ we have that either $\Pr_{v'\sim\DQv}[S\in{\Rvp}]=0$ or $\Pr_{v'\sim\DQv}]S\in\Rvp]\geq (1-\frac 1 {m^{2}})^{(m-k)(m-k-1)/2}>(1-\frac 1 {m^{2}})^{\frac{m^2-1}{2}}>e^{-0.5}>0.5$.
        Hence the expected number of cheapest bundles of size $k-1$ that are not in $\Qv$ till a set in $\Rvp$ is found (for $v'\sim\DQv$) is at most $2$, that is, $\E{|\Fkm\setminus\Qv|}\leq 2$.
        
        \item For $d=k$, there are two possible values for a bundle of size $k$, 
        depending on whether the bundle is $\Gv$ or not. 
        A most profitable set of size $d=k$ is either $\Gv$ or $\Spk$. 
        {In Claim \ref{claim:sizek}} we show that when it is not $\Spk$ then it must be either a set of size $k$ that is a subset of a set in $\Fkp$, or a set of size $k$ that is a superset of a set in $\Fkm$.     
        Thus we define $\Fk$ to include $\Spk$, all sets of size $k$ that are a subset of a set in $\Fkp$, and all  sets of size $k$ that are a superset of a set in $\Fkm$.
        Note that the expected size of $\Fk$ satisfies 
        $
        \E[v\sim D]{|\Fk\setminus\Qv|}\leq \E[v\sim D]{|\Fkm\setminus\Qv|\cdot (m-k)} +
        \E[v\sim D]{|\Fkp\setminus\Qv|\cdot (k+1)} +1  $.
        
    \end{itemize}

    In total, 
    \begin{equation*}
    \begin{split}
    \E[v\sim D]{\sum_{d\in[m]}|\Fd\setminus\Qv|} \leq & (m-3) + \E[v\sim D]{|\Fkm\setminus\Qv|\cdot (m-k+1)} +\E[v\sim D]{|\Fkp\setminus\Qv|\cdot (k+2)} +1\\
    \leq & m + 2(m-k+1) + m^2(k+2) = 
    m^3/2 + 2m^2 +2m +  2 \leq 2m^3
    \end{split}
    \end{equation*}
    as $k=\frac{m}{2}$ and $m\geq 4$.
    
    We have that each demand query can implemented using $2m^3$ value queries in expectation.
    Since as we mentioned earlier, moving to a canonical form requires replacing each value query with at most $m^2$ value queries, we can implement the demand query while staying in a canonical form using $2m^5$ value queries in expectation.
    
    We now complete the proof by showing that for $d=k$, the family $\Fd$ contains a most profitable bundle whenever the most profitable bundle is of size $k$.
    \begin{claim}\label{claim:sizek}
        For any price vector $p$, if for valuation $v$ every demanded set is of size $k$, then
        any most profitable set (of size $k$) is either $\Spk$ (cheapest set of size $k$), a set of size $k$ that is a subset of a set in $\Fkp$, or a set of size $k$ that is a superset of a set in $\Fkm$.
    \end{claim}
    \begin{proof} 
    Assume that every demanded set is of size $k$.  Fix any set that most profitable set of size $k$ and denote it by $\Upk$.
        If $\Spk= \Gv$ then it must be that $\Upk=\Gv$, that is, $\Gv$ must be the unique most profitable set of size $k$ (as $\Gv$ has higher value than any other set of size $k$). 
        So we can assume that $\Spk\neq \Gv$. If $\Upk\neq \Gv$ then $\Upk$ must be $\Spk$. We thus assume that $\Upk=\Gv$ (and $\Spk\neq \Gv$), and the value of $\Upk$ is thus $k-6/11$. 
    
        The family $\Fkm$ includes a set $R$ from $\Rv$ with value $k-1$. 
        As $\Upk=\Gv$ is more profitable than $R$, it holds that $v(\Gv)-p(\Gv) = k-6/11 - p(\Gv) > k-1-p(R)$ and thus $p(\Gv)-5/11<p(R)$. 
        If there is an item $i\in \Gv$ of price at least $5/11$ then the set $\Gv\setminus \{i\}$ has price smaller than $p(\Gv\setminus\set{i})\leq p(\Gv)-5/11<p(R)$, and thus the set $\Gv\setminus \{i\}$ is in $\Fkm$ which implies that $\Gv\in \Fk$ as needed.
        
    
        The family $\Fkp$ includes a set $B$ from $\Bv$ with value $k$. 
        As $\Upk=\Gv$ is more profitable than $B$, it holds that $v(\Gv)-p(\Gv) = k-6/11 - p(\Gv) > k-p(B)$ and thus $p(B)-p(\Gv)>6/11$. 
        If there is an item $i\notin \Gv$ of price lower than $6/11$ then the set $\Gv\cup \{i\}$ has price smaller than $p(\Gv)+6/11<p(B)$, and thus the set $\Gv\cup \{i\}$ is in $\Fkp$ which implies that $\Gv\in \Fk$ as needed. 
        
        Otherwise, the price of every item in $\Upk$ is less than $\frac{5}{11}$, and the price of every item not in $\Upk$ is more than $\frac{6}{11}$,  and thus $\Upk=\Spk$ is the unique cheapest bundle of size $k$, a contradiction to $\Gv=\Upk\neq \Spk$.
       \end{proof}

    This completes the proof of Claim  \ref{claim:induction-step}.   
    \end{proof}
    
    
This completes the proof of Lemma \ref{lem:SM_demand2value}.
\end{proof}


We now conclude the proof of Theorem \ref{thm:submodular}. By Lemma~\ref{lem:SM_demand2value}, we have that with probability $1-\frac 1 \alpha$ over $\D$, a deterministic algorithm that makes $t$ demand and value queries can be implemanted using $2m^5\cdot t^{2}\cdot\alpha$ values queries in a canonical form.
Let $t'=2m^5\cdot t^{2}\cdot\alpha$.
By lemma~\ref{lem:SM_value_queries}, implementation that uses a set of $t'$ value queries that is in a canonical form, for $t'<1.9^m$ and large enough $m$, has a probability of at most $\frac{t'}{1.9^m}$ for finding $\Gv$. 
Hence, the original algorithm fails with probability at least $1-\alpha^{-1}-\frac{t'}{1.9^m}$.
Taking $t=1.3^m$, $\alpha=3$, and $m$ large enough, we have that a deterministic algorithm that makes at most $1.3^m$ queries, fails with probability at least $1-\frac{6\cdot  m^5\cdot 1.3^{2m}}{1.9^m}-\frac{1}{3}>\frac{1}{2}$ over $\D$.

}

\end{document}